\newtheorem*{rep@theorem}{\rep@title}
\newcommand{\newreptheorem}[2]{%
\newenvironment{rep#1}[1]{%
 \def\rep@title{#2 \ref{##1}}%
 \begin{rep@theorem}}%
 {\end{rep@theorem}}}
\theoremstyle{definition}
\newtheorem{theorem}{Theorem}[section]
\newtheorem{lemma}{Lemma}[section]
\newtheorem{corollary}{Corollary}[section]
\newtheorem{proposition}{Proposition}[section]%
\newtheorem{fact}{Fact}[section]%
\newtheorem{definition}{Definition}[section]
\newtheorem{rrule}{Reduction Rule}[section]%
\newtheorem{srule}{Switching Rule}[section]%
\newtheorem{brule}{Branching Rule}[section]%
\newcommand{\cref}[1]{\todo[inline]{Error!}}
\newcommand{\brref}[1]{Branching Rule~\ref{#1}}
\newcommand{\brrefm}[1]{Branching Rules~\ref{#1}}
\newcommand{\rrref}[1]{Reduction Rule~\ref{#1}}
\newcommand{\rrrefm}[1]{Reduction Rules~\ref{#1}}
\newcommand{\srref}[1]{Switching Rule~\ref{#1}}
\newcommand{\srrefm}[1]{Switching Rules~\ref{#1}}
\newlength{\capitalheight}
\newcommand{\Oh}{\mathcal{O}}
\renewcommand{\P}{\text{\normalfont P}}
\newcommand{\NP}{\text{\normalfont NP}}
\newcommand{\FPT}{\text{\normalfont FPT}}
\newcommand{\XP}{\text{\normalfont XP}}
\newcommand{\W}[1]{\text{\normalfont W[#1]}}
\newcommand{\co}[1]{\overline{#1}}
\newcommand{\sA}{A_{*}^{\C}}
\newcommand{\sB}{B_{*}^{\C}}
\newcommand{\ssA}{A^{\C}}
\newcommand{\ssB}{B^{\C}}
\newcommand{\pA}{A_{P}^{\C}}
\newcommand{\pB}{B_{P}^{\C}}
\newcommand{\C}{{\cal C}}
\newcommand{\rC}{R^{\C}}
\DeclareMathOperator{\switch}{switch}
\newcommand{\abpartition}{$(\Pi_A,\Pi_B)$-partition}
\def\etal{{\em et al.}}
\newcommand{\smartqed}{}
\title{Parameterized Algorithms for Recognizing Monopolar and 2-Subcolorable Graphs\footnote{A preliminary version of this paper appeared in SWAT 2016, volume 53 of LIPIcs, pages~14:1--14:14. Iyad Kanj and Manuel Sorge gratefully acknowledge the support by Deutsche Forschungsgemeinschaft (DFG), project DAPA, NI 369/12. Christian Komusiewicz gratefully acknowledges the support by Deutsche Forschungsgemeinschaft (DFG), project MAGZ, KO 3669/4-1.}}
\author[1]{Iyad Kanj}
\author[2]{Christian Komusiewicz}
\author[3]{Manuel Sorge}
\author[4]{\mbox{Erik Jan van Leeuwen}}
\affil[1]{School of Computing, DePaul University, Chicago, USA, \texttt{ikanj@cs.depaul.edu}}
\affil[2]{Institut f{\"u}r Informatik, Friedrich-Schiller-Universit{\"a}t Jena, Germany, \texttt{christian.komusiewicz@uni-jena.de}}
\affil[3]{Institut f\"{u}r Softwaretechnik und Theoretische Informatik, TU Berlin, Germany, \texttt{manuel.sorge@tu-berlin.de}}
\affil[4]{Max-Planck-Institut f\"ur Informatik, Saarland Informatics Campus, Saarbr\"ucken, Germany, \texttt{erikjan@mpi-inf.mpg.de}}
\newcommand{\thmtxtmainmono}{In $\Oh(2^{k} \cdot k^3 \cdot (n+m))$ time, we can decide whether $G$ admits a monopolar partition $(A,B)$ such that $G[A]$ is a cluster graph with at most $k$~clusters.}
\newcommand{\thmtxtexclusiverec}{If~$\Pi_{A}$ and~$\Pi_B$ are hereditary and mutually~$d$-exclusive, and
  membership of~$\Pi_A$ and~$\Pi_B$ can be decided in polynomial time, then
  \textsc{$(\Pi_{A},\Pi_{B})$-Recognition} can be solved in~$n^{2d+\Oh(1)}$ time.}
\newcommand{\thmtxtmainsub}{In $\Oh(k^{2k+1}\cdot nm)$ time, we can decide whether $G$ admits a 2-subcoloring $(A,B)$ such that $G[A]$ is a cluster graph with at most $k$~clusters.}
\newcommand{\thmtxtmainsubtwo}{In $\Oh(4^{k}\cdot k^2 \cdot n^2)$ time, we can decide whether $G$ admits a 2\nobreakdash-subcoloring $(A,B)$ such that $G[A]$ and $G[B]$ are cluster graphs with at most $k$ clusters in total. }
\newcommand{\prptxtmainsize}{Let~$\Pi_A$ and~$\Pi_B$ be two hereditary graph properties such that membership of $\Pi_{A}$ can be decided in polynomial time and~$\Pi_B$ can be characterized by a finite set of forbidden induced subgraphs. Then we can decide in~$2^{\Oh(k)}n^{\Oh(1)}$ time whether~$V$ can be partitioned into sets $A$ and~$B$ such that~$G[A]\in \Pi_A$,~$G[B]\in \Pi_B$, and~$|A|\le k$.}
\begin{document}

\maketitle

\begin{abstract}
  \noindent A graph $G$ is a $(\Pi_A,\Pi_B)$-graph if $V(G)$ can be bipartitioned into $A$ and $B$
  such that $G[A]$ satisfies property $\Pi_A$ and $G[B]$ satisfies property $\Pi_B$. The
  \textsc{$(\Pi_{A},\Pi_{B})$-Recognition} problem is to recognize whether a given graph is a
  $(\Pi_A,\Pi_B)$-graph. There are many \textsc{$(\Pi_{A},\Pi_{B})$-Recognition} problems,
  including the recognition problems for bipartite, split, and unipolar graphs. We present
  efficient algorithms for many cases of \textsc{$(\Pi_A,\Pi_B)$-Recognition} based on a technique
  which we dub inductive recognition. In particular, we give fixed-parameter algorithms
  for two NP-hard \textsc{$(\Pi_{A},\Pi_{B})$-Recognition} problems, \textsc{Monopolar Recognition} and
  \textsc{2-Subcoloring}, parameterized by the number of maximal cliques in $G[A]$. We complement our algorithmic results with several hardness results for
  \textsc{$(\Pi_{A},\Pi_{B})$-Recognition}.
\end{abstract}

\section{Introduction} \label{sec:intro}

A \emph{$(\Pi_A,\Pi_B)$-graph}, for graph properties $\Pi_{A},\Pi_{B}$, is a graph $G=(V,E)$ for which $V$ admits a partition into two sets $A, B$ such that $G[A]$ satisfies $\Pi_A$ and $G[B]$ satisfies $\Pi_B$. There is an abundance of $(\Pi_{A},\Pi_{B})$-graph classes, and important ones include \emph{bipartite graphs} (which admit a partition into two independent sets), \emph{split graphs} (which admit a bipartition into a clique and an independent set), and \emph{unipolar graphs} (which admit a bipartition into a clique and a cluster graph). Here a \emph{cluster graph} is a disjoint union of cliques. An example for each of these classes is shown in Figure~\ref{fig:abgraphs}.
\begin{figure}[t]
  \centering
  \begin{tikzpicture}%

    \tikzstyle{avertex} = [color=black,fill=black,circle,inner sep=0pt,minimum size=8pt]
    \tikzstyle{bvertex} = [color=black,draw,circle,semithick,inner sep=0pt,minimum size=8pt]

    \begin{scope}[shift={(0,0)}]
        \node[avertex] (a1) at (1,0) {};
        \node[avertex] (a2) at (2,0) {};
        \node[avertex] (a3) at (3,0) {};
        \node[bvertex] (b1) at (1,1) {};
        \node[bvertex] (b2) at (2,1) {};
        \node[bvertex] (b3) at (3,1) {};
        \draw[semithick] (a1)--(b1);
        \draw[semithick] (a1)--(b2);
        \draw[semithick] (a3)--(b3);
        \draw[semithick] (a2)--(b2);
        \draw[semithick] (a3)--(b2);
  \end{scope}

  \begin{scope}[shift={(5,0)}]
        \node[avertex] (a1) at (1.2,0) {};
        \node[avertex] (a2) at (2,-1) {};
        \node[avertex] (a3) at (2.8,0) {};
        \node[bvertex] (b1) at (1,1) {};
        \node[bvertex] (b2) at (2,1) {};
        \node[bvertex] (b3) at (3,1) {};
        \draw[semithick] (a1)--(a2)--(a3)--(a1);
        \draw[semithick] (a1)--(b1);
        \draw[semithick] (a1)--(b2);
        \draw[semithick] (a3)--(b3);
        \draw[semithick] (a2)--(b2);
        \draw[semithick] (a3)--(b2);
  \end{scope}

  \begin{scope}[shift={(10,0)}]
        \node[avertex] (a1) at (1.2,0) {};
        \node[avertex] (a2) at (2,-1) {};
        \node[avertex] (a3) at (2.8,0) {};
        \node[bvertex] (b11) at (0.5,1.8) {};
        \node[bvertex] (b12) at (1,1) {};
        \node[bvertex] (b13) at (0,1) {};
        \node[bvertex] (b21) at (2,1) {};
        \node[bvertex] (b22) at (2,1.8) {};
        \node[bvertex] (b3) at (3,1) {};
        \draw[semithick] (a1)--(a2)--(a3)--(a1);
        \draw[semithick] (a1)--(b12);
        \draw[semithick] (a1)--(b13);
        \draw[semithick] (a3)--(b12);
        \draw[semithick] (a1)--(b21);
        \draw[semithick] (a3)--(b3);
        \draw[semithick] (a2)--(b21);
        \draw[semithick] (a3)--(b21);
        \draw[semithick] (b11)--(b12)--(b13)--(b11);
        \draw[semithick] (b22)--(b21);
  \end{scope}

  \end{tikzpicture}
  \caption{Three examples of $(\Pi_{A},\Pi_{B})$-graphs, where the coloring gives a~\abpartition{}. The vertices of~$A$ are black and the vertices of~$B$ are white. Left: in bipartite graphs,~$A$ and~$B$ are independent sets. Center: in split graphs,~$A$ is a clique and~$B$ is an independent set. Right: in unipolar graphs,~$A$ is a clique and~$B$ induces a cluster graph.}
\label{fig:abgraphs}
\end{figure}
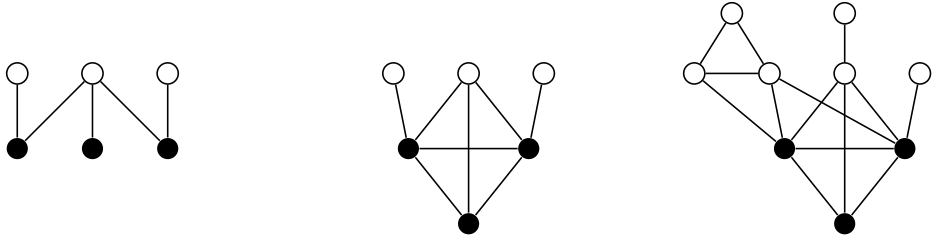

The problem of recognizing whether a given graph belongs to a particular class of $(\Pi_{A},\Pi_{B})$-graphs is called \textsc{$(\Pi_{A},\Pi_{B})$-Recognition}, and is known as a \emph{vertex-partition} problem. Recognition problems for $(\Pi_A,\Pi_B)$-graphs are often NP-hard~\cite{Ach97,Far04,KS97}, but bipartite, split, and unipolar graphs can all be recognized in polynomial time~\cite{papa,HS81,MY15,unipolar,unipolar0}. With the aim of generalizing these polynomial-time algorithms, we study the complexity of recognizing certain classes of $(\Pi_A,\Pi_B)$-graphs, focusing on two particular classes that generalize split and unipolar graphs, respectively.
To achieve our goals, we formalize a technique, which we dub inductive recognition, that can be viewed as an adaptation of iterative compression to recognition problems. We believe that the formalization of this technique will be useful in general for designing algorithms for recognition problems.

\paragraph*{Inductive Recognition}
The \emph{inductive recognition} technique, described formally in Section~\ref{sec:inducregoc}, can be applied to solve the \textsc{$(\Pi_{A},\Pi_{B})$-Recognition} problem for certain hereditary $(\Pi_A,\Pi_B)$-graph classes. Intuitively, the technique works as follows. Suppose that we are given a graph $G=(V,E)$ and we have to decide its membership of the $(\Pi_A,\Pi_B)$-graph class.
We proceed in iterations and fix an arbitrary ordering of the vertices; in the following, let~$n:=|V|$ and~$m:=|E|$. We start with the empty graph $G_0$, which trivially belongs to the hereditary $(\Pi_A, \Pi_B)$-graph class. In iteration $i$, for $i =1, \ldots, n$, we recognize whether the subgraph $G_i$ of $G$ induced by the first $i$~vertices of $V$ still belongs to the graph class, assuming that $G_{i-1}$ belongs to the graph class.

Inductive recognition is essentially a variant of the iterative compression technique~\cite{RSV04}, tailored to recognition
problems.  The crucial difference, however, is that in iterative compression we can always add the $i$th vertex $v_i$ to the solution from the previous iteration to obtain a new solution (which we compress if it is too large). However, in the recognition problems under consideration, we cannot simply add~$v_i$ to one part of a bipartition $(A, B)$ of $G_{i-1}$, where $G_{i-1}$ is member of the graph class, and witness that $G_i$ is still a member of the graph class: Adding $v_i$ to $A$ may violate property $\Pi_{A}$ {\em and} adding $v_i$ to $B$ may violate property $\Pi_{B}$. An example for split graph recognition is presented in Figure~\ref{fig:ind-step}. Here, we cannot add~$v_i$ to $A$ or $B$ to obtain a valid bipartition for~$G_i$, even if $G_{i-1}$ is a split graph with clique~$A$ and independent set~$B$. Therefore, we cannot perform a `compression step' as in iterative compression. Instead, we must attempt to add $v_i$ to each of $A$ and $B$, and then attempt to `repair' the resulting partition in each of the two cases, by rearranging vertices, into a solution for $G_i$ (if a solution exists). This idea is formalized in the inductive recognition framework in Section~\ref{sec:inducregoc}.
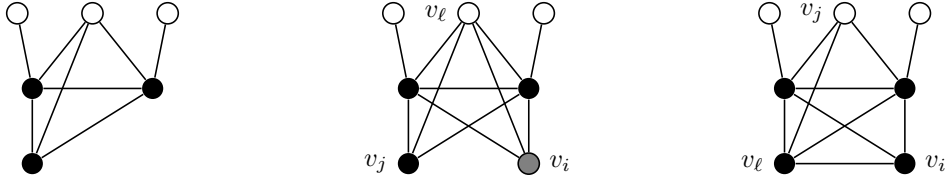
\begin{figure}[t]
  \centering
  \begin{tikzpicture}%

    \tikzstyle{avertex} = [color=black,fill=black,circle,inner sep=0pt,minimum size=8pt]
    \tikzstyle{bvertex} = [color=black,draw,circle,semithick,inner sep=0pt,minimum size=8pt]
    \tikzstyle{nvertex} = [color=black,fill=gray,draw,circle,semithick,inner sep=0pt,minimum size=8pt]

    \begin{scope}[shift={(0,0)}]
        \node[avertex] (a1) at (1.2,0) {};
        \node[avertex] (a2) at (1.2,-1) {};
        \node[avertex] (a3) at (2.8,0) {};
        \node[bvertex] (b1) at (1,1) {};
        \node[bvertex] (b2) at (2,1) {};
        \node[bvertex] (b3) at (3,1) {};
        \draw[semithick] (a1)--(a2)--(a3)--(a1);
        \draw[semithick] (a1)--(b1);
        \draw[semithick] (a1)--(b2);
        \draw[semithick] (a3)--(b3);
        \draw[semithick] (a2)--(b2);
        \draw[semithick] (a3)--(b2);
  \end{scope}

    \begin{scope}[shift={(5,0)}]
        \node[avertex] (a1) at (1.2,0) {};
        \node[avertex, label =left:$v_j$] (a2) at (1.2,-1) {};
        \node[avertex] (a3) at (2.8,0) {};
        \node[bvertex] (b1) at (1,1) {};
        \node[bvertex,label =left:$v_\ell$] (b2) at (2,1) {};
        \node[bvertex] (b3) at (3,1) {};
        \node[nvertex, label =right:$v_i$] (v) at (2.8,-1) {};
        \draw[semithick] (a1)--(a2)--(a3)--(a1);
        \draw[semithick] (a1)--(b1);
        \draw[semithick] (a1)--(b2);
        \draw[semithick] (a3)--(b3);
        \draw[semithick] (a2)--(b2);
        \draw[semithick] (a3)--(b2);
        \draw[semithick] (a1)--(v)--(b2);
        \draw[semithick] (a3)--(v);
  \end{scope}

    \begin{scope}[shift={(10,0)}]
        \node[avertex] (a1) at (1.2,0) {};
        \node[bvertex, label =left:$v_j$] (a2) at (2,1) {};
        \node[avertex] (a3) at (2.8,0) {};
        \node[bvertex] (b1) at (1,1) {};
        \node[avertex,label =left:$v_\ell$] (b2) at (1.2,-1) {};
        \node[bvertex] (b3) at (3,1) {};
        \node[avertex, label =right:$v_i$] (v) at (2.8,-1) {};
        \draw[semithick] (a1)--(a2)--(a3)--(a1);
        \draw[semithick] (a1)--(b1);
        \draw[semithick] (a1)--(b2);
        \draw[semithick] (a3)--(b3);
        \draw[semithick] (a2)--(b2);
        \draw[semithick] (a3)--(b2);
        \draw[semithick] (a1)--(v)--(b2);
        \draw[semithick] (a3)--(v);
  \end{scope}

  \end{tikzpicture}
  \caption{An example of the inductive step in inductive recognition. Left: a split graph $G_{i-1}$ with a given partition into a clique~$A$ and an independent set~$B$. Center: the~partition for~$G_{i-1}$ cannot be directly extended to a partition for~$G_i$ since the vertex~$v_i$ has a nonneighbor~$v_j$ in~$A$ and a neighbor~$v_\ell$ in~$B$. Right: after deciding to put~$v_i$ in the clique~$A$, we can repair the partition by moving~$v_j$ to the independent set~$B$ and~$v_\ell$ to the clique~$A$.}
\label{fig:ind-step}
\end{figure}

\paragraph*{Monopolar Graphs and Mutually Exclusive Graph Properties}
The first \textsc{$(\Pi_{A},\Pi_{B})$-Recognition} problem that we consider is the problem of recognizing monopolar graphs, which are a superset of
split graphs. A \emph{monopolar graph} is a graph whose vertex set
admits a bipartition into a cluster graph and an independent set; an example is shown in Figure~\ref{fig:mono-2sub}. Monopolar
graphs have applications in the analysis of protein-interaction networks~\cite{BHK14}. The
recognition problem of monopolar graphs can be formulated as follows:
\begin{quote}
  \textsc{Monopolar Recognition}\\
  \textbf{Input:} A graph $G=(V,E)$.\\
  \textbf{Question:} Does~$G$ have a \emph{monopolar partition}~$(A,B)$, that is, can $V$ be partitioned into sets $A$ and~$B$ such that $G[A]$~is a cluster graph and~$G[B]$ is an edgeless graph?
\end{quote}
\begin{figure}[t]
  \centering
  \begin{tikzpicture}%

    \tikzstyle{avertex} = [color=black,fill=black,circle,inner sep=0pt,minimum size=8pt]
    \tikzstyle{bvertex} = [color=black,draw,circle,semithick,inner sep=0pt,minimum size=8pt]

  \begin{scope}[shift={(0,0)}]
        \node[avertex] (a1) at (1.2,0) {};
        \node[avertex] (a2) at (2,-1) {};
        \node[avertex] (a3) at (2.8,0) {};
        \node[avertex] (a4) at (4.2,0) {};
        \node[avertex] (a5) at (5,-1) {};
        \node[avertex] (a6) at (5.8,0) {};

        \node[bvertex] (b1) at (0.8,1) {};
        \node[bvertex] (b2) at (2,1) {};
        \node[bvertex] (b3) at (3.45,1) {};
        \node[bvertex] (b4) at (5,1) {};
        \draw[semithick] (a1)--(a2)--(a3)--(a1);
        \draw[semithick] (a4)--(a5)--(a6)--(a4);
        \draw[semithick] (a1)--(b1);
        \draw[semithick] (a3)--(b3)--(a4);
        \draw[semithick] (a2)--(b2)--(a1);
        \draw[semithick] (a4)--(b4)--(a6);
  \end{scope}

  \begin{scope}[shift={(7,0)}]
        \node[avertex] (a1) at (1.2,0) {};
        \node[avertex] (a2) at (2,-1) {};
        \node[avertex] (a3) at (2.8,0) {};
        \node[avertex] (a4) at (4.2,0) {};
        \node[avertex] (a5) at (5,-1) {};
        \node[avertex] (a6) at (5.8,0) {};

        \node[bvertex] (b1) at (0.8,1) {};
        \node[bvertex] (b2) at (2,1) {};
        \node[bvertex] (b3) at (3.45,1) {};
        \node[bvertex] (b5) at (4.2,1.8) {};
        \node[bvertex] (b4) at (5,1) {};
        \draw[semithick] (a1)--(a2)--(a3)--(a1);
        \draw[semithick] (a4)--(a5)--(a6)--(a4);
        \draw[semithick] (a1)--(b1);
        \draw[semithick] (a3)--(b3)--(a4);
        \draw[semithick] (a2)--(b2)--(a1);
        \draw[semithick] (a4)--(b4)--(a6);
        \draw[semithick] (b1)--(b2);

        \draw[semithick] (b3)--(b4)--(b5)--(b3);
  \end{scope}

  \end{tikzpicture}
  \caption{A monopolar and a 2-subcolorable graph with colorings that give a~\abpartition. The vertices of~$A$ are black and the vertices of~$B$ are white. Left: a monopolar graph where~$A$ induces a cluster graph and~$B$ is an independent set. Right: a 2-subcolorable graph where~$A$ and~$B$ induce cluster graphs.}
\label{fig:mono-2sub}
\end{figure}
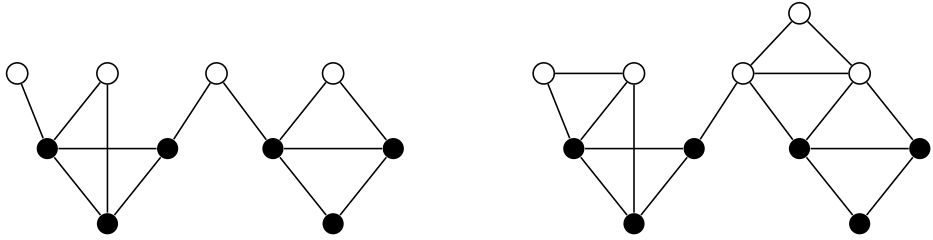
In contrast to the recognition problem of split graphs, which admits a linear-time algorithm~\cite{HS81},
\textsc{Monopolar Recognition} is NP-hard. This motivates a parameterized complexity analysis of \textsc{Monopolar Recognition}.
We consider the parameterized version of \textsc{Monopolar Recognition}, where the parameter $k$ is an upper bound on the number of clusters in $G[A]$, and use inductive recognition to show the following:

\begin{theorem}
  \label{thm:main:mono}
  \thmtxtmainmono
\end{theorem}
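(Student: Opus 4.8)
The plan is to instantiate the inductive recognition framework of Section~\ref{sec:inducregoc}: fix an arbitrary vertex ordering $v_1,\dots,v_n$, let $G_i := G[\{v_1,\dots,v_i\}]$, and process the graphs $G_0 \subseteq G_1 \subseteq \dots \subseteq G_n$ in turn. We maintain the invariant that after iteration $i$ we hold a monopolar partition $(A,B)$ of $G_i$ with at most $k$ clusters, starting from the trivial partition of the empty graph $G_0$. All the work therefore reduces to an \emph{extension subroutine}: given a monopolar partition $(A,B)$ of $G_{i-1}$ with clusters $C_1,\dots,C_t$ and $t \le k$, together with the new vertex $v_i$, either produce such a partition of $G_i$ or certify that $G_i$ has no monopolar partition with at most $k$ clusters.

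The heart of the argument is a structural observation that bounds how far the new partition $(A',B')$ must deviate from $(A,B)$. Write $Y := B \cap A'$ for the set of vertices that migrate from the independent side to the cluster side. Since $B$ is independent, $Y$ is independent; since every cluster of $A'$ is a clique, no two non-adjacent vertices can lie in the same cluster, and hence each of the at most $k$ clusters of $A'$ absorbs at most one vertex of $Y$. Consequently $|Y| \le k$. This is the lever that makes the parameter $k$ usable: the incoming traffic from $B$ to the cluster side is tiny, and it distributes injectively over the clusters.

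Given this, the extension subroutine branches first on the side of $v_i$ (into $A'$ or into $B'$) and then guesses, for each of the at most $k$ old clusters, a single bit recording whether that cluster is \emph{modified} during the repair, i.e., whether it receives an incoming vertex of $Y \cup \{v_i\}$ or has to shed some of its members. This yields $\Oh(2^k)$ guesses. For a fixed guess, the placement of $v_i$ forces the vertices that must leave $B'$ (namely the $B$-neighbors of $v_i$ when $v_i \in B'$), and a bipartite-matching step assigns these forced vertices and $v_i$ to the clusters marked as modified, where a vertex may be assigned to a cluster only if it is adjacent to the surviving core of that cluster; every old cluster member that is non-adjacent to the new core of its cluster is then pushed to $B'$. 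Finally we verify in time polynomial in $k$ and linear in the relevant part of $G$ that $A'$ is a cluster graph with at most $k$ clusters and that $B'$ is independent, rejecting the guess otherwise.

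The main obstacle is the correctness of this limited search, namely proving that whenever $G_i$ does admit a monopolar partition with at most $k$ clusters, some choice of $v_i$'s side, of the subset of modified clusters, and of the matching produces a valid partition \emph{without any further cascade} of vertices being pushed back and forth between the two sides. The proof of this structural lemma is where the real work lies: starting from an arbitrary optimal partition of $G_i$, I would massage it into one that differs from $(A,B)$ only through the bounded set $Y$, the forced departures from the touched clusters, and $v_i$, exploiting the $|Y| \le k$ bound together with the exchange argument illustrated in Figure~\ref{fig:ind-step} to reassign vertices so that untouched clusters are preserved verbatim. Once the lemma is in place, the running time follows by multiplying the $\Oh(2^k)$ guesses by the polynomial cost of the matching and verification (accounting for the $k^3$ factor), and summing the resulting per-iteration cost over the $n$ iterations, where charging the local work to vertices and edges yields the overall $\Oh(2^k \cdot k^3 \cdot (n+m))$ bound.
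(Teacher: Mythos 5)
Your overall scaffolding---inductive recognition over $G_1,\dots,G_n$ and the observation that $B$ is independent so at most one vertex of $B$ can enter each cluster of $A'$, giving $|B\cap A'|\le k$---matches the paper, which records the latter as mutual $(k+1)$-exclusivity and exploits it in Lemma~\ref{lem:few-up}. The gap is in the extension subroutine. You reduce everything to a structural lemma asserting that some choice of ($v_i$'s side, a one-bit-per-cluster ``modified'' mask, a matching of incoming vertices to modified clusters) yields a valid partition ``without any further cascade'', and you explicitly defer its proof. That lemma is the entire difficulty, and as stated it does not hold: the repair is genuinely iterative. If $v_i$ is placed in $B'$, its $B$-neighbors are forced into $A'$; when such a vertex $y$ joins a cluster $C_j$, every member of $C_j$ not adjacent to $y$ must be evicted, at most one of the evictees can be placed in $B'$ (they are pairwise adjacent), and any evictee that does land in $B'$ in turn forces \emph{its own} $B$-neighbors---which need not be neighbors of $v_i$---into $A'$, starting the next round. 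Your matching step only accounts for the first wave of forced vertices. Moreover, a single bit per cluster does not determine \emph{which} members of a touched cluster are evicted: when $y\in\pA$ and $u,w\in C_j$ induce a $P_3$, either $u$ or $w$ may be the one to leave, and the two choices lead to different downstream cascades. The paper resolves exactly this with \brref{br:1-per-p3} (a two-way branch per such $P_3$) and the singleton-cluster branch \brref{br:singleton-cluster}, and the substance of its proof is showing that along any branch only $\Oh(k)$ such steps occur before the constraint is rejected or closed, and that a closed constraint needs no further repair (Lemmas~\ref{lem:few-up}, \ref{lem:few-down}, and~\ref{lem:reduced-instance}). Without analogues of those termination and closure arguments your correctness claim is unsupported, and the $\Oh(2^k\cdot k^3\cdot(n+m))$ bound---which in the paper also relies on the degree-ordering device of Corollary~\ref{cor:in-rec} to make the per-step cost $\Oh(2^k\cdot k^3\cdot\Delta)$---does not follow.
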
%
\noindent The above result is a generalization of the linear-time algorithm for recognizing split graphs~\cite{HS81},
which we can obtain by setting~$k=1$.

We observe that inductive recognition, and the ideas used to obtain the result in Theorem~\ref{thm:main:mono}, can be generalized to \textsc{$(\Pi_{A},\Pi_{B})$-Recognition} problems in which properties
$\Pi_{A}$ and $\Pi_{B}$ satisfy certain conditions of \emph{mutual exclusivity}. To understand this notion of mutual exclusivity, consider the situation for split graphs.
Intuitively, recognizing split graphs is easy for the following reason: If we consider any bipartition of a split graph~$G$ into
a clique~$K$ and an independent set~$I$, then this bipartition differs only very slightly
from any other such bipartition of~$G$. This is because at most one vertex of~$K$ may be
part of any independent set of~$G$ and similarly, at most one vertex of~$I$ may be part of
any clique in~$G$. Thus, the two graph properties exclude each other to a large extent.
In the situation of monopolar graphs, the two
properties~$\Pi_A$ and~$\Pi_B$ defining monopolar graphs, are not mutually
exclusive:~$G[A]$ and~$G[B]$ could be an edgeless graph of arbitrary number of vertices. However, in the parameterized \textsc{Monopolar Recognition} problem, we restrict $G[A]$ to contain at most $k$ clusters. This
restriction makes the properties mutually exclusive, as~$G[A]$ may not contain an edgeless
graph on~$k + 1$ vertices anymore. Hence, any two monopolar partitions with $k$ clusters in $G[A]$ again differ only slightly if the parameter $k$ is small. We generalize this observation in the following definition:

\begin{definition}\label{def:mutually-exclusive}
  Two graph properties~$\Pi_A$ and~$\Pi_B$ are called \emph{mutually
    $d$-exclusive} if there is no graph with at least~$d$ vertices that
  fulfills~$\Pi_A$ and~$\Pi_B$.
\end{definition}

For a pair $(\Pi_{A},\Pi_{B})$ of mutually $d$-exclusive graph properties, we use inductive recognition to obtain the following result:
\begin{theorem}\label{thm:exclusive-rec}
  \thmtxtexclusiverec
\end{theorem}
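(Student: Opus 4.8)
The plan is to use the inductive recognition framework described in Section~\ref{sec:inducregoc}, processing the vertices~$v_1,\dots,v_n$ in an arbitrary but fixed order and maintaining, after iteration~$i$, a valid \abpartition{} $(A_{i},B_{i})$ of~$G_i$ (if one exists). The key structural fact to exploit is mutual $d$-exclusivity: since no graph on~$d$ or more vertices satisfies both~$\Pi_A$ and~$\Pi_B$, any graph~$G$ in the class has the property that $|A\cap A'|<d$ and $|B\cap B'|<d$ whenever $(A,B)$ and $(A',B')$ are both valid partitions. Indeed, $A\cap A'$ induces a graph satisfying~$\Pi_A$ (as a subset of~$A$) and also satisfying~$\Pi_B$ (as a subset of~$B'$), using that both properties are hereditary; hence $|A\cap A'|\le d-1$, and symmetrically for $|B\cap B'|$. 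This means that in the inductive step, when we pass from a solution $(A_{i-1},B_{i-1})$ for~$G_{i-1}$ to a solution for~$G_i$, the new solution can differ from the old only by moving fewer than~$d$ vertices out of~$A_{i-1}$ and fewer than~$d$ vertices out of~$B_{i-1}$.

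First I would make this precise: given the solution $(A_{i-1},B_{i-1})$ for~$G_{i-1}$, I claim that if~$G_i$ admits a valid partition $(A,B)$ at all, then it admits one whose restriction to~$V(G_{i-1})$ agrees with $(A_{i-1},B_{i-1})$ outside a set~$X$ of at most~$2(d-1)$ vertices. The reason is that $A\cap V(G_{i-1})$ and $A_{i-1}$ are both ``$\Pi_A$-sides'' of valid partitions of~$G_{i-1}$, so they can overlap with the opposite parts in at most $d-1$ vertices each by the exclusivity argument above. Therefore the set of vertices that change sides between $(A_{i-1},B_{i-1})$ and $(A\cap V(G_{i-1}),B\cap V(G_{i-1}))$ has size at most $2(d-1)=\Oh(d)$. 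This bounds the ``repair'' we must perform.

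The algorithm for the inductive step then enumerates which vertices change sides. Concretely, I would guess the set~$S_A\subseteq B_{i-1}$ of vertices that move from~$B$ to~$A$ and the set~$S_B\subseteq A_{i-1}$ of vertices that move from~$A$ to~$B$, together with the side of the newly added vertex~$v_i$. Since $|S_A|,|S_B|\le d-1$ and the candidate vertices must be neighbors of the ``conflict'' introduced by~$v_i$, the number of relevant candidates is polynomially bounded and the total number of guesses is $n^{\Oh(d)}$; more carefully, one guesses two sets of size at most~$d$ from among~$n$ vertices, giving $\binom{n}{d}^2=n^{2d+\Oh(1)}$ choices. For each guess we form the candidate partition $(A',B')$ and verify in polynomial time, using the polynomial-time membership tests for~$\Pi_A$ and~$\Pi_B$, that $G_i[A']\in\Pi_A$ and $G_i[B']\in\Pi_B$; if some guess verifies, we record the corresponding solution and proceed to iteration~$i+1$, otherwise we report that~$G$ is not a $(\Pi_A,\Pi_B)$-graph. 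Running~$n$ iterations, each costing~$n^{2d+\Oh(1)}$, yields the claimed $n^{2d+\Oh(1)}$ total bound.

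The main obstacle, and the step I would be most careful about, is establishing the ``small repair'' claim rigorously, that is, proving that whenever $G_i$ is a $(\Pi_A,\Pi_B)$-graph there is a valid partition of~$G_i$ differing from $(A_{i-1},B_{i-1})$ on only~$\Oh(d)$ vertices of~$G_{i-1}$. The delicate point is that one cannot compare an arbitrary target partition of~$G_i$ directly with $(A_{i-1},B_{i-1})$; rather one must restrict the target to~$G_{i-1}$, invoke the exclusivity bound on~$G_{i-1}$ to control the symmetric difference there, and then argue that adding back~$v_i$ on the correct side extends this to a valid partition of~$G_i$ whose total deviation from $(A_{i-1},B_{i-1})$ is still $\Oh(d)$. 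Hereditarity of both properties is used crucially here, both to apply the exclusivity bound to intersections and to guarantee that the guessed candidate partition is verifiable by restriction. Once this structural lemma is in hand, the enumeration and verification are routine and the running-time analysis is a direct count.
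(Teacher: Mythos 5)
Your proposal is correct and follows essentially the same route as the paper: inductive recognition combined with, in each inductive step, enumerating the at most $d-1$ vertices that switch out of each side of the previous partition (a bound obtained exactly as you do, from hereditarity plus mutual $d$-exclusivity applied to the intersection of one partition's $\Pi_A$-side with the other's $\Pi_B$-side) and verifying each of the $n^{2d+\mathcal{O}(1)}$ candidate partitions with the polynomial-time membership tests. The only blemish is the notational slip where you write $|A\cap A'|<d$ and $|B\cap B'|<d$: the sets whose sizes are actually bounded are $A\cap B'$ and $B\cap A'$, as your own justification (``a subset of $A$'' and ``a subset of $B'$'') and your subsequent, correct conclusion about the number of vertices that change sides make clear.
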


Although Theorem~\ref{thm:exclusive-rec} is quite general, there are many natural cases of
\textsc{$(\Pi_{A},\Pi_{B})$-Recognition} to which it does not apply. Moreover, the degree of
the polynomial in the running time depends on~$d$, that is, the corresponding algorithm is an
XP algorithm for the parameter~$d$. Thus, while Theorem~\ref{thm:exclusive-rec} can be applied to solve the parameterized \textsc{Monopolar Recognition} problem, the running time obtained is not practical, even for moderate values of~$k$. In contrast, the much more efficient algorithm alluded to in Theorem~\ref{thm:main:mono} is tailored for parameterized \textsc{Monopolar Recognition}. An improvement from the XP algorithm implied by Theorem~\ref{thm:exclusive-rec} to an FPT algorithm for \emph{all} pairs of mutually exclusive properties is unlikely, as we show in Section~\ref{sec:XP}. Nevertheless, we show in Section~\ref{sec:gen} that the FPT algorithm for \textsc{Monopolar Recognition} can be adapted to work for certain pairs of mutually exclusive properties $\Pi_{A}$ and $\Pi_{B}$.

\paragraph*{2-Subcolorings}

Next, we study \emph{2-subcolorable graphs}; these are graphs for which the vertex set admits a bipartition into two cluster graphs~\cite{BFNW02}, and thus are a superset of unipolar graphs; an example is shown in Figure~\ref{fig:mono-2sub}. The recognition problem of 2-subcolorable graphs can be formulated as follows:
\begin{quote}
  \textsc{2-Subcoloring}\\
  \textbf{Input:} A graph $G=(V,E)$.\\
  \textbf{Question:} Does $G$ have a \emph{2-subcoloring}~$(A,B)$, that is, can~$V$ be partitioned into sets $A$ and $B$ such that each of $G[A]$ and $G[B]$ is a cluster graph?
\end{quote}

In contrast to the recognition problem of unipolar graphs, which admits a polynomial-time algorithm~\cite{unipolar0,unipolar,MY15}, \textsc{2-Subcoloring} is NP-hard~\cite{Ach97}. We consider \textsc{2-Subcoloring} parameterized by the number of clusters in $G[A]$, and use inductive recognition to show the following:

\begin{theorem}\label{thm:main:sub}
  \thmtxtmainsub
\end{theorem}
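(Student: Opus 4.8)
The plan is to apply the inductive recognition framework to \textsc{2-Subcoloring} parameterized by~$k$, the number of clusters allowed in~$G[A]$. We fix an arbitrary vertex ordering $v_1,\dots,v_n$ and let $G_i = G[\{v_1,\dots,v_i\}]$. Inductively, we maintain a valid 2-subcoloring $(A_{i-1}, B_{i-1})$ of~$G_{i-1}$ in which $G[A_{i-1}]$ is a cluster graph with at most~$k$ clusters and $G[B_{i-1}]$ is a cluster graph. In iteration~$i$ we must decide whether $G_i$ admits such a 2-subcoloring, and if so produce one. The key difficulty, as the introduction stresses, is that we cannot simply place~$v_i$ into~$A$ or~$B$: adding it to either side may create a forbidden induced $P_3$ (the obstruction for cluster graphs). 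So we must repair the partition. First I would observe that any solution for~$G_i$, when restricted to~$\{v_1,\dots,v_{i-1}\}$, is itself a solution for~$G_{i-1}$; the hope is that the solution $(A_{i-1},B_{i-1})$ we already have differs from some solution for~$G_i$ only in a bounded, structured way that we can search over.

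The main structural idea I would use is that~$A$ has at most~$k$ clusters. The crucial step is to guess, for the target solution of~$G_i$, which clusters of~$A_{i}$ the vertices interact with. Concretely, I would branch on how~$v_i$ is accommodated: either $v_i \in B_i$, or $v_i\in A_i$ joining one of at most~$k$ clusters (or forming a new cluster, raising the count, which we must keep at most~$k$). The heart of the repair is handling the vertices that the new choice puts in conflict. I expect the clean way to do this is to guess, for each of the at most~$k$ clusters of~$A_i$, a small amount of information identifying that cluster, and then to determine the rest of the partition deterministically. Since a cluster in a cluster graph is a clique, each cluster of~$A_i$ is determined once we know one of its vertices together with its neighborhood structure; the $k$-bound then lets us enumerate over the clusters efficiently.

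The technical engine I would build is a subroutine that, given a guess of the~$k$ clusters of~$A_i$ (for instance specified by one representative vertex per cluster, or by which cluster each conflict vertex belongs to), checks feasibility and completes the partition. Once the clusters of~$A_i$ are fixed, every remaining vertex is forced: it must go to~$B_i$, and we must verify that $G[B_i]$ is a cluster graph, i.e.\ is $P_3$-free. The $k^{2k+1}$ factor in the target running time strongly suggests that the branching enumerates roughly $k^{\Oh(k)}$ possibilities per iteration — for example, assigning each of a bounded set of conflict vertices (of size $\Oh(k)$) to one of the~$k$ clusters of~$A_i$ or to~$B_i$, giving $(k+1)^{\Oh(k)} = k^{\Oh(k)}$ branches, each resolved in time~$\Oh(nm)$ by a $P_3$-freeness check and a cluster-count check. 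Summed over~$n$ iterations this yields the claimed $\Oh(k^{2k+1}\cdot nm)$ bound.

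The hard part will be showing that the set of vertices whose assignment must be reconsidered in the repair step is bounded by~$\Oh(k)$, rather than growing uncontrollably. The natural obstruction set is the vertices of~$B_{i-1}$ that become problematic once~$v_i$ is added, together with the clusters of~$A_{i-1}$ touched by~$v_i$; but moving a vertex from~$B$ into~$A$ may in turn force further moves, and one must argue this cascade terminates quickly and is captured by the bounded guess. I would prove this with an exchange argument: fix any optimal solution $(A_i^*, B_i^*)$ for~$G_i$ and compare it with $(A_{i-1},B_{i-1})$; using mutual exclusivity between the two cluster-graph sides within each of the at most~$k$ clusters (a clique and a cluster graph can overlap in only~$\Oh(1)$ vertices), I would bound the symmetric difference and show that the bounded guess recovers~$A_i^*$. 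Establishing this bound tightly is where the bulk of the combinatorial work lies, and it is what makes the parameter~$k$ — the number of clusters in~$G[A]$ — the right handle for an FPT algorithm.
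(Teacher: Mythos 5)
Your overall framework---inductive recognition, adding $v_i$ and then ``repairing'' the previous valid 2-subcoloring by a bounded amount of branching---matches the paper's. But the step you yourself identify as ``the hard part'' rests on a false premise, and this is a genuine gap. You propose to bound the set of vertices whose assignment must be reconsidered by $\Oh(k)$ via an exchange argument ``using mutual exclusivity between the two cluster-graph sides \dots (a clique and a cluster graph can overlap in only $\Oh(1)$ vertices).'' A clique \emph{is} a cluster graph, so the properties ``cluster graph with at most $k$ clusters'' and ``cluster graph'' are not mutually $d$-exclusive for any $d$; the paper explicitly notes that Theorem~\ref{thm:exclusive-rec} does not apply to parameterized \textsc{2-Subcoloring} for precisely this reason. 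Concretely, take $G$ to be a disjoint union of two large cliques and $k=2$: both $(V,\emptyset)$ and $(\emptyset,V)$ are valid 2-subcolorings, so the symmetric difference between two solutions (or between the inherited partition and a target solution) can be all of $V$. Hence neither the claim that only $\Oh(k)$ ``conflict vertices'' need be reassigned, nor the plan to recover $A_i^*$ from a bounded guess of cluster representatives, can be established this way; a representative vertex also does not determine its cluster, since not every neighbor of the representative need join it.

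What the paper actually bounds is not the number of moved vertices but the number of \emph{nondeterministic} choices along a root--leaf path. Its constraints carry a group label for every vertex on \emph{both} sides (up to $n$ groups on the $B$-side), recording which cluster a vertex would join if it stays put; vertices forced to switch sides are detected by local rules and moved, becoming \emph{permanent}. Arbitrarily many vertices may cascade deterministically (each becomes permanent only once, so this costs only linear time per path), and branching occurs only when a switching vertex has no permanent neighbor on the other side. The branching factor is at most $k$ (respectively $k+2$) because a vertex entering $\sA$ must pick one of $k$ groups, and Lemma~\ref{lem:mustA} (a vertex with neighbors in more than $k+1$ groups of $\sB$ must go to $A$) caps the choices on the $B$-side; the branching \emph{depth} is at most $k$ per side because each branching application increases the number of groups on the opposite side containing a permanent vertex. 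This bookkeeping is the combinatorial content that your proposal is missing, and without it the cascade you worry about is not controlled.
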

The above result can be seen as a generalization of the polynomial-time algorithms for recognizing unipolar graphs~\cite{unipolar0,unipolar,MY15}, which we can obtain by setting $k = 1$.
We remark that one faces various technical difficulties when designing algorithms for parameterized \textsc{2-Subcoloring} as it does not seem to yield to standard approaches in parameterized algorithms.
This is a testament to the power of the inductive recognition technique, which adds to the arsenal of existing techniques for designing parameterized algorithms. Observe also that Theorem~\ref{thm:exclusive-rec} does not apply to parameterized \textsc{2-Subcoloring}.

\paragraph*{Further Results}
We also consider the \textsc{2-Subcoloring} problem parameterized by a weaker parameter: the total number of clusters in $G[A]$ and $G[B]$. This parameterization makes the problem amenable to a branching strategy that branches on the placement of the endpoints of suitably-chosen edges and nonedges of the graph. In this way, we create partial 2-subcolorings $(A',B')$ where each vertex in $V \setminus (A' \cup B')$ is adjacent to the vertices of exactly two partial clusters, one in each of $G[A']$ and $G[B']$. Then we show that whether such a partial 2-subcoloring extends to an actual 2-subcoloring of $G$ can be tested in polynomial time via a reduction to \textsc{2-CNF-Sat}. We prove the following result:

\begin{theorem}\label{thm:main:sub2}
  \thmtxtmainsubtwo
\end{theorem}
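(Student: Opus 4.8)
The plan is to solve the problem by a branching algorithm that first guesses a small \emph{skeleton} of the target solution and then completes it via a reduction to \textsc{2-CNF-Sat}. It is convenient to view a 2-subcoloring with at most $k$ clusters in total as a partition of $V$ into $t \le k$ cliques $Q_1,\dots,Q_t$ (the clusters) together with an assignment of each $Q_i$ to one of the two parts $A$ or $B$, subject to the single global constraint that two cliques assigned to the same part are non-adjacent in $G$ (otherwise they would lie in one connected component of $G[A]$ or $G[B]$ and fail to form a cluster graph). Thus the whole task is to discover the at most $k$ cliques and a valid 2-coloring of them. Since $k$ is the parameter but the cliques may be large, I cannot afford to guess the cliques vertex by vertex; instead I would guess, for each cluster, only a constant amount of information that pins it down relative to the rest of the graph.

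\emph{Branching phase.} I maintain a partial 2-subcoloring $(A',B')$ whose connected components are the \emph{partial clusters} discovered so far, each already assigned to $A'$ or $B'$; call a vertex outside $A'\cup B'$ \emph{free}. I repeatedly look for a \emph{witness}: an edge or non-edge $\{u,v\}$ whose placement relative to the current partial clusters is not yet forced, i.e., a pair that could still lie in one cluster, in different clusters of one part, or in different parts. I then branch on the placement of its two endpoints into $A'$ and $B'$; since each endpoint goes to one of two parts this gives at most four branches per witness, and each branch either extends an existing partial cluster or opens a new one and fixes its part. Placements that are already forced are applied as reduction steps and do not cost a branch; for instance, a free vertex adjacent to some but not all of a partial cluster must go to the opposite part, and a free vertex adjacent to two same-part partial clusters must avoid that part. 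The point is that whenever the current partial clusters do not yet represent all clusters of a hypothetical solution, such an unforced witness exists, and resolving it commits us to additional cluster structure. I would then argue that at most $k$ branching witnesses are ever needed, charging each genuine (non-forced) branching step to a distinct one of the at most $k$ clusters of the solution; hence the search tree has at most $4^{k}$ leaves. At every leaf the partial clusters are pairwise separated and each free vertex is adjacent to the vertices of at most one partial cluster per part, so that it has a uniquely determined candidate cluster in $A'$ and in $B'$.

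\emph{Extension phase.} Fixing a leaf with such a partial 2-subcoloring, for each free vertex $v$ I introduce a Boolean variable $x_v$ that is true if $v$ is placed in $A'$ and false if it is placed in $B'$; if $v$ has no candidate cluster in a part, the corresponding value is ruled out by a unit clause, and if both parts are impossible the branch is rejected. For every pair $\{u,v\}$ of free vertices I add a 2-clause expressing the only nontrivial constraint, namely that if $u$ and $v$ land in the same part they must be consistently clustered: if $uv\in E$ but $u$ and $v$ have different candidate clusters in $A'$, I add $\neg x_u \vee \neg x_v$ (they cannot both join $A'$ without merging two distinct clusters), and symmetrically for $B'$; if $uv\notin E$ but $u$ and $v$ share the same candidate cluster in $A'$, I add the same clause (they cannot both enter one clique while non-adjacent), and symmetrically for $B'$. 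Each constraint involves two variables, so this is a \textsc{2-CNF-Sat} instance with $\Oh(n)$ variables and $\Oh(n^2)$ clauses, solvable in time linear in its size, and a satisfying assignment corresponds exactly to a valid completion because at this leaf the partial clusters are precisely the clusters of the solution and the clauses capture exactly the clique and separation requirements.

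\emph{Running time and main obstacle.} The search tree has at most $4^{k}$ leaves; maintaining the partial clusters and the candidate clusters of the free vertices and then building and solving the \textsc{2-CNF-Sat} instance costs $\Oh(k^2\cdot n^2)$ per leaf, which yields the claimed $\Oh(4^{k}\cdot k^2\cdot n^2)$ bound. The main obstacle is the design and analysis of the branching: I must state the witness-selection rule precisely, prove that an unforced witness always exists as long as the current skeleton misses a cluster (the completeness direction), and establish the potential argument that charges each branching step to a distinct cluster, so that the branching depth is governed by $k$ rather than by $n$. A secondary subtlety is that \textsc{2-CNF-Sat} cannot itself count clusters, so the correctness of the extension phase hinges on the branching having already fixed \emph{all} clusters of the solution, guaranteeing that no free vertex is forced to open a cluster beyond the $k$ already accounted for.
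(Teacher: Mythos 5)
Your overall strategy --- branch to fix a bounded ``skeleton'' of clusters and then complete it via \textsc{2-CNF-Sat} --- is the same as the paper's, and your extension phase (one Boolean variable per undecided vertex, 2-clauses forbidding inconsistent co-placement of pairs) matches the paper's treatment of exhausted leaves almost verbatim. The gap is in the branching phase, precisely at the point you flag as the ``main obstacle.'' Your charging argument needs every genuine branching step to be chargeable to a distinct cluster of the solution, which requires that \emph{every} branch of \emph{every} such step opens a new cluster. Your witnesses do not have this property: for a pair $\{u,v\}$ where in some branch $u$ merely joins an existing cluster of $A'$ and $v$ merely joins an existing cluster of $B'$, no new cluster is opened, nothing is charged, and the depth of the search tree is no longer bounded by a function of $k$. (Concretely, two adjacent free vertices tied to the same cluster $\ssA_i$ of $A'$ remain an ``unforced witness'' under your definition --- both could join $\ssA_i$, or one or both could open a new cluster in $B$ --- so the branching would recur $\Omega(n)$ times on such pairs.) The paper is careful to branch only in configurations where every branch opens exactly one new cluster (an isolated free vertex; two nonadjacent vertices tied to the same cluster; two adjacent vertices tied to different clusters of the same part), which is what caps these branchings at $k+1$ two-way steps.

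The second, related error is your treatment of free vertices with a candidate cluster in only one part: you rule out the other part by a unit clause. That is unsound. A vertex $u$ adjacent to all of a partial cluster $\ssA_i$ and to no partial cluster of $B'$ can still be placed in $B$ in a solution --- by opening a \emph{new} cluster of $B$ --- and this may be forced, e.g.\ when $u$ has a neighbor that must end up in a different cluster of $A$ than $u$ and cannot itself move to $B$. Since \textsc{2-CNF-Sat} cannot count clusters, this case cannot be deferred to the extension phase either. The paper resolves it with a dedicated one-shot branching rule: it guesses, in a single $2^{\Oh(k)}$-way branch, which of the one-sided neighborhood classes $N_{\ssA_i}$, $N_{\ssB_j}$ spawn a new cluster in the opposite part, inserts a dummy vertex representing each such new cluster so that afterwards every remaining free vertex has exactly one candidate cluster in each part, and only then runs the \textsc{2-CNF-Sat} reduction. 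This rule contributes the second factor of $2^{k}$ in the $4^{k}$ bound; without it, or an equivalent mechanism, your algorithm is either incorrect or its search tree is not bounded in terms of $k$.
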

Finally, we consider the  parameter consisting of the total number of vertices in $G[A]$. We observe that a straightforward branching strategy yields a generic fixed-parameter algorithm for many \textsc{$(\Pi_A,\Pi_B)$-Recognition} problems.

\begin{proposition}\label{prp:main:size}
  \prptxtmainsize
\end{proposition}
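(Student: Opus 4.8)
The plan is to solve this by a bounded-depth branching procedure on the part~$A$, whose size is at most~$k$. Since~$\Pi_B$ is characterized by a finite set~$\mathcal{F}$ of forbidden induced subgraphs, let~$h$ be the maximum number of vertices over all graphs in~$\mathcal{F}$, a constant depending only on~$\Pi_B$. I would maintain a partial solution consisting of a set~$A' \subseteq V$ that is committed to~$A$ and a set~$B'$ that is committed to~$B$, with the remaining vertices undecided; initially~$A' = B' = \emptyset$. The key invariant I want to maintain is that~$G[B']$ always satisfies~$\Pi_B$.

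The first step is the branching itself. At each node of the search tree, I test whether~$G[B']$ satisfies~$\Pi_B$ (possible in polynomial time because~$\mathcal{F}$ is finite and~$h$ is constant). If it does, and every undecided vertex can legally be pushed into~$B$ while keeping~$\Pi_B$, we are done; more precisely, once~$A'$ is settled we let~$B = V \setminus A'$ and test whether~$G[B]\in\Pi_B$ directly. If~$G[B']$ already violates~$\Pi_B$, we have reached a dead end on this branch. Otherwise, the crucial move is to find a forbidden induced subgraph of~$\Pi_B$ that is ``about to appear'': since~$\Pi_B$ is hereditary and~$G[B']$ currently satisfies it, I would look at a minimal witness among the undecided vertices. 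First I would greedily try to extend~$B'$ by undecided vertices; if at some point adding a vertex~$v$ to~$B'$ creates a copy of some~$F \in \mathcal{F}$, then that copy consists of~$v$ together with at most~$h-1$ vertices currently in~$B'$, and to break it at least one of these at most~$h$ vertices must instead go into~$A$. I branch on which of these~$\le h$ vertices is placed into~$A'$, giving a branching factor of at most~$h$.

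The second step is to bound the depth. Every branch places one vertex into~$A'$, and since~$|A| \le k$ we may cut off any branch in which~$|A'|$ exceeds~$k$. Hence the search tree has depth at most~$k$ and branching factor at most~$h = \Oh(1)$, so it has~$h^{k} = 2^{\Oh(k)}$ leaves. At each leaf we do polynomial work (testing membership of~$\Pi_A$ for~$G[A']$, which is polynomial by hypothesis, and testing~$G[V \setminus A'] \in \Pi_B$). This yields the claimed~$2^{\Oh(k)}\cdot n^{\Oh(1)}$ running time.

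The step I expect to be the main obstacle is proving correctness of the branching, namely that restricting attention to the~$\le h$ vertices of a single forbidden subgraph loses no solution. The argument is: if~$(A,B)$ is a solution with~$|A|\le k$ and our current~$A' \subseteq A$, $B' \subseteq B$, then whenever a forbidden subgraph~$F$ would be created by greedily moving undecided vertices into~$B'$, the vertex set~$V(F)$ cannot be entirely contained in~$B$, because~$G[B] \in \Pi_B$ is~$\mathcal{F}$-free and~$\Pi_B$ hereditary. Therefore at least one vertex of~$V(F)$ lies in~$A$, and that vertex is among the branching choices, so some branch stays consistent with~$(A,B)$. The care needed here is to set up the greedy extension so that the detected forbidden subgraph always has all but one of its vertices already in~$B'$ (hence the bound of~$h$ branching choices), and to confirm that once~$A'$ reaches a full solution~$A$ the final membership test for~$G[V\setminus A']$ succeeds. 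Hereditariness of~$\Pi_A$ is what guarantees that committing vertices to~$A'$ one at a time never prematurely rules out a valid~$A$.
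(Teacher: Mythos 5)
Your proposal is correct and follows essentially the same approach as the paper: branch on the at most $h$ vertices of a forbidden induced subgraph of $\Pi_B$ found among the vertices not yet committed to $A$, placing one of them into $A$, with depth bounded by $k$ since $|A|\le k$. The paper's version is slightly more streamlined (it keeps $B = V\setminus A$ throughout and searches for a forbidden subgraph directly in $G[B]$, rather than maintaining an undecided set and greedily growing $B'$), but the branching rule, correctness argument, and running-time analysis are the same.
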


We complement our results by presenting in Section~\ref{sec:hardness} several hardness results showing that significant improvements over the algorithms presented in this paper for \textsc{Monopolar Recognition}, \textsc{2-Subcoloring}, and the \textsc{$(\Pi_{A},\Pi_{B})$-Recognition} problem in general, are unlikely.%

\paragraph*{Related Work}
As mentioned, split graphs and unipolar graphs can be recognized in linear time~\cite{HS81} and polynomial time~\cite{unipolar0,unipolar,MY15}, respectively. In contrast, if~$\Pi_A$ and~$\Pi_B$ can be characterized by a set of connected forbidden subgraphs and~$\Pi_A$ is not the set of all edgeless graphs, then~\textsc{$(\Pi_A,\Pi_B)$-Recognition} is NP-hard~\cite{Far04}. This implies the NP-hardness of \textsc{Monopolar Recognition} and \textsc{2-Subcoloring}. Up to the authors' knowledge, the parameterized complexity of \textsc{Monopolar Recognition} and \textsc{2-Subcoloring} has not been studied before. The known algorithms for both problems are not parameterized, and assume that the input graph belongs to a structured graph class; see~\cite{CH12,CH14,CH14b,EHSW+08,LN14} and~\cite{BFNW02,FJBS03,Gimbel2003,Sta08}, respectively.

Recently, Kolay and Panolan~\cite{KP15} considered the problem of deleting $k$ vertices or edges to obtain an $(r,\ell)$-graph. For integers $r,\ell$, a graph $G=(V,E)$ is an \emph{$(r,\ell)$-graph} if $V$ can be partitioned into $r$ independent sets and $\ell$ cliques. For example, $(2,0)$-graphs are precisely bipartite graphs and $(1,1)$-graphs are precisely split graphs. Observe that $(1,\cdot)$-graphs are \emph{not} necessarily monopolar, because monopolar graphs do not allow edges between the cliques (as $G[A]$ is a cluster graph), whereas such edges are allowed in $(1,\cdot)$\nobreakdash-graphs. These differences lead to substantially different algorithmic techniques. For example, since Kolay and Panolan~\cite{KP15} consider the deletion problem, they can use iterative compression in their work. Moreover, they consider $r,\ell < 3$, whereas the number of cliques~$k$ may be arbitrarily large in our setting. Nevertheless, during the development of our algorithms we were inspired by some of their observations. Using techniques similar to Kolay and Panolan~\cite{KP15}, \citet{KPRS16} also obtained fixed-parameter algorithms for deleting vertices in perfect graphs to obtain $(r, \ell)$-graphs.

Iterative localization~\cite{HKLRS13} is a technique that is similar to iterative recognition, which we were unaware of at the time when the journal version of this article~\cite{KKSL18} was published.
\citet{HKLRS13} introduced iterative localization and used it to develop fixed-parameter tractable algorithms for computing the cochromatic number\footnote{The \emph{cochromatic number} of a graph is the minimum number of colors needed in a coloring of the vertices of that graph such that each color class induces either an edgeless graph or a complete graph.} of permutation graphs, and the stabbing number of disjoint rectangles with axes-parallel lines. The core idea of iterative localization is to build a solution incrementally by iteratively adding entities (e.g.\ vertices) greedily, and then search the space around the obtained solution for an optimal one. For example, for a given a cochromatic coloring of a graph $G - v$, where $G$ is a perfect graph, a cochromatic coloring for $G$ can be obtained by simply adding $v$ with a new color. From this coloring, \citet{HKLRS13} then obtain an optimum coloring via a branching algorithm. Our formalization of iterative recognition captures iterative localization, and could be seen as a slight generalization of it, since we do not (necessarily) directly obtain a solution from an iteratively added vertex.

\paragraph*{Organization of the Article} After describing the necessary graph-theoretic notations and giving a brief background on parameterized complexity in Section~\ref{sec:prelim}, we introduce inductive recognition in Section~\ref{sec:inducregoc}. In Section~\ref{sec:XP}, we show our most general tractability result based on inductive recognition, the XP algorithm for mutually exclusive graph properties (Theorem~\ref{thm:exclusive-rec}). In Section~\ref{sec:mono}, we give the linear-time fixed-parameter algorithm for \textsc{Monopolar Recognition} parameterized by the number~$k$ of cliques (Theorem~\ref{thm:main:mono}). In Section~\ref{sec:gen}, we present more general graph classes such that we can obtain FPT algorithms for their corresponding recognition problems. In Section~\ref{sec:2-subcolor}, we turn to \textsc{2-Subcoloring} parameterized by the smaller number of cliques between the two parts, and give an FPT algorithm for this problem based on inductive recognition (Theorem~\ref{thm:main:sub}). In Section~\ref{sec:totalpolar}, we present FPT algorithms that are not based on inductive recognition, for recognition problems parameterized by weaker parameters (Proposition~\ref{prp:main:size} and Theorem~\ref{thm:main:sub2}). Our hardness results are presented in Section~\ref{sec:hardness}. In Section~\ref{sec:conclusion}, we summarize our findings and point out future research directions.

\section{Preliminaries}
\label{sec:prelim}
For~$n\in \mathbb{N}$, we let~$[n]:=\{1,\dots, n\}$ denote the integers from~1 through~$n$.
We follow standard graph-theoretic notation~\cite{Die12}. Let $G$ be a graph. By $V(G)$ and $E(G)$ we denote the vertex set and the edge set of $G$, respectively. The \emph{order} of a graph $G$ is $|V(G)|$. Throughout this work, let $n := |V(G)|$ and $m:=|E(G)|$. For $X \subseteq V(G)$, $G[X]$ denotes the \emph{subgraph of $G$ induced by $X$}. For a vertex $v \in G$, $N(v)$ and $N[v]$ denote the open neighborhood and the closed neighborhood of $v$, respectively. The {\em degree} of a vertex $v \in G$, denoted $\deg(v)$, is $|N(v)|$.  We say that a vertex $v$ is \emph{adjacent to a subset $X \subseteq V(G)$ of vertices} if $v$ is adjacent to at least one vertex in $X$.  For $X \subseteq V(G)$, we define $N(X) := (\bigcup_{v \in X}N(v))\setminus X$ and $N[X] := \bigcup_{v \in X}N[v]$, and for a family $\mathcal{X}$ of subsets $X \subseteq V(G)$, we define  $N(\mathcal{X}) := (\bigcup_{X \in \mathcal{X}}N(X)) \setminus (\bigcup_{X \in \mathcal{X}}X)$ and $N[\mathcal{X}] := \bigcup_{X \in \mathcal{X}}N[X]$. If $S$ is any set of vertices in $G$, we write $G-S$ for the subgraph of $G$ obtained by deleting all the vertices in $S$ and their incident edges. For a vertex $v \in V(G)$, we write $G-v$ for $G - \{v\}$.
By $P_3$ we
denote the graph that is a (simple) path on 3 vertices. We repeatedly use the following well-known fact:

\begin{fact} \label{fact:p3}
A graph is a cluster graph if and only if it does not contain $P_3$ as an induced subgraph.
\end{fact}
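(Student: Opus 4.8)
The plan is to prove both implications of the biconditional directly from the definition of a cluster graph as a disjoint union of cliques, so I would split the argument into a forward and a reverse direction.

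For the forward direction, I would assume $G$ is a cluster graph and show it contains no induced $P_3$. Suppose toward a contradiction that three vertices $a,b,c$ induce a $P_3$, say with $ab,bc\in E(G)$ and $ac\notin E(G)$. Since $ab$ and $bc$ are edges, $a$, $b$, and $c$ all lie in a common connected component of $G$; but by hypothesis every component of a cluster graph is a clique, so $ac$ must be an edge as well. This contradicts the assumption that $\{a,b,c\}$ induces a $P_3$, completing this direction.

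For the reverse direction, the cleanest approach is to show that when $G$ is $P_3$-free, the relation $\sim$ on $V(G)$ defined by $u\sim v$ if and only if $u=v$ or $uv\in E(G)$ is an equivalence relation. Reflexivity and symmetry are immediate from the definition and the fact that $G$ is undirected. For transitivity, I would take distinct vertices $u,v,w$ with $u\sim v$ and $v\sim w$, so $uv,vw\in E(G)$; if $uw\notin E(G)$, then $\{u,v,w\}$ would induce a $P_3$, contradicting the hypothesis, and hence $uw\in E(G)$, giving $u\sim w$. The equivalence classes of $\sim$ are then pairwise non-adjacent, since any edge between two vertices would place them in the same class, and each class is a clique, since any two distinct vertices within a class are adjacent. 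Thus $G$ is a disjoint union of cliques, that is, a cluster graph.

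I do not expect any genuine obstacle in this argument; the only step needing care is transitivity, where one must invoke the absence of an induced $P_3$ on exactly the triple $\{u,v,w\}$ formed by the two given edges. An alternative to the equivalence-relation argument for the reverse direction is a shortest-path argument: if some connected component were not a clique, it would contain two non-adjacent vertices joined by a shortest path of length at least two, and the first three vertices of that path would induce a $P_3$. I prefer the equivalence-relation formulation, as it simultaneously yields the clique structure and the absence of edges between components without having to choose a path.
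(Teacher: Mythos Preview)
Your proof is correct and complete. The paper itself does not prove this statement; it is presented there as a well-known fact without argument, so your write-up actually supplies what the paper omits. Both the equivalence-relation argument and the shortest-path alternative you mention are standard and valid ways to establish the reverse direction.
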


Let $r, s$ be positive integers. The \emph{Ramsey number} $R(r, s)$ is the smallest integer such that every graph of order at least $R(r, s)$ either contains a clique of $r$ vertices or an independent set of $s$ vertices.
Ramsey's theorem~\cite{ramsey} states that, for any two positive integers $r, s$, the number $R(r, s)$ exists. The following upper bound on $R(r, s)$ is known: $R(r, s) \leq {r+s-2 \choose r-1}$.

 A {\em parameterized problem} is a set of instances of the form $(x, k)$, where $x \in \Sigma^*$ for a finite alphabet set $\Sigma$, and
 $k \in \mathbb{N}$ is the {\em parameter}.
A parameterized problem $Q$ is {\it fixed-parameter tractable} (\FPT), if there exists an algorithm that on input $(x, k)$
decides if $(x, k)$ is a yes-instance of $Q$ in time $f(k)n^{\Oh(1)}$,
where $f$ is a computable function independent of $n = |x|$; an algorithm with this running time is called {\em \FPT~algorithm}.
A hierarchy of fixed-parameter intractability, {\it the \emph{W}-hierarchy} $\bigcup_{t
\geq 0} \W{t}$, was introduced based on the notion of {\em \FPT~reduction}, in which the $0$th level $\W{0}$ is the class \FPT.  It is commonly believed that $\W{1} \neq \mbox{\FPT}$. A parameterized problem $Q$ is in the parameterized complexity class \XP,  if there exists an algorithm that on input $(x, k)$
decides if $(x, k)$ is a yes-instance of $Q$ in time $\Oh(n^{f(k)})$,
where $f$ is a computable function independent of $n = |x|$. For more discussion on parameterized complexity, we refer to the literature~\cite{DF13,CFK+15}.

We make also use of the \emph{Exponential Time Hypothesis} (ETH). The ETH was formulated by Impagliazzo \etal~\cite{IPZ01}, and states that \textsc{$k$-CNF-Sat} (for any $k \geq 3$) cannot be solved in subexponential time $2^{o(n)}$, where $n$ is the number of variables in the input formula. Therefore, there exists a constant $c > 0$ such that \textsc{$k$-CNF-Sat} cannot be solved in time $\Oh(2^{cn})$. ETH has become a standard hypothesis in complexity theory for proving tight running time bounds results.

\section{Foundations of Inductive Recognition}\label{sec:inducregoc}
In this section, we describe the foundations of the general technique that we use to solve \textsc{$(\Pi_{A},\Pi_{B})$-Recognition} problems. The technique works in a similar way to the iterative compression technique by \citet{RSV04}.
Let ${\cal G}$ be an arbitrary hereditary graph class (that is, if $G \in {\cal G}$, then $G' \in {\cal G}$ for every induced subgraph $G'$ of $G$). We call an algorithm~${\cal A}$ an \emph{inductive recognizer} for~${\cal G}$ if given a graph $G=(V,E)$, a vertex $v \in V$ such that $G-v \in {\cal G}$, and a membership certificate for~$G-v\in {\cal G}$, algorithm~${\cal A}$ correctly decides whether $G \in {\cal G}$, and gives a membership certificate if~$G\in {\cal G}$. For example, in the case of recognizing monopolar graphs, the membership certificate may be a string encoding a monopolar partition.

\begin{theorem}\label{thm:in-rec}
Let ${\cal G}$ be an arbitrary hereditary graph class. Given an inductive recognizer ${\cal A}$ for ${\cal G}$, we can recognize whether a given graph $G=(V,E)$ is a member of ${\cal G}$ in time $\Oh(n+m) + \sum_{i=1}^{n} T(i)$, where $T(i)$  is the worst-case running time of ${\cal A}$ on a graph of order at most~$i$.
\end{theorem}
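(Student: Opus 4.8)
The plan is to recognize $G$ by applying the inductive recognizer $\mathcal{A}$ in a sequence of $n$ iterations, building up $G$ one vertex at a time. First I would fix an arbitrary ordering $v_1, v_2, \ldots, v_n$ of the vertices of $V$, and for each $i \in [n]$ let $G_i := G[\{v_1, \ldots, v_i\}]$ be the subgraph induced by the first $i$ vertices, with $G_0$ the empty graph. Since $\mathcal{G}$ is hereditary and contains the empty graph (as an induced subgraph of any member, or vacuously), we have $G_0 \in \mathcal{G}$, and we start with a trivial membership certificate for $G_0$. The key invariant is that at the start of iteration $i$ we possess both the graph $G_{i-1}$ and a membership certificate witnessing $G_{i-1} \in \mathcal{G}$.

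The core of the argument is the inductive step. In iteration $i$, I would invoke $\mathcal{A}$ on the graph $G_i$, the vertex $v_i$, and the membership certificate for $G_{i-1} = G_i - v_i \in \mathcal{G}$ carried over from the previous iteration. Note that $G_i - v_i = G_{i-1}$ holds precisely because of the ordering: deleting the last-added vertex from $G_i$ returns the previous induced subgraph. By the definition of an inductive recognizer, $\mathcal{A}$ correctly decides whether $G_i \in \mathcal{G}$, and produces a membership certificate whenever the answer is positive. If $\mathcal{A}$ reports $G_i \notin \mathcal{G}$, then since $\mathcal{G}$ is hereditary and $G_i$ is an induced subgraph of $G = G_n$, it follows that $G \notin \mathcal{G}$, so I can halt and correctly reject. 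Otherwise I record the new certificate and proceed to iteration $i+1$. If all $n$ iterations succeed, then $G_n = G \in \mathcal{G}$, and the final certificate witnesses membership.

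Correctness follows by induction on $i$: the invariant that $G_{i-1} \in \mathcal{G}$ together with a valid certificate is exactly the precondition required by $\mathcal{A}$, and the heredity of $\mathcal{G}$ guarantees that a single negative answer propagates to a correct rejection of $G$. For the running time, I would first spend $\Oh(n+m)$ time reading the input and setting up the ordering and the incremental adjacency structure so that each $G_i$ (and in particular the incidence of $v_i$) is available to $\mathcal{A}$. Each iteration $i$ costs at most $T(i)$, since $\mathcal{A}$ runs on $G_i$, a graph of order $i$; summing over $i = 1, \ldots, n$ gives the stated bound $\Oh(n+m) + \sum_{i=1}^{n} T(i)$.

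The main obstacle, and the point requiring the most care, is the bookkeeping that lets each call to $\mathcal{A}$ respect its interface without inflating the running time: the invocation must be handed $G_i$ together with a membership certificate for $G_i - v_i$, and these must be maintained incrementally rather than recomputed from scratch. The substantive content — deciding membership and repairing the certificate after $v_i$ is inserted — is entirely delegated to $\mathcal{A}$ and is not our concern here; we rely only on its specified input-output behavior. Thus the theorem is essentially a clean reduction of global recognition to the local, per-vertex task captured by the inductive recognizer, and I would emphasize that the heredity of $\mathcal{G}$ is what makes the early-termination step sound.
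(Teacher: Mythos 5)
Your proposal is correct and follows essentially the same approach as the paper's proof: fix an arbitrary vertex ordering, maintain the invariant that $G_{i-1}\in{\cal G}$ together with a certificate, invoke ${\cal A}$ on $(G_i,v_i)$ in each iteration, and use heredity to justify early rejection. The only difference is that you spell out the bookkeeping and the induction in more detail than the paper, which dismisses the running-time bound as straightforward.
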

\begin{proof}
We first sort the vertices in an arbitrary order to obtain a list~$v_1,\ldots, v_n$.
 Let~$G_0 = (\emptyset, \emptyset)$ be the empty graph and~$G_i:=G[\{v_1,v_2,\dots,v_i\}]$, for $i=1,\ldots,n$. Since $\cal G$ is hereditary, $G_0$ is a member of ${\cal G}$ and we can easily compute a membership certificate of $G_0$ in ${\cal G}$. Then, for $i=1, \ldots, n$, in order, we run ${\cal A}$ on $(G_i, v_i)$ passing to ${\cal A}$ the certificate of membership of $G_{i-1}$ in ${\cal G}$, to decide whether  $G_i$ is a member of ${\cal G}$, and produce a membership certificate in case it is, but only if $G_{i-1}$ is a member of ${\cal G}$. If ${\cal A}$ decides that $G_{i}$ is not a member of ${\cal G}$ for some $i=1,\ldots,n$, then we answer that $G$ is not a member of ${\cal G}$; this is correct, because ${\cal G}$ is hereditary. Otherwise, we answer that $G$ is a member of ${\cal G}$; the correctness of this answer follows from the correctness of ${\cal A}$. The bound on the running time is straightforward.
\smartqed\end{proof}

For the purpose of this paper, we consider \emph{parameterized inductive recognizers} for~$(\Pi_A,\Pi_B)$-graphs. In addition to $G$ and $v$, these recognizers take a nonnegative integer $k$ as input. The above general theorem can then be instantiated as follows.

\begin{corollary} \label{cor:in-rec-k}
Let $k$ be a nonnegative integer, and let $\Pi_A$ and $\Pi_B$ be two hereditary graph properties. Let ${\cal G}_{k}$ be a class of $(\Pi_A, \Pi_B)$-graphs with an additional hereditary property that depends on~$k$.
Given a parameterized inductive recognizer ${\cal A}$ for ${\cal G}_{k}$, we can recognize whether a given graph $G=(V,E)$ is a member of ${\cal G}_{k}$ in time $\Oh(n+m) + \sum_{i=1}^{n} T(i,k)$, where $T(i,k)$ is the worst-case running time of ${\cal A}$ with parameter~$k$ on a graph of order at most~$i$.
\end{corollary}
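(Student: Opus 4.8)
The plan is to derive this statement directly from Theorem~\ref{thm:in-rec} by fixing the parameter~$k$ and checking that the hypotheses of that theorem are met. First I would fix an arbitrary value of~$k$ and regard~$\mathcal{G}_k$ no longer as a family but as a single graph class. The one fact that needs verification before Theorem~\ref{thm:in-rec} can be invoked is that, for this fixed~$k$, the class~$\mathcal{G}_k$ is hereditary.

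To verify heredity I would argue in two parts. First, being a $(\Pi_A,\Pi_B)$-graph is itself a hereditary property whenever~$\Pi_A$ and~$\Pi_B$ are hereditary: if~$G$ admits a partition~$(A,B)$ with~$G[A]\in\Pi_A$ and~$G[B]\in\Pi_B$, and~$G'=G[V']$ is any induced subgraph, then~$(A\cap V',\,B\cap V')$ partitions~$V'$, while~$G'[A\cap V']$ and~$G'[B\cap V']$ are induced subgraphs of~$G[A]$ and~$G[B]$ and hence satisfy~$\Pi_A$ and~$\Pi_B$, respectively, by heredity of the two properties; thus~$G'$ is again a $(\Pi_A,\Pi_B)$-graph. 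Second, the additional $k$-dependent property defining~$\mathcal{G}_k$ is hereditary by assumption. Since the intersection of two hereditary classes is again hereditary, $\mathcal{G}_k$ is hereditary.

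With heredity established, I would observe that a parameterized inductive recognizer~$\mathcal{A}$ for~$\mathcal{G}_k$ is, once~$k$ is fixed, exactly an inductive recognizer for the hereditary class~$\mathcal{G}_k$ in the sense required by Theorem~\ref{thm:in-rec}: given~$G$, a vertex~$v$ with~$G-v\in\mathcal{G}_k$, and a membership certificate for~$G-v$, it correctly decides whether~$G\in\mathcal{G}_k$ and outputs a certificate if so. Applying Theorem~\ref{thm:in-rec} to~$\mathcal{G}_k$ with this recognizer yields a recognition algorithm running in time~$\Oh(n+m)+\sum_{i=1}^n T(i)$, where~$T(i)$ bounds the running time of~$\mathcal{A}$ on graphs of order at most~$i$. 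Since~$\mathcal{A}$ carries the parameter~$k$ throughout, this bound is precisely~$\Oh(n+m)+\sum_{i=1}^n T(i,k)$, as claimed. The only step with any content is the heredity check, and even that is routine; the corollary is essentially just the parameterized instantiation of Theorem~\ref{thm:in-rec}, so I do not expect any genuine obstacle beyond confirming that the $k$-dependent property combines with the $(\Pi_A,\Pi_B)$-partition property without destroying heredity.
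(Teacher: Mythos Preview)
Your proposal is correct and matches the paper's intent: the paper gives no proof at all for this corollary, presenting it simply as an instantiation of Theorem~\ref{thm:in-rec} once $k$ is fixed. Your additional verification that $\mathcal{G}_k$ is hereditary (via heredity of $(\Pi_A,\Pi_B)$-graphs and closure under intersection) is the only nontrivial point, and you handle it correctly.
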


\begin{corollary} \label{cor:in-rec}
Let $k$ be a nonnegative integer and let $\Pi_A$ and $\Pi_B$ be two hereditary graph properties. Let ${\cal G}_{k}$ be a class of $(\Pi_A, \Pi_B)$-graphs with an additional hereditary property that depends on $k$. Given a parameterized inductive recognizer ${\cal A}$ for ${\cal G}_{k}$ that runs in time $f(k) \cdot \Delta$, where $\Delta$ is the maximum degree of the input graph and $f$ is an arbitrary computable function, we can recognize whether a given graph $G=(V,E)$ is a member of ${\cal G}_{k}$ in time $f(k) \cdot \Oh(n+m)$.
\end{corollary}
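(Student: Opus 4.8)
The plan is to revisit the iterative procedure underlying Theorem~\ref{thm:in-rec} directly, rather than to invoke Corollary~\ref{cor:in-rec-k} as a black box. Correctness carries over verbatim from Theorem~\ref{thm:in-rec}, since the iterative scheme is correct for \emph{any} ordering of the vertices; only the running-time bound needs a fresh argument. The naive estimate is genuinely too weak here: a recognizer running in time $f(k)\cdot\Delta$ has worst-case cost $f(k)\cdot(i-1)$ on a graph of order~$i$ (a clique attains maximum degree $i-1$), so substituting $T(i,k)=f(k)\cdot(i-1)$ into Corollary~\ref{cor:in-rec-k} only yields $\sum_{i=1}^{n}f(k)\cdot(i-1)=\Oh(f(k)\cdot n^2)$. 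The whole point is therefore to pick the vertex ordering so that the maximum degrees of the prefixes $G_i$ sum to $\Oh(m)$ instead of $\Oh(n^2)$.

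Concretely, I would first sort the vertices in \emph{non-decreasing} order of their degree in~$G$, obtaining $v_1,\dots,v_n$ with $\deg(v_1)\le\cdots\le\deg(v_n)$; since all degrees lie in $\{0,\dots,n-1\}$, this can be done in $\Oh(n+m)$ time by bucket sort. I then run the inductive recognizer exactly as in the proof of Theorem~\ref{thm:in-rec}, maintaining $G_i=G[\{v_1,\dots,v_i\}]$ incrementally: inserting $v_i$ together with its edges to the already-present vertices costs $\Oh(\deg(v_i)+1)$, for a total of $\Oh(n+m)$ over all $n$ iterations, which is subsumed by the final bound.

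The key observation is that, under this ordering, $\Delta(G_i)\le\deg(v_i)$ for every~$i$. Indeed, every vertex $v_j$ with $j\le i$ satisfies $\deg_{G_i}(v_j)\le\deg(v_j)\le\deg(v_i)$, where the first inequality holds because $G_i$ is an induced subgraph of~$G$ and the second because the ordering is non-decreasing; taking the maximum over $j\le i$ proves the claim. Consequently the call to~${\cal A}$ in iteration~$i$ runs in time $f(k)\cdot\Delta(G_i)+\Oh(1)\le f(k)\cdot\deg(v_i)+\Oh(1)$, where the additive $\Oh(1)$ absorbs per-call bookkeeping. Summing over all iterations and using the handshake identity $\sum_{i=1}^{n}\deg(v_i)=2m$ gives a total running time of $\Oh(n+m)+\sum_{i=1}^{n}\bigl(f(k)\cdot\deg(v_i)+\Oh(1)\bigr)=\Oh(n+m)+f(k)\cdot 2m+\Oh(n)=f(k)\cdot\Oh(n+m)$, as required.

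I expect the main obstacle to be exactly the ordering choice isolated in the first paragraph: with an arbitrary order (for instance, placing a high-degree vertex first) the prefix maximum degrees $\Delta(G_i)$ can each be as large as $\Theta(n)$, so the telescoping down to $\Oh(m)$ relies essentially on the non-decreasing-degree ordering combined with the handshake bound. The accompanying subtlety to check carefully is that correctness is order-independent—this is what frees us to replace the ``arbitrary order'' of Theorem~\ref{thm:in-rec} by this specific favorable one without affecting the outcome of the recognition.
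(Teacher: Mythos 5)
Your proof is correct and takes essentially the same approach as the paper: both sort the vertices in nondecreasing order of degree via bucket sort and use the observation that $\Delta(G_i)\le\deg(v_i)$ to bound the total cost by $f(k)\cdot\Oh(n+m)$. Your write-up merely makes the final summation via the handshake identity explicit where the paper states the conclusion in one line.
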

\begin{proof}[Proof of Corollary~\ref{cor:in-rec}]
  Modify the algorithm in the proof of Theorem~\ref{thm:in-rec} by, instead of sorting the vertices~$v_1, \ldots, v_n$ arbitrarily, sorting them in nondecreasing order of their vertex-degree in~$\Oh(n+m)$ time (using Bucket Sort for example). In this way, for each $i \in [n]$, graph~$G_i$ has its maximum degree upper bounded by the degree of vertex~$v_i$. Thus, if the running time of ${\cal A}$ depends linearly on the maximum degree of the input graph and arbitrarily on $k$, then the total running time is linear for every fixed $k$.
\smartqed\end{proof}

\section{A General Application of Inductive Recognition}\label{sec:XP} Recall that
\textsc{$(\Pi_{A},\Pi_{B})$-Recognition} is NP-hard if~$\Pi_A$ and~$\Pi_B$ can be
characterized by a set of connected forbidden induced subgraphs and, additionally,~$\Pi_A$
is not the set of all edgeless graphs~\cite{Far04}. While being quite general, this hardness result is
not exhaustive and ideally, we would like to obtain a complexity dichotomy for
\textsc{$(\Pi_{A},\Pi_{B})$-Recognition}. As a first application of inductive recognition and a step towards such a complexity dichotomy, we show how inductive recognition can be used to solve \textsc{$(\Pi_{A},\Pi_{B})$-Recognition} for hereditary mutually $d$-exclusive graph properties~$\Pi_A$,~$\Pi_B$, as defined in Definition~\ref{def:mutually-exclusive} (Section~\ref{sec:intro}).

To apply inductive recognition, we need to describe an inductive recognizer
for~$(\Pi_A,\Pi_B)$-graphs, that is, we need to give an algorithm for the following
problem.
\begin{quote}
  \textsc{Inductive $(\Pi_{A},\Pi_{B})$-Recognition}\\
  \textbf{Input:} A graph $G=(V,E)$, a vertex~$v\in V$, and a partition~$(A',B')$ of~$G'=G-v$ such that~$G[A']\in \Pi_{A}$ and~$G[B']\in \Pi_{B}$.\\
  \textbf{Question:} Does $V$ have a~\abpartition~$(A,B)$, that is, a partition such that~$G[A]\in \Pi_{A}$ and~$G[B]\in \Pi_{B}$?
\end{quote}

For mutually~$d$-exclusive graph properties, we can solve this
problem by starting a search from the partition~$(A',B')$
of~$G-v$. Herein, we can use the fact that the number of vertices that
can be moved from~$A'$ to~$B$ and from $B'$~to~$A$ are each upper-bounded by
$d$, because~$G[A']\in \Pi_A$
and~$G[B']\in \Pi_B$. This implies that the partition~$(A,B)$ is
determined to a large extent by the partition~$(A',B')$.

\begin{lemma}\label{lem:exclusive-rec}
  If~$\Pi_{A}$ and~$\Pi_B$ are hereditary and mutually~$d$-exclusive graph properties, and membership of~$\Pi_A$
  and~$\Pi_B$ can be decided in polynomial time, then \textsc{Inductive
    $(\Pi_{A},\Pi_{B})$-Recognition} can be solved in time $n^{2d + \Oh(1)}$.
\end{lemma}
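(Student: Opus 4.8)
The plan is to start from the given valid partition $(A',B')$ of $G-v$ and exploit mutual $d$-exclusivity to argue that any $(\Pi_A,\Pi_B)$-partition $(A,B)$ of $G$ can differ from $(A',B')$ on only a bounded number of vertices, beyond the newly introduced vertex $v$. I would then enumerate all such bounded modifications and test each resulting candidate partition for validity using the assumed polynomial-time membership tests.

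The key structural observation is the following. Suppose $(A,B)$ is any $(\Pi_A,\Pi_B)$-partition of $G$, and consider the set $A'\cap B$ of vertices that lie in $A'$ but are placed in $B$. Since $A'\cap B\subseteq A'$ and $G[A']\in\Pi_A$, heredity of $\Pi_A$ gives $G[A'\cap B]\in\Pi_A$; since $A'\cap B\subseteq B$ and $G[B]\in\Pi_B$, heredity of $\Pi_B$ gives $G[A'\cap B]\in\Pi_B$. Thus $G[A'\cap B]$ satisfies \emph{both} properties, so mutual $d$-exclusivity forces $|A'\cap B|\le d-1$. The symmetric argument shows $|B'\cap A|\le d-1$. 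Writing $S_A:=A'\cap B$ and $S_B:=B'\cap A$, the whole partition $(A,B)$ is determined by the two small sets $S_A\subseteq A'$, $S_B\subseteq B'$ and the side chosen for $v$, via $A=(A'\setminus S_A)\cup S_B\cup(\{v\}\cap A)$ together with the complementary expression for $B$.

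The algorithm then follows directly. I would enumerate every pair of subsets $S_A\subseteq A'$ and $S_B\subseteq B'$ with $|S_A|,|S_B|\le d-1$, and for each such pair both possible placements of $v$; there are $\Oh(n^{d-1})$ choices for each of $S_A$ and $S_B$, giving $\Oh(n^{2(d-1)})$ candidate partitions in total. For each candidate $(A,B)$ I would test in polynomial time whether $G[A]\in\Pi_A$ and $G[B]\in\Pi_B$, output the partition if both tests pass, and otherwise report that no $(\Pi_A,\Pi_B)$-partition of $G$ exists. One direction of correctness is immediate, since any candidate passing both tests is by definition a valid partition; in the other direction, the structural observation guarantees that every valid partition arises as one of the enumerated candidates, so none is missed. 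The running time is the number $\Oh(n^{2(d-1)})$ of candidates times the polynomial cost of the two membership checks, i.e.\ $n^{2d+\Oh(1)}$.

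I expect the only real content to be the structural observation bounding $|A'\cap B|$ and $|B'\cap A|$; once that is established, the enumeration and its running-time analysis are routine. The point I would be careful about is that \emph{both} heredity assumptions are genuinely used — one to push $\Pi_A$ down to the set $A'\cap B$ and the other to push $\Pi_B$ down to the same set — and that $v$ must be handled separately, since it belongs to neither $A'$ nor $B'$; this contributes only the constant factor $2$, which is absorbed into the $\Oh(1)$ in the exponent.
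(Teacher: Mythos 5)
Your proposal is correct and follows essentially the same route as the paper: bound $|A'\cap B|$ and $|B'\cap A|$ by $d-1$ via heredity and mutual $d$-exclusivity, then enumerate the $\Oh(n^{2d-2})$ pairs of small swap sets together with the two placements of $v$ and test each candidate with the polynomial-time membership oracles. Your write-up even makes the heredity step more explicit than the paper's, which simply asserts the bound; there is nothing to correct.
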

\begin{proof}
  Assume there is a~\abpartition~$(A,B)$ of~$V$. Since~$\Pi_A$ and~$\Pi_B$ are mutually~$d$-exclusive, at most~$d-1$
  vertices of~$A'$ are contained in~$B$ and at most~$d-1$ vertices of~$B'$ are contained
  in~$A$. Consequently, we can decide whether~$G$ is a $(\Pi_A,\Pi_B)$-graph by the following
  algorithm. Consider each pair of~$\tilde{A}\subseteq A'$ and~$\tilde{B}\subseteq B'$
  such that~$|\tilde{A}|< d$ and~$|\tilde{B}|< d$. Determine whether $$((A'\cup
  \{v\}\cup \tilde{B})\setminus \tilde{A}, (B'\cup \tilde{A})\setminus \tilde{B})$$ is
  a~\abpartition{} and output it if this is the case. Otherwise, check whether~$$((A'\cup
  \tilde{B})\setminus \tilde{A}, (B\cup \{v\} \cup \tilde{A})\setminus \tilde{B})$$ is
  a~\abpartition{} and output it if this is the case. If both tests fail for all pairs of~$\tilde{A}$
  and~$\tilde{B}$, then output that~$G$ is not
  a~$(\Pi_A,\Pi_B)$-graph.

  The correctness follows from the fact that the algorithm considers both possibilities
  of placing~$v$ in $A$ or $B$, and all possibilities for moving vertices from~$A'$ to~$B$ and from~$B'$
  to~$A$. The running time is~$n^{2d-2}\cdot n^{\Oh(1)}=n^{2d + \Oh(1)}$, since we consider
  altogether at most~$2\cdot n^{2d-2}$ different bipartitions, and for each
  bipartition the membership of the two parts in~$\Pi_A$ and~$\Pi_B$ can be determined
  in polynomial time.
\end{proof}
By combining Theorem~\ref{thm:in-rec} and Lemma~\ref{lem:exclusive-rec}, we
immediately obtain the following.
\begin{reptheorem}{thm:exclusive-rec}
  \thmtxtexclusiverec
\end{reptheorem}
Two hereditary graph properties~$\Pi_A, \Pi_B$ are mutually~$d$-exclusive for some integer~$d$ if and only if~$\Pi_A$ excludes some edgeless
graph and~$\Pi_B$ excludes some clique, or vice versa: Clearly, the ``only if''-part holds. For the ``if''-part, we obtain the
following upper bounds on~$d$.
\begin{proposition}\label{fact:mutually-exclusive}Let~$\Pi_A$ and~$\Pi_B$ be hereditary graph properties.
  If~$\Pi_A$ excludes an edgeless graph of order~$s_a$ and~$\Pi_B$ excludes a complete graph of order~$s_b$,
  then $\Pi_{A}$ and~$\Pi_{B}$ are mutually~$R(s_a,s_b)$-exclusive.
\end{proposition}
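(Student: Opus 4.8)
The plan is to prove Proposition~\ref{fact:mutually-exclusive} directly from the definition of mutual $d$-exclusivity (Definition~\ref{def:mutually-exclusive}) together with the definition of the Ramsey number $R(s_a, s_b)$ and the hereditary nature of the two properties. Concretely, I must show that no graph on at least $R(s_a, s_b)$ vertices can simultaneously satisfy $\Pi_A$ and $\Pi_B$. So suppose for contradiction that $H$ is a graph with $|V(H)| \geq R(s_a, s_b)$ such that $G[V(H)] = H$ satisfies both $\Pi_A$ and $\Pi_B$.

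\medskip

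The heart of the argument is the defining property of the Ramsey number: since $H$ has order at least $R(s_a, s_b)$, it must contain either a clique on $s_b$ vertices or an independent set on $s_a$ vertices (here I align the roles so that $s_a$ governs the independent set and $s_b$ the clique). I would carry out the two cases as follows. In the first case, $H$ contains an independent set $I$ with $|I| = s_a$; since $\Pi_A$ is hereditary and $H$ satisfies $\Pi_A$, the induced subgraph $H[I]$, which is the edgeless graph on $s_a$ vertices, also satisfies $\Pi_A$. But this contradicts the hypothesis that $\Pi_A$ excludes the edgeless graph of order $s_a$. In the second case, $H$ contains a clique $K$ with $|K| = s_b$; by heredity of $\Pi_B$, the induced subgraph $H[K]$, which is the complete graph $K_{s_b}$, satisfies $\Pi_B$, contradicting the hypothesis that $\Pi_B$ excludes the complete graph of order $s_b$. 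Either way we reach a contradiction, so no such $H$ exists.

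\medskip

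The remaining step is to conclude that $\Pi_A$ and $\Pi_B$ are mutually $R(s_a, s_b)$-exclusive: this is immediate, since I have just shown there is no graph with at least $R(s_a, s_b)$ vertices fulfilling both properties, which is exactly the requirement of Definition~\ref{def:mutually-exclusive} with $d = R(s_a, s_b)$. I expect no real obstacle here; the only thing to be careful about is bookkeeping the two roles correctly, ensuring that the parameter controlling the \emph{independent} set is the one tied to $\Pi_A$ (which excludes an edgeless graph) and the parameter controlling the \emph{clique} is tied to $\Pi_B$ (which excludes a complete graph), so that the Ramsey dichotomy lines up with the two exclusion hypotheses rather than being swapped. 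Since $R(s_a, s_b)$ exists by Ramsey's theorem for all positive integers $s_a, s_b$, the bound is well-defined, which completes the proof.
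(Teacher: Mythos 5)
Your proof is correct and follows essentially the same route as the paper's: apply the Ramsey dichotomy to any graph of order at least $R(s_a,s_b)$ and use heredity of $\Pi_A$ and $\Pi_B$ to rule out each case. The only cosmetic difference is that you argue by contradiction while the paper argues directly (and explicitly invokes symmetry of Ramsey numbers to align the clique/independent-set roles, a bookkeeping point you also correctly flag).
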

\begin{proof}
  By the definition of Ramsey numbers, every graph of order~$R(s_a,s_b)$ contains either
  an edgeless subgraph of order at least~$s_a$, or a complete subgraph of
  order~$s_b$. Thus, every graph of order at least~$R(s_a,s_b)$ fulfilling~$\Pi_A$
  contains a complete graph of order~$s_b$ and thus does not fulfill~$\Pi_B$. Since Ramsey
  numbers are symmetric, every graph of order~$R(s_a,s_b)$ fulfilling~$\Pi_B$ contains an
  edgeless subgraph of order~$s_a$ and thus does not fulfill~$\Pi_A$.
\end{proof}
   If~$\Pi_A$ and~$\Pi_B$ fulfill the conditions of the above proposition and are recognizable in polynomial time, then~Theorem~\ref{thm:exclusive-rec} applies.
\begin{corollary}\label{cor:sparse-dense}
  Let~$\Pi_A$ and~$\Pi_B$ be hereditary graph properties such that
  membership of~$\Pi_A$ and~$\Pi_B$ can be decided in polynomial time.  If~$\Pi_A$
  excludes a fixed edgeless graph and~$\Pi_B$ excludes a fixed complete graph, then
  \textsc{$(\Pi_{A},\Pi_{B})$-Recognition} can be solved in polynomial time.
\end{corollary}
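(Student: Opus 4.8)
The plan is to obtain this as a direct consequence of Proposition~\ref{fact:mutually-exclusive} and Theorem~\ref{thm:exclusive-rec}, so the entire argument amounts to verifying that the hypotheses of those two results are met and that the resulting running time collapses to a polynomial. First I would unpack the assumptions into concrete constants. Since~$\Pi_A$ excludes a fixed edgeless graph, there is a constant~$s_a$ such that the edgeless graph on~$s_a$ vertices does not satisfy~$\Pi_A$; since~$\Pi_B$ excludes a fixed complete graph, there is a constant~$s_b$ such that the complete graph on~$s_b$ vertices does not satisfy~$\Pi_B$. The essential point is that \emph{fixed} means both~$s_a$ and~$s_b$ are absolute constants independent of the input graph~$G$.

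Next I would feed these values into Proposition~\ref{fact:mutually-exclusive}. Applying it with~$s_a$ and~$s_b$ yields that~$\Pi_A$ and~$\Pi_B$ are mutually~$d$-exclusive for~$d := R(s_a,s_b)$. Ramsey's theorem guarantees that this number exists, and—crucially—because~$s_a$ and~$s_b$ are constants, so is~$d$. At this stage all three prerequisites of Theorem~\ref{thm:exclusive-rec} are in place: $\Pi_A$ and~$\Pi_B$ are hereditary by assumption, their memberships are decidable in polynomial time by assumption, and they are mutually~$d$-exclusive as just established.

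Finally I would invoke Theorem~\ref{thm:exclusive-rec}, which then solves \textsc{$(\Pi_{A},\Pi_{B})$-Recognition} in~$n^{2d + \Oh(1)}$ time. Since~$d$ is a constant that does not depend on~$n$, the exponent~$2d+\Oh(1)$ is itself a constant, so this bound is a fixed polynomial in~$n$, which is exactly the claim. I do not expect any genuine obstacle here, as the corollary is a specialization of the preceding machinery; the only point requiring care is the very one that makes the statement nontrivial, namely confirming that~$d$ is a genuine constant rather than a quantity that could grow with the instance. This is precisely where the hypothesis that the excluded edgeless graph and the excluded complete graph are \emph{fixed} is used: without it, $d$ would play the role of a parameter and Theorem~\ref{thm:exclusive-rec} would only yield an~\XP{} bound rather than a polynomial one.
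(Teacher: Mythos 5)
Your proposal is correct and follows exactly the paper's route: it combines Proposition~\ref{fact:mutually-exclusive} (to get mutual $R(s_a,s_b)$-exclusivity) with Theorem~\ref{thm:exclusive-rec}, and correctly observes that the exponent $2d+\Oh(1)$ is a constant because the excluded edgeless graph and complete graph are fixed. Nothing is missing.
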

Observe that if~$\Pi_A$ and~$\Pi_B$ both contain arbitrarily large edgeless graphs, then~$\Pi_A$ and~$\Pi_B$ are not mutually exclusive. Similarly, $\Pi_A$ and~$\Pi_B$ are not mutually exclusive if
both contain arbitrarily large complete graphs. As a consequence, Corollary~\ref{cor:sparse-dense} summarizes the applications of Theorem~\ref{thm:exclusive-rec}. A natural question is whether in Theorem~\ref{thm:exclusive-rec} the dependency of the running time on $d$ can be improved. A substantial improvement to $f(d) \cdot n^{\Oh(1)}$, however, is unlikely as we show in the remainder of this section.

\paragraph{A Note on Vertex Deletion Problems}
Theorem~\ref{thm:exclusive-rec} has some applications for vertex deletion problems in
undirected graphs which we illustrate with an example below. On the negative side, this example also gives a graph property $\Pi_A$ and a family~$\mathcal{F}$ of graph properties such that $\Pi_A$ and each $\Pi_B\in \mathcal{F}$ are mutually $d(\Pi_B)$-exclusive and \textsc{$(\Pi_{A},\Pi_B)$-Recognition} is W[1]-hard with respect to $d(\Pi_B)$ if we consider~$\Pi_B$ as part of the input.

Consider the \textsc{Vertex Cover} problem where we are
given a graph~$G$ and want to determine whether~$G$ has a vertex cover~$S$ of size at
most~$k$, that is, whether at most~$k$ vertices of~$G$ can be deleted such that the
remaining graph is edgeless. This problem can be phrased as
a~\textsc{$(\Pi_{A},\Pi_{B})$-Recognition} problem: $\Pi_A$ is the class of edgeless
graphs and~$\Pi_B$ is the class of graphs of order at most~$k$.

\newcommand{\pDVC}{\textsc{Dense Vertex Cover}}
The standard parameter for \textsc{Vertex Cover} is
the solution size~$k$. Let us consider instead the smaller parameter~$\ell$,
the ``size of the maximum independent set over all
size-$k$ solutions~$S$''. That is, we modify the problem by adding an
additional integer~$\ell$ to the input and we want to decide whether there is a vertex cover $S$ of size at most~$k$ such that the size of a maximum independent set in~$G[S]$ is $\ell$. Call this problem \pDVC. Clearly, $\ell$ can be arbitrarily smaller than~$k$. A given instance $(G, k, \ell)$ of \pDVC\ can again be formulated in terms of~\textsc{$(\Pi_{A},\Pi_{B})$-Recognition}: As before, $\Pi_A$ is the set of edgeless graphs, and $\Pi_B$ is now the class of graphs of order at most~$k$ which have no independent set of size~$\ell+1$. Also, since~$\Pi_A$ excludes the complete graph on two vertices, $\Pi_{A}$ and $\Pi_{B}$ are mutually $(\ell + 1)$-exclusive. Hence, Theorem~\ref{thm:exclusive-rec}
implies for every fixed~$\ell$ a polynomial-time algorithm for \pDVC, in other words, an XP algorithm for the parameter~$\ell$.  

Ideally, we would like to replace this XP~algorithm by an FPT~algorithm. This, however, is unlikely, as the following proposition shows.
\begin{proposition}\label{prop:dense-vc-hard}
  \pDVC\ parameterized by~$\ell$ is W[1]-hard.
\end{proposition}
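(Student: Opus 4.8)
The plan is to give a parameterized reduction from \textsc{Multicolored Independent Set}: given a graph $H$ with a partition $V(H)=V_1\cup\dots\cup V_t$ of its vertices into $t$ independent sets, decide whether $H$ has an independent set containing exactly one vertex of each~$V_i$. This problem (equivalent to \textsc{Multicolored Clique}) is \W{1}-hard with respect to the number~$t$ of color classes, and in the reduction I will set the target parameter~$\ell$ equal to~$t$, so that the map is a genuine parameterized reduction.

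The central idea is to engineer the \pDVC\ instance so that the independence-number requirement on the cover is \emph{automatically met}, leaving the size requirement as the only real constraint. Concretely, I would build $G$ from $H$ by turning each color class~$V_i$ into a clique (adding all edges inside~$V_i$) and keeping every edge between distinct classes. Then any independent set of $G$ contains at most one vertex per class, and two vertices from different classes may be taken together exactly when they are nonadjacent in~$H$. Hence the independent sets of~$G$ are precisely the partial independent transversals of~$H$; in particular $\alpha(G)\le t$, with equality if and only if $H$ admits a full independent transversal, that is, if and only if the \textsc{Multicolored Independent Set} instance is a yes-instance. I then set $\ell:=t$ and $k:=|V(H)|-t$.

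To verify correctness, recall that $(G,k,\ell)$ is a yes-instance of \pDVC\ exactly when there is a partition $(A,B)$ with $A$ independent, $|B|\le k$, and $\alpha(G[B])\le\ell$; equivalently, an independent set~$A$ with $|A|\ge |V(G)|-k=t$ and $\alpha(G-A)\le t$. Since $G-A$ is an induced subgraph of~$G$, we have $\alpha(G-A)\le \alpha(G)\le t=\ell$, so the second condition holds for free. Therefore $(G,k,\ell)$ is a yes-instance if and only if $G$ has an independent set of size at least~$t$; and because $\alpha(G)\le t$, any such set has size exactly~$t$ and thus meets every class once, that is, it is an independent transversal of~$H$. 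This gives both directions of the equivalence, and since $\ell=t$ depends only on the source parameter, \W{1}-hardness of \pDVC\ with respect to~$\ell$ follows.

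I expect the crux to be conceptual rather than computational: one must recognize that the source should be a problem whose natural graph has independence number bounded by the parameter, so that the a~priori restrictive constraint $\alpha(G[B])\le\ell$ becomes vacuous and all the hardness is carried by the size bound $|A|\ge|V(G)|-k$. The routine points to check are that completing each class to a clique introduces no spurious large independent sets (it cannot, as each class then contributes at most one vertex) and that the choice $|V(G)|-k=t$ forces a size-$t$ independent set to select exactly one vertex from each of the $t$ classes.
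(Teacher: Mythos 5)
Your proof is correct and follows essentially the same route as the paper: the paper reduces from \textsc{Partitioned Independent Set} (each color class a clique), which is exactly the graph you obtain by completing the classes of a \textsc{Multicolored Independent Set} instance to cliques, and both arguments rest on the same key observation that $\alpha(G)\le t$ makes the independence-number constraint on the cover vacuous. The only cosmetic difference is that you invoke the known \W{1}-hardness of \textsc{Multicolored Independent Set} directly, whereas the paper re-derives it from \textsc{Independent Set} via the folklore vertex-copying gadget.
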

\begin{proof}[Proof]
  Consider the \textsc{Partitioned Independent Set} problem where we
  are given a graph~$G=(V,E)$ with a vertex partition~$V_1,\ldots,V_t$
  such that each part~$V_i$ induces a clique in~$G$ and the task is to decide
  whether~$G$ has an independent set of size~$t$. Equivalently, we may
  ask for a vertex cover of size~$n-t$, giving a trivial reduction from \textsc{Partitioned Independent Set} to \textsc{Dense Vertex Cover}. This reduction is parameter-preserving: the input graph is not changed by the reduction and thus~$\ell\le t$, because the maximum independent set size in~$G$ is at most~$t$.
  Hence, it suffices to establish that \textsc{Partitioned
    Independent Set} is W[1]-hard parameterized by
  the size~$t$ of the desired independent set.

  The W[1]-hardness of \textsc{Partitioned
    Independent Set} can be seen by a folklore reduction from \textsc{Independent Set} parameterized by the size of the independent set, which is well known to be W[1]-hard (see e.g.~\cite{DF13,CFK+15}); for the sake of completeness, we give a short description. Let $(G,k)$ be an instance of \textsc{Independent Set} with an $n$-vertex graph $G=(V,E)$ and integer $k$. Create $k$ copies $v_1,\ldots,v_k$ of each vertex $v \in V$, and let $V' = \{v_1,\ldots,v_k \mid v \in V\}$ and $V'_i = \{v_i \mid v \in V\}$. Let $G'$ be the graph with vertex set $V'$ where $u_i$ and $v_j$ are adjacent if and only if $i=j$, or $i \not= j$ and $u \in N_G[v]$. Observe that $G'[V'_i]$ is a clique for each $i$. Moreover, $G$ has an independent set of size $k$ if and only if $G'$ has an independent set of size $k$. Hence, by setting $t = k$, we complete the reduction.
\end{proof}

We have noted above that each instance of \pDVC\ can be solved by a
call to the algorithm in Theorem~\ref{thm:exclusive-rec} for two
mutually $(\ell +1)$-exclusive graph properties. Combining this with Proposition~\ref{prop:dense-vc-hard}, we obtain the following corollary.
\begin{corollary}
  Unless FPT${}={}$W[1], %
  the running time in Theorem~\ref{thm:exclusive-rec} cannot be improved to $f(d) \cdot n^{\Oh(1)}$.
\end{corollary}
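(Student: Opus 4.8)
The plan is to argue by contradiction, combining the reduction set up before Proposition~\ref{prop:dense-vc-hard} with the \W{1}-hardness proved there. Concretely, I would assume that the running time of Theorem~\ref{thm:exclusive-rec} could be improved to $f(d)\cdot n^{\Oh(1)}$, where $f$ is computable and the exponent of the polynomial factor is an absolute constant independent of~$d$, and then show that this assumption yields an \FPT~algorithm for \pDVC\ parameterized by~$\ell$.

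The derivation itself is short. Recall that an instance $(G,k,\ell)$ of \pDVC\ is equivalent to asking whether $G$ is a $(\Pi_A,\Pi_B)$-graph, for $\Pi_A$ the edgeless graphs and $\Pi_B$ the graphs of order at most~$k$ having no independent set of size $\ell+1$, and that these two properties are mutually $(\ell+1)$-exclusive. Thus, running the hypothetical improved algorithm with $d:=\ell+1$ would decide \pDVC\ in time $f(\ell+1)\cdot n^{\Oh(1)}=g(\ell)\cdot n^{\Oh(1)}$ for the computable function $g(\ell):=f(\ell+1)$, i.e.\ in \FPT~time for the parameter~$\ell$. Since Proposition~\ref{prop:dense-vc-hard} shows \pDVC\ to be \W{1}-hard for this parameter, we would conclude $\FPT=\W{1}$, which establishes the corollary.

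The step I expect to require the most care is checking that the polynomial factor $n^{\Oh(1)}$ really is independent of~$\ell$, so that the conclusion is a genuine \FPT~(rather than merely \XP) bound. The concern is the polynomial-time membership precondition of Theorem~\ref{thm:exclusive-rec}: deciding membership of an arbitrary graph in $\Pi_B$ entails ruling out an independent set of size $\ell+1$, which in general costs $n^{\Oh(\ell)}$ and would reintroduce~$\ell$ into the exponent, leaving only an \XP~bound. I would dispel this by inspecting the instances that actually arise. In the reduction from \textsc{Partitioned Independent Set} underlying Proposition~\ref{prop:dense-vc-hard}, the graph $G$ is left unchanged and its vertex set is partitioned into cliques $V_1,\dots,V_t$, so $\alpha(G)\le t$; taking the reduction with $\ell:=t$ gives $\alpha(G)\le\ell$. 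Consequently every induced subgraph of~$G$ already has independence number at most~$\ell$, the forbidden independent set of size $\ell+1$ can never occur, and membership in $\Pi_B$ collapses to the order check ``at most $k$ vertices'', decidable in polynomial time with a fixed exponent. Hence the hypothetical improvement indeed runs in \FPT~time on these instances, and the contradiction goes through.
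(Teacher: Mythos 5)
Your first two paragraphs are precisely the paper's argument: the corollary is derived by observing that every \pDVC{} instance is an instance of \textsc{$(\Pi_{A},\Pi_{B})$-Recognition} for hereditary, polynomial-time decidable, mutually $(\ell+1)$-exclusive properties, so that an $f(d)\cdot n^{\Oh(1)}$ algorithm would place \pDVC{} in \FPT{} for the parameter~$\ell$ and contradict Proposition~\ref{prop:dense-vc-hard} unless $\FPT=\W{1}$. The core of your proof is therefore correct and identical in approach to the paper, which in fact offers no justification beyond this combination.

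Your third paragraph addresses a subtlety the paper silently skips, and your instinct is right that it needs attention, but the resolution as written has a wrinkle. If you literally swap the $\Pi_B$-membership test for the bare order check, the test now decides the property $\Pi_B'$ of all graphs of order at most~$k$, and $\Pi_A$ and $\Pi_B'$ are only mutually $(k+1)$-exclusive (an edgeless graph on $k$ vertices satisfies both); a hypothetical $f(d)\cdot n^{\Oh(1)}$ algorithm fed this pair is only guaranteed to run in $f(k+1)\cdot n^{\Oh(1)}$ time, which says nothing about the parameter~$\ell$. To retain $d=\ell+1$ you must keep $\Pi_B$ itself and argue that the cheap test is indistinguishable from a correct one on every graph the algorithm actually evaluates. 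That does hold for the concrete algorithm of Lemma~\ref{lem:exclusive-rec} inside the inductive-recognition wrapper, since membership is only ever tested on induced subgraphs of~$G$ and $\alpha(G)\le\ell$ on the reduced instances; but it is an assumption about the internals of the hypothetical improved algorithm rather than a consequence of its specification, and you should say so explicitly. None of this affects the headline argument, which stands exactly as the paper gives it.
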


Nevertheless, Theorem~\ref{thm:exclusive-rec} and Corollary~\ref{cor:sparse-dense} imply XP~algorithms for many vertex deletion problems when the parameter is the size of the
independent set of the solution; examples of such applications are \textsc{Feedback Vertex Set} (where the remaining graph is a forest), \textsc{Bounded Degree Deletion} (where the remaining graph has degree at most~$r$ for some constant~$r$), and \textsc{Planar Vertex Deletion}.

\section{An FPT Algorithm for \textsc{Monopolar Recognition}}
\label{sec:mono}

\textsc{Monopolar Recognition} is the special case of~\textsc{$(\Pi_A,\Pi_B)$-Recognition}
where $\Pi_A$ is the set of cluster graphs and~$\Pi_B$ is the set of edgeless
graphs. Here, we consider \textsc{Monopolar Recognition} with the number~$k$ of clusters as a
parameter. This further restricts~$\Pi_A$: by bounding the parameter~$k$ we
constrain~$\Pi_A$ to be the set of cluster graphs with at most~$k$ clusters. Thus, the
graphs in~$\Pi_A$ cannot contain an edgeless graph of order~$k+1$ as subgraph. In other words, every graph in~$\Pi_A$ that has order at least~$k+1$ contains at least one edge. Altogether this implies the following.
\begin{fact}
  If~$\Pi_A$ is the set of cluster graphs with at most~$k$ clusters and~$\Pi_B$ is the set of edgeless graphs, then~$\Pi_A$ and~$\Pi_B$ are mutually~$(k+1)$-exclusive.
\end{fact}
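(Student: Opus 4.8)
The plan is to apply Definition~\ref{def:mutually-exclusive} directly: I must show that no graph on at least $k+1$ vertices simultaneously belongs to $\Pi_A$ and $\Pi_B$. So I would take an arbitrary graph $G$ satisfying both properties and bound its order. Since $G \in \Pi_B$, the graph $G$ is edgeless, so each connected component of $G$ is a single vertex. Viewing $G$ as a cluster graph, its clusters are therefore exactly its isolated vertices, and hence the number of clusters equals $|V(G)|$.

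Next, since $G \in \Pi_A$, the graph $G$ is a cluster graph with at most $k$ clusters. Combining this with the previous observation gives $|V(G)| \le k$. Thus every graph that satisfies both $\Pi_A$ and $\Pi_B$ has at most $k$ vertices, which is precisely the statement that $\Pi_A$ and $\Pi_B$ are mutually $(k+1)$-exclusive.

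There is essentially no obstacle here, as the argument is immediate from the two definitions. As an alternative route one can invoke Proposition~\ref{fact:mutually-exclusive}: the property $\Pi_A$ excludes the edgeless graph of order $k+1$ (which, as a cluster graph, would have $k+1 > k$ clusters), while $\Pi_B$ excludes the complete graph on $2$ vertices. The proposition then yields mutual $R(k+1,2)$-exclusivity, and since $R(r,2) = r$ this again gives the bound $k+1$.
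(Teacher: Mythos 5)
Your proof is correct and matches the paper's reasoning: the paper justifies this fact by observing that every cluster graph with at most $k$ clusters on at least $k+1$ vertices must contain an edge (a non-singleton cluster, by pigeonhole) and hence cannot be edgeless, which is the contrapositive of your count ``edgeless implies all clusters are singletons, so order equals the number of clusters, hence at most $k$.'' Your alternative route via Proposition~\ref{fact:mutually-exclusive} and $R(k+1,2)=k+1$ is also valid but unnecessary here.
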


Thus,~$\Pi_A$ and~$\Pi_B$ fulfill the conditions of Theorem~\ref{thm:exclusive-rec} for
each fixed~$k$ which implies an XP algorithm for
\textsc{Monopolar Recognition} parameterized by~$k$. In this section, we give a
linear-time FPT~algorithm for \textsc{Monopolar Recognition} parameterized by the number
of clusters~$k$.

Throughout, given a graph $G=(V,E)$ and a nonnegative integer $k$, we say that a monopolar partition~$(A,B)$ of~$G$ is
\emph{valid} if~$G[A]$ is a cluster graph with at most~$k$ clusters. Using Corollary~\ref{cor:in-rec}, it suffices to give a parameterized inductive recognizer for graphs with a valid monopolar partition. That is, we need to solve the following problem in time $f(k) \cdot \Delta$, where $f$ is some computable function and $\Delta$ is the maximum degree of~$G$:

\begin{quote}
  \textsc{Inductive Monopolar Recognition}\\
  \textbf{Input:} A graph $G=(V,E)$, a vertex~$v\in V$, and a valid monopolar partition~$(A',B')$ of~$G'=G-v$.\\
  \textbf{Question:} Does $G$ have a valid monopolar partition~$(A,B)$?
\end{quote}
In the following, we fix an instance of~\textsc{Inductive
  Monopolar Recognition} with a graph~$G=(V,E)$, a vertex~$v\in V$, and a valid
monopolar partition~$(A',B')$ of~$G'=G-v$.

\looseness=-1
To find a valid
monopolar partition~$(A,B)$ of~$G$, we try the
two possibilities of placing~$v$ in~$A$ or placing~$v$ in~$B$. More precisely,
in the first case, we start a search from the bipartition~$(A'\cup
\{v\},B')$, and in the second case, we start a search from the
bipartition~$(A',B' \cup \{v\})$. Neither of these two partitions
is necessarily a valid monopolar partition of~$G$. The search strategy
is to try to ``repair'' a candidate partition by moving few vertices from one
part of the partition to the other part. During this process, if a vertex is moved from one
part to the other, then it will never be moved back. To formalize this approach, we introduce the notion of constraints.

\begin{definition} \rm A \emph{constraint}~$\C=(\sA,\pA,\sB,\pB)$ is a four-partition of~$V$ such
  that~$\sA\subseteq A'$ and~$\sB\subseteq B'$. The vertices in~$\pA$ and~$\pB$ are called \emph{permanent} vertices of the
  constraint.
  A constraint $\C=(\sA,\pA,\sB,\pB)$ is \emph{fulfilled} by a vertex bipartition~$(A,B)$ of~$G$
   if $(A, B)$ is a valid monopolar partition of $G$ such that $\pA\subseteq A$ and $\pB\subseteq B$.
\end{definition}

\noindent The permanent vertices in~$\pA$ and~$\pB$ in the above definition
will correspond to those vertices that were moved during the
search from one part to the other part. The following observation is straightforward:

\begin{fact}\label{fact:init} Each valid monopolar partition~$(A,B)$ of~$G$
  fulfills either $(A',\{v\},B',\emptyset)$ or $(A',\emptyset,B',\{v\})$.
\end{fact}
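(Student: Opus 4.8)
The plan is to do a straightforward case distinction on which side of the given partition contains the new vertex~$v$. Fix any valid monopolar partition~$(A,B)$ of~$G$. Since~$(A,B)$ partitions~$V$ and~$v\in V$, exactly one of~$v\in A$ or~$v\in B$ holds, and these two cases will correspond precisely to the two constraints in the statement.

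First I would record that both four-tuples appearing in the statement are indeed constraints in the sense of the definition. In~$(A',\{v\},B',\emptyset)$ and in~$(A',\emptyset,B',\{v\})$ the starred parts are~$\sA=A'$ and~$\sB=B'$, which satisfy the required containments~$\sA\subseteq A'$ and~$\sB\subseteq B'$ trivially. Moreover, each of the two four-tuples partitions~$V$, because~$A'\cup B'=V\setminus\{v\}$ and the single permanent vertex~$v$ is added to exactly one of the two parts while the other permanent part is empty.

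For the main argument I would unfold the definition of \emph{fulfilled}: a constraint~$\C=(\sA,\pA,\sB,\pB)$ is fulfilled by~$(A,B)$ exactly when~$(A,B)$ is a valid monopolar partition satisfying~$\pA\subseteq A$ and~$\pB\subseteq B$. Validity is already assumed for~$(A,B)$, so only the containment of the permanent vertices must be verified. If~$v\in A$, then~$\{v\}\subseteq A$ and~$\emptyset\subseteq B$, so~$(A,B)$ fulfills~$(A',\{v\},B',\emptyset)$. Symmetrically, if~$v\in B$, then~$\emptyset\subseteq A$ and~$\{v\}\subseteq B$, so~$(A,B)$ fulfills~$(A',\emptyset,B',\{v\})$. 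Since these cases are exhaustive, the claim follows.

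There is essentially no obstacle here: the fact is purely definitional once one observes that every valid monopolar partition of~$G$ necessarily places~$v$ on one of the two sides. The only points that warrant care are matching the permanent-vertex containment condition to the correct constraint and noting that an empty permanent set imposes only a vacuous requirement.
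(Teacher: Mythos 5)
Your proof is correct and is exactly the straightforward case distinction (on whether $v\in A$ or $v\in B$) that the paper has in mind; the paper itself states this as a ``straightforward'' observation and omits the proof entirely. Your additional check that both tuples are genuine constraints (four-partitions of $V$ with $\sA\subseteq A'$ and $\sB\subseteq B'$) is a reasonable bit of extra care and does not change the substance.
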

We call the two constraints in Fact~\ref{fact:init} the \emph{initial constraints} of the
search. We solve~\textsc{Inductive Monopolar Recognition} by
giving a search-tree algorithm that determines for each of the two initial constraints whether there is a partition fulfilling it.
The root of the search tree is a dummy node that has two children, associated with the two initial constraints.
Each non-root node in the search tree
is associated with a constraint $\C$, and the algorithm
searches for a solution that fulfills $\C$.
To this end, the algorithm applies reduction and branching rules that find vertices that in \emph{every} valid monopolar partition~$(A,B)$
fulfilling $\C$ are in $\sA \cap B$ or $\sB \cap A$; that is, these vertices must `switch sides'.

Formally, a \emph{reduction rule} that is applied to a constraint~$\C$ associated with a node $\alpha$ in the search tree associates~$\alpha$ with a new constraint~$\C'$ or rejects $\C$; the reduction rule is
\emph{correct} either if $\C$ has a fulfilling partition if and only if~$\C'$
does, or if the rule rejects $\C$, then no valid monopolar partition of $G$ fulfills $\C$. A \emph{branching rule} applied to a constraint~$\C$ associated with a node $\alpha$ in the search tree produces
more than one child node of~$\alpha$, each associated with a constraint; the branching rule is
\emph{correct} if $\C$ has a fulfilling partition if and only if at
least one of the child nodes of~$\alpha$ is associated with a constraint $\C'$ that has
a fulfilling partition.

The algorithm first performs the reduction rules exhaustively, in order, and then performs the branching rules, in order. That is, Reduction Rule~$i$ may only be applied if  Reduction Rule~$i'$ for all $i' < i$ cannot be applied. In particular, after Reduction Rule~$i$ is applied, we start over and apply Reduction Rule~$1$, and so on. The same principle applies to the branching rules; moreover, branching rules are only applied if no reduction rule can be applied.

Let $\C=(\sA,\pA,\sB,\pB)$ be a constraint. We now describe the reduction rules. Throughout, recall from Fact~\ref{fact:p3} that cluster graphs contain no $P_3$ as an induced subgraph.
The first reduction rule identifies obvious cases in which a constraint cannot
be fulfilled.

\begin{rrule}\label{rr:per-verts-m} %
  If~$G[\pA]$ is not a cluster graph with at most $k$ clusters, or if~$G[\pB]$ is not an edgeless graph,
  then reject the current constraint.
\end{rrule}
\begin{proof}[Proof of correctness]
If~$G[\pA]$ is not a cluster graph with at most $k$ clusters, then there is no
  valid monopolar partition~$(A,B)$ satisfying~$\pA\subseteq
  A$. Similarly, there is no valid monopolar partition~$(A,B)$ satisfying~$\pB\subseteq B$ if~$G[\pB]$ is not an edgeless graph. \smartqed
\end{proof}
The second reduction rule finds vertices that must be moved from~$\sB$ to~$\pA$.

\begin{rrule}\label{rr:half-per-edge}
  If there is a vertex~$u\in \sB$ that has a neighbor in~$\pB$, then set $\pA \leftarrow \pA\cup \{u\}$ and $\sB \leftarrow \sB\setminus \{u\}$; that is, replace $\C$ with the constraint~$(\sA,\pA\cup \{u\},\sB\setminus \{u\},\pB)$.
\end{rrule}
\begin{proof}[Proof of correctness]
  For every partition~$(A,B)$ fulfilling~$\C$, $G[B]$~is an
  edgeless graph and~$\pB\subseteq B$. Hence,~$u\in A$.\smartqed
\end{proof}

The third reduction rule finds vertices that must be moved from~$\sA$ to~$\pB$.

\begin{rrule}\label{rr:2-per-p3}
  If there is a vertex~$u\in \sA$ and two vertices~$w,
  x\in \pA$ such that~$G[\{u,w,x\}]$ is a~$P_3$, set $\sA \leftarrow \sA\setminus \{u\}$ and $\pB \leftarrow \pB\cup \{u\}$.
\end{rrule}
\begin{proof}[Proof of correctness]%
  For every partition~$(A,B)$ fulfilling~$\C$, the graph~$G[A]$ is a cluster
  graph and~$\pA\subseteq A$. Hence,~$u\in B$. \smartqed
\end{proof}
The first branching rule identifies pairs of vertices from~$\sA$ such that at least one of them must be moved to~$\pB$ because they form a~$P_3$ with a vertex in~$\pA$.

\begin{brule}\label{br:1-per-p3}
  If there are two vertices~$u,w\in \sA$ and a vertex~$x\in \pA$ such that~$G[\{u,w,x\}]$ is a~$P_3$, then branch into two branches: one associated with the constraint~$(\sA\setminus\{u\},\pA,\sB,\pB\cup \{u\})$ and one associated with the constraint~$(\sA\setminus \{w\}, \pA, \sB, \pB\cup \{w\})$.
\end{brule}
\begin{proof}[Proof of correctness]%
  For every partition~$(A,B)$ fulfilling~$\C$,~$G[A]$ is a cluster
  graph and~$\pA\subseteq A$. Hence,~$u\in B$ or~$w\in B$. \smartqed
\end{proof}
It is important to observe that if none of the previous rules applies, then~$(\sA\cup \pA,\sB\cup\pB)$ is a monopolar partition (we prove this rigorously in Lemma~\ref{lem:reduced-instance}). However, $G[\sA\cup \pA]$ may consist of too many
clusters for this to be a \emph{valid} monopolar partition.
To check whether it is possible to reduce the number of clusters in $G[\sA\cup \pA]$, we apply a second branching rule that deals with singleton clusters in~$G[A']$.

\begin{brule}\label{br:singleton-cluster}
  If there is a vertex~$u\in \sA$ such that~$\{u\}$ is a cluster in~$G[A']$, then branch into two branches: the first is associated with the constraint~$(\sA\setminus \{u\},\pA\cup \{u\},\sB,\pB)$, and the second is associated with the constraint~$(\sA\setminus \{u\},\pA,\sB,\pB\cup \{u\})$.
\end{brule}
\begin{proof}[Proof of correctness]%
  For every partition~$(A,B)$ fulfilling~$\C$, we have either~$u\in A$ or~$u\in B$. \smartqed
\end{proof}
If no more rules apply to a constraint~$\C$, then we can determine whether~$\C$ can be fulfilled:

\begin{lemma}\label{lem:reduced-instance}
  Let $\C=(\sA,\pA,\sB,\pB)$ be a constraint to which
  \rrrefm{rr:per-verts-m}, \ref{rr:half-per-edge}, and \ref{rr:2-per-p3},
  and~\brrefm{br:1-per-p3} and \ref{br:singleton-cluster} do not apply. Then
  $(\sA\cup\pA,\sB\cup\pB)$ is a monopolar partition. Moreover, there
  is a valid monopolar partition~$(A,B)$ fulfilling~$\C$ if and only
  if $(\sA\cup\pA,\sB\cup\pB)$ is valid.
\end{lemma}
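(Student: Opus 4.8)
The plan is to establish the two assertions in turn: that $(\sA\cup\pA,\sB\cup\pB)$ is a monopolar partition follows directly from the exhaustiveness of the rules, while the validity equivalence rests on a cluster-counting argument. For the first assertion I would verify the two defining conditions. To see that $G[\sB\cup\pB]$ is edgeless, note that there are no edges inside $\sB$ (as $\sB\subseteq B'$ and $G[B']$ is edgeless), none inside $\pB$ (as \rrref{rr:per-verts-m} does not apply), and none between $\sB$ and $\pB$ (as \rrref{rr:half-per-edge} does not apply). To see that $G[\sA\cup\pA]$ is a cluster graph I would invoke Fact~\ref{fact:p3} and rule out every induced $P_3$ by how its three vertices split between $\sA$ and $\pA$: three in $\sA$ is impossible since $G[\sA]\subseteq G[A']$ is a cluster graph, three in $\pA$ is impossible since \rrref{rr:per-verts-m} does not apply, two in $\sA$ and one in $\pA$ is excluded by \brref{br:1-per-p3}, and one in $\sA$ and two in $\pA$ is excluded by \rrref{rr:2-per-p3}.

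For the equivalence, the backward direction is immediate: a valid $(\sA\cup\pA,\sB\cup\pB)$ is a valid monopolar partition and trivially satisfies $\pA\subseteq\sA\cup\pA$ and $\pB\subseteq\sB\cup\pB$, hence fulfills $\C$. The forward direction is where the work lies. Assume some valid monopolar partition $(A,B)$ fulfilling $\C$ exists; since the first assertion already gives that $G[\sA\cup\pA]$ is a cluster graph, it remains only to bound its number of clusters by $k$. The plan is to build an injection from the clusters (connected components) of $G[\sA\cup\pA]$ into the clusters of $G[A]$, which has at most $k$ of them. Concretely, for each component $Q$ of $G[\sA\cup\pA]$ I would select a representative $f(Q)\in Q\cap A$; then for distinct components $Q\neq Q'$ the vertices $f(Q),f(Q')$ lie in different components of $G[\sA\cup\pA]$ and are therefore non-adjacent in $G$, so as members of $A$ they fall into different clusters of the cluster graph $G[A]$, making $f$ injective.

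The crux, and the step I expect to be the main obstacle, is guaranteeing that each component $Q$ of $G[\sA\cup\pA]$ actually meets $A$, so that the representative $f(Q)$ exists; equivalently, no such component lies entirely in $B$. I would split into cases. If $Q$ contains a vertex of $\pA$, that vertex is in $A$ because $\pA\subseteq A$. If $Q\subseteq\sA$ with $|Q|\geq 2$, then $Q$ is a clique of size at least two and cannot be contained in the independent set $B$. The delicate case is a singleton component $Q=\{u\}$ with $u\in\sA$: here I would use that \brref{br:singleton-cluster} does not apply, so $u$ is not a singleton cluster of $G[A']$ and hence has a neighbor $y\in A'$. Because $Q=\{u\}$ is an entire component of $G[\sA\cup\pA]$, this neighbor cannot lie in $\sA\cup\pA$; and since $y\in A'$ rules out $y\in\sB$, the four-partition forces $y\in\pB\subseteq B$. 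As $G[B]$ is edgeless and $u$ is adjacent to $y$, we conclude $u\notin B$, i.e.\ $u\in A$. This closes the last case, so $f$ is well defined and the cluster bound follows; the only point demanding care is tracking, in this singleton case, that the external neighbor is forced into $\pB$ and therefore into $B$.
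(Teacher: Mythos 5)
Your proof is correct and follows essentially the same route as the paper: the same rule-by-rule case analysis for the first assertion, and, for the cluster bound, the same key observation that a cluster of $G[\sA\cup\pA]$ avoiding $A$ must be a singleton of $\sA$ whose neighbor in its $G[A']$-cluster is forced into $\pB\subseteq B$, contradicting that $B$ is independent. The only cosmetic differences are that you phrase the bound as a direct injection into the at most $k$ clusters of $G[A]$ where the paper argues by contradiction on a single cluster contained in $B$, and that you place that neighbor in $\pB$ via the four-partition structure rather than by tracking which rules moved vertices.
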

\begin{proof}
  First, we show that $(\sA\cup\pA,\sB\cup\pB)$ is a monopolar
  partition. There are no induced $P_3$'s in~$G[\sA\cup\pA]$,
  because~\rrrefm{rr:per-verts-m} and~\ref{rr:2-per-p3} and~\brref{br:1-per-p3} do
  not apply, and because there are no induced $P_3$'s in~$G$ containing
  three vertices from~$\sA\subseteq A'$. Similarly, there are no edges
  in~$G[\sB\cup\pB]$, because~\rrrefm{rr:per-verts-m} and~\ref{rr:half-per-edge} do not
  apply, and because there are no edges in~$G[B']$ and~$\sB\subseteq B'$.

  To show the second statement in the lemma, observe
  that, if~$(\sA\cup\pA,\sB\cup\pB)$ is valid, then~$\C$ is fulfilled by $(\sA\cup\pA,\sB\cup\pB)$.  %
It remains to show that, %
if~$(\sA\cup\pA,\sB\cup\pB)$ is not valid, then each monopolar partition $(A,B)$ of~$G$ fulfilling~$\C$ is not valid. For the sake of contradiction, assume that this is not the case and let~$(A,B)$ be a valid monopolar partition fulfilling~$\C$.  Since~$(\sA\cup\pA,\sB\cup\pB)$ is a monopolar partition of $G$ that is not valid, there are more than~$k$ clusters in $G[\sA\cup\pA]$. Thus, there is a cluster~$Q$ in~$G[\sA\cup\pA]$ such that $Q \subseteq B$. Note that~$|Q| = 1$, because $G[B]$ has no edges and $Q \subseteq B$. Because~$(A,B)$ fulfills~$\C$, $Q \cap \pA = \emptyset$ and thus $Q \subseteq \sA$. Hence, $Q$ is a subset of a cluster $Q'$ of $G[A']$, as~$Q\subseteq\sA\subseteq A'$.  However,~$|Q'| \geq 2$, because~\brref{br:singleton-cluster} does not apply even though $Q \subseteq \sA$. Hence, any rule that moved the vertices of $Q' \setminus Q$ was not~\brref{br:singleton-cluster}. Then the description of the other rules implies that $Q' \setminus Q \subseteq \pB$. Note that $\pB \subseteq B$, because~$(A,B)$ fulfills~$\C$. Hence,~$Q' \subseteq B$ and thus $G[B]$ contains an edge. %
Therefore,~$(A,B)$ is not a monopolar partition, a contradiction to our choice of~$(A,B)$.\smartqed
\end{proof}
The following lemmas will be used to upper bound the depth of the search tree, and the number of applications of each rule along each root-leaf path in this tree.
Herein a \emph{leaf} of the search tree is a node associated either with a constraint that~\rrref{rr:per-verts-m} rejects, or with a constraint to which no rule applies.

\begin{lemma}\label{lem:few-up}
   Along any root-leaf path in the search tree of the algorithm,~\rrref{rr:half-per-edge} is applied at most $k+1$ times.
\end{lemma}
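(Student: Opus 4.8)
The plan is to track the provenance of the vertices that \rrref{rr:half-per-edge} moves and to exploit the fact that they all originate in the independent set~$B'$. First I would observe that along any root-leaf path the set~$\sB$ can only lose vertices: the only rule that removes a vertex from~$\sB$ is \rrref{rr:half-per-edge} itself (which moves that vertex into~$\pA$), and none of the reduction or branching rules ever adds a vertex to~$\sB$. Since both initial constraints have~$\sB = B'$ and~$G[B']$ is edgeless, every vertex that ever lies in~$\sB$ belongs to~$B'$; in particular the collection of all vertices that \rrref{rr:half-per-edge} ever moves along the path is a subset of~$B'$ and hence an independent set in~$G$.

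Next I would use permanence. Each application of \rrref{rr:half-per-edge} moves such a vertex into~$\pA$, and permanent vertices are never moved back out of~$\pA$. Therefore, after~$t$ applications of \rrref{rr:half-per-edge} along the path, the set~$\pA$ contains an independent set of size~$t$ (the~$t$ moved vertices of~$B'$). The crux is then a counting argument against the cluster-graph constraint on~$\pA$: a cluster graph with at most~$k$ clusters has independence number at most~$k$, because an independent set meets each clique in at most one vertex. Suppose for contradiction that \rrref{rr:half-per-edge} is applied at least~$k+2$ times along the path, and consider the node~$\alpha$ reached immediately after the~$(k+1)$-th application. At~$\alpha$ the set~$\pA$ contains an independent set of size~$k+1$, so~$G[\pA]$ is not a cluster graph with at most~$k$ clusters.

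Finally I would invoke the rule ordering to derive the contradiction. Since reduction rules are applied in order with \rrref{rr:per-verts-m} first, and the algorithm restarts from \rrref{rr:per-verts-m} after every rule application, the constraint at~$\alpha$ is immediately rejected by \rrref{rr:per-verts-m}; hence~$\alpha$ is a leaf and no rule—in particular no further application of \rrref{rr:half-per-edge}—occurs below it, contradicting the assumed~$(k+2)$-th application. Thus \rrref{rr:half-per-edge} is applied at most~$k+1$ times along any root-leaf path. I expect the only delicate point to be verifying that~$\sB\subseteq B'$ is maintained throughout (so that the moved vertices genuinely form an independent set) and that the stated rule-ordering indeed forces the rejection at~$\alpha$ to be immediate; the independence-number bound for cluster graphs is the conceptual heart but is elementary.
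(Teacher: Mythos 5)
Your proof is correct and follows essentially the same route as the paper: both arguments observe that each application of \rrref{rr:half-per-edge} moves a vertex of the edgeless set~$B'$ into~$\pA$, so after $k+1$ applications $\pA$ contains an independent set of size $k+1$, whence \rrref{rr:per-verts-m} rejects the constraint before \rrref{rr:half-per-edge} can fire again. Your explicit checks that $\sB\subseteq B'$ is maintained and that the rule ordering forces immediate rejection are fine elaborations of points the paper leaves implicit.
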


\begin{proof}
  Let~$\C=(\sA,\pA,\sB,\pB)$ be a constraint obtained from an initial
  constraint via~$k+1$ applications of
  \rrref{rr:half-per-edge} and an arbitrary number of applications of~\rrrefm{rr:per-verts-m} and~\ref{rr:2-per-p3}, and~\brrefm{br:1-per-p3} and~\ref{br:singleton-cluster}.
  Each application of~\rrref{rr:half-per-edge} adds a vertex of~$B'$
  to~$\pA$. Since~$G[B']$ is edgeless, any monopolar
  partition~$(A,B)$ with~$\pA\subseteq A$ has at least~$k+1$ clusters
  in~$G[A]$ and, therefore, is not valid. \rrref{rr:per-verts-m} will then be applied before any further application of \rrref{rr:half-per-edge}, and the constraint $\C$ will be rejected. \smartqed
\end{proof}

\begin{lemma}\label{lem:few-down}
  Along any root-leaf path in the search tree of the algorithm,~\rrref{rr:2-per-p3} and~\brrefm{br:1-per-p3} and~\ref{br:singleton-cluster} are applied at most $k+1$ times in total.
\end{lemma}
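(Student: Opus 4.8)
The plan is to mirror the proof of Lemma~\ref{lem:few-up}: assume a constraint~$\C=(\sA,\pA,\sB,\pB)$ is obtained from an initial constraint by $k+1$ applications of \rrref{rr:2-per-p3} and \brrefm{br:1-per-p3} and~\ref{br:singleton-cluster} (interleaved with arbitrarily many applications of the remaining rules), and to show that \rrref{rr:per-verts-m} must then reject before any further such application can occur.

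First I would record the common effect of these three rules: each application deletes exactly one vertex from~$\sA$ and inserts it into either~$\pA$ or~$\pB$, and since a vertex that leaves~$\sA$ is never returned to it, the $k+1$ applications move $k+1$ pairwise distinct vertices. As $\sA\subseteq A'$ throughout, every moved vertex lies in~$A'$ and hence belongs to exactly one cluster of the cluster graph~$G[A']$. The destinations are constrained: only the first branch of~\brref{br:singleton-cluster} places a vertex into~$\pA$, and it does so only for a vertex forming a singleton cluster of~$G[A']$; every other application (namely \rrref{rr:2-per-p3}, \brref{br:1-per-p3}, and the second branch of~\brref{br:singleton-cluster}) places its vertex into~$\pB$.

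The core step is a pigeonhole argument. Because~$(A',B')$ is a valid monopolar partition,~$G[A']$ has at most~$k$ clusters. Hence among the $k+1$ distinct moved vertices, two of them, say~$u$ and~$u'$, originate from the same cluster~$Q$ of~$G[A']$. A singleton cluster contains only one vertex, so~$Q$ is not a singleton; consequently neither~$u$ nor~$u'$ was moved by the~$\pA$-branch of~\brref{br:singleton-cluster}, and both were therefore placed into~$\pB$. Since~$Q$ is a clique, $u$ and~$u'$ are adjacent and remain in~$\pB$, so~$G[\pB]$ contains an edge.

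Finally I would invoke the fixed order of the rules. The step I expect to require the most care is ensuring that this repeat is genuinely produced by the last application and that no earlier rejection has been overlooked: since the algorithm applies \rrref{rr:per-verts-m} before any branching rule or before~\rrref{rr:2-per-p3}, as soon as two adjacent vertices of a common cluster lie in~$\pB$ the current constraint is rejected. Thus, had a repeat appeared before the $(k+1)$-th application, \rrref{rr:per-verts-m} would already have rejected and the path could not have reached $k+1$ applications at all. Therefore the repeat is created precisely by the $(k+1)$-th application, which yields a constraint whose set~$\pB$ induces an edge; the subsequent invocation of \rrref{rr:per-verts-m} rejects it and makes the node a leaf. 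Hence no $(k+2)$-th application of the three rules occurs along the path, giving the claimed bound of $k+1$. \smartqed
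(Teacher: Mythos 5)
Your proof is correct and follows essentially the same approach as the paper's: pigeonhole the $k+1$ distinct vertices moved out of $\sA$ over the at most $k$ clusters of $G[A']$, observe that the two colliding vertices lie in a non-singleton cluster and hence were both sent to $\pB$ by \rrref{rr:2-per-p3} or \brref{br:1-per-p3}, so $G[\pB]$ acquires an edge and \rrref{rr:per-verts-m} rejects. The only (cosmetic) difference is that the paper first separates out the $k_s$ singleton clusters before applying the pigeonhole, whereas you apply it directly and note that the colliding cluster is automatically non-singleton, which is a slightly cleaner bookkeeping of the same idea.
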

\begin{proof}
  Let~$\C=(\sA,\pA,\sB,\pB)$ be a constraint obtained from an initial
  constraint via~$k+1$ applications
  of~\rrref{rr:2-per-p3} and \brrefm{br:1-per-p3} and~\ref{br:singleton-cluster}, and
  an arbitrary number of applications of the other rules.
  Let~$k_s$ denote the number of singleton clusters
  in~$G[A']$.
  Observe that each application of~\rrref{rr:2-per-p3} or~\brrefm{br:1-per-p3} and~\ref{br:singleton-cluster} makes a vertex of $\sA \subseteq A'$ permanent by placing it in $\pA$ or~$\pB$. By the description of all rules, a vertex will never be made permanent twice. Hence, out of the~$k+1$ applications
  of~\rrref{rr:2-per-p3} and \brrefm{br:1-per-p3} and~\ref{br:singleton-cluster}, at most $k_s$ make the vertex from a singleton cluster of~$G[A']$ permanent. Observe that~\brref{br:singleton-cluster}  cannot make a vertex from a nonsingleton cluster in~$G[A']$ permanent. Thus, \rrref{rr:2-per-p3} and~\brref{br:1-per-p3} make at least~$k-k_s+1$ vertices in the~$k-k_s$
  nonsingleton clusters of~$G[A']$ permanent.
Since $k-k_s+1 \geq 1$, this also implies that a nonsingleton cluster exists.
  By the pigeonhole principle, out of the $k-k_s + 1$ vertices that are made permanent by \rrref{rr:2-per-p3} and~\brref{br:1-per-p3}, two are from the same nonsingleton cluster in~$G[A']$. Since both~\rrref{rr:2-per-p3} and~\brref{br:1-per-p3} only move vertices from~$\sA$ to~$\pB$, it follows that~$\pB$ contains two
  adjacent vertices. Then the constraint $\C$ will be rejected by~\rrref{rr:per-verts-m}, which is applied before any further rule is applied. \smartqed
\end{proof}

\begin{theorem}\label{thm:rt-inductive-m}
  \textsc{Inductive Monopolar Recognition} can be solved in~$\Oh(2^k\cdot
  k^3\cdot \Delta)$ time, where~$\Delta$ is the maximum degree
  of~$G$.
\end{theorem}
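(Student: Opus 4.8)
The plan is to bound the running time by multiplying the number of search-tree nodes by the cost of processing each node, and to invoke Corollary~\ref{cor:in-rec} to obtain the claimed running time for the full recognition problem. First I would bound the number of leaves in the search tree. The only branching rules are \brref{br:1-per-p3} and \brref{br:singleton-cluster}, each of which is binary, so the number of leaves is at most $2^p$, where $p$ is the maximum number of branching steps along any root-leaf path. By Lemma~\ref{lem:few-down}, \rrref{rr:2-per-p3} together with \brref{br:1-per-p3} and \brref{br:singleton-cluster} are applied at most $k+1$ times in total along any root-leaf path; in particular the two branching rules contribute at most $k+1$ branchings, so the search tree has at most $2^{k+1} = \Oh(2^k)$ leaves. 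Since the tree is binary, the total number of nodes is also $\Oh(2^k)$.

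Next I would bound the length of a root-leaf path, i.e.\ the number of rule applications on it. By Lemma~\ref{lem:few-up}, \rrref{rr:half-per-edge} is applied at most $k+1$ times, and by Lemma~\ref{lem:few-down} the remaining nontrivial rules are applied at most $k+1$ times in total. Adding the applications of \rrref{rr:per-verts-m} (which either rejects, terminating the path, or is used as a check), the number of rule applications along any path is $\Oh(k)$. Hence along each of the $\Oh(2^k)$ root-leaf paths we apply $\Oh(k)$ rules, giving $\Oh(2^k \cdot k)$ rule applications over the whole tree.

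The main work is then to argue that each individual rule application can be carried out in $\Oh(k^2 \cdot \Delta)$ time, so that the total is $\Oh(2^k \cdot k \cdot k^2 \cdot \Delta) = \Oh(2^k \cdot k^3 \cdot \Delta)$. Here I would rely on the fact that moving vertices between parts only involves examining the neighborhoods of the few relevant vertices. The key structural observation is that the sets $\pA$ and $\pB$ of permanent vertices stay small: $|\pA \cap B'| \le k+1$ by Lemma~\ref{lem:few-up} (these are the vertices moved up by \rrref{rr:half-per-edge}) and the number of vertices in $\pB$ plus those moved into $\pA$ from $\sA$ is $\Oh(k)$ by Lemma~\ref{lem:few-down}. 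So $|\pA \cup \pB| = \Oh(k)$. To decide whether a rule is applicable I only need to scan, for each newly-made-permanent vertex, its neighborhood (of size at most $\Delta$) against the $\Oh(k)$ permanent vertices to detect a forbidden $P_3$ (for \rrref{rr:2-per-p3} and \brref{br:1-per-p3}), an edge inside $G[\pB]$ or a cluster-count violation (for \rrref{rr:per-verts-m}), or a neighbor in $\pB$ (for \rrref{rr:half-per-edge}); this costs $\Oh(k \cdot \Delta)$ per check, and checking all $\Oh(k)$ relevant vertices yields $\Oh(k^2 \cdot \Delta)$. I would maintain auxiliary data structures—cluster identifiers in $G[A']$, a marker array for membership in each of the four parts, and a counter for the number of clusters intersecting $A$—so that these scans, and the detection of singleton clusters for \brref{br:singleton-cluster} and the final validity check of Lemma~\ref{lem:reduced-instance}, are supported in the stated time.

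The step I expect to be the main obstacle is the careful bookkeeping showing that detecting the applicability of each rule—in particular finding an induced $P_3$ with two permanent endpoints for \rrref{rr:2-per-p3}, or one permanent endpoint for \brref{br:1-per-p3}—can be done within $\Oh(k^2 \cdot \Delta)$ without rescanning the whole graph at every node. The point is that it suffices to examine only the vertices that newly became permanent since the parent node, together with their adjacencies to the $\Oh(k)$ permanent vertices, rather than recomputing from scratch; amortizing this correctly over a path, and confirming via Fact~\ref{fact:p3} and Lemma~\ref{lem:reduced-instance} that no applicable rule is overlooked, is the delicate part. Once this per-node bound is established, combining it with the $\Oh(2^k)$ node count and the $\Oh(k)$ path length gives the $\Oh(2^k \cdot k^3 \cdot \Delta)$ bound for \textsc{Inductive Monopolar Recognition}, and feeding this into Corollary~\ref{cor:in-rec} (sorting vertices by nondecreasing degree so that $\Delta$ is replaced by $\deg(v_i)$ in iteration $i$) yields the overall $\Oh(2^k \cdot k^3 \cdot (n+m))$ running time asserted in Theorem~\ref{thm:main:mono}.
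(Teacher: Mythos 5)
Your proposal is correct and follows essentially the same route as the paper's proof: $\Oh(2^k)$ leaves via Lemma~\ref{lem:few-down}, $\Oh(k)$ rule applications per root-leaf path via Lemmas~\ref{lem:few-up} and~\ref{lem:few-down}, and an $\Oh(k^2\cdot\Delta)$ cost per applicability test (dominated by the pairwise check over the $\Oh(k)$ vertices of $\pA$ for \rrref{rr:2-per-p3}), combined with Lemma~\ref{lem:reduced-instance} for the exhausted leaves and Corollary~\ref{cor:in-rec} for the overall bound. The only difference is cosmetic: you test applicability incrementally by examining newly permanent vertices, whereas the paper simply redoes the full check at each rule application, using $|\pA|,|\pB|=\Oh(k)$ and $|\sA|\le k\cdot\Delta$ to stay within the same $\Oh(k^2\cdot\Delta)$ bound.
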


\begin{proof}
We call a leaf of the search tree associated with a constraint to which no rule applies an \emph{exhausted leaf}.
By Lemma~\ref{lem:reduced-instance} and the correctness of the rules,~$G$ has a valid monopolar
partition if and only if for at least one exhausted leaf node, the partition $(\sA\cup\pA,\sB\cup\pB)$, induced by the constraint~$\C$ associated with that node, is a valid monopolar partition.  Hence, if the search tree has an exhausted leaf for which the partition $(\sA\cup\pA,\sB\cup\pB)$, induced by the constraint~$\C$ associated with that node, is a valid monopolar partition, the algorithm answers `yes'; otherwise, it answers `no'.
Therefore, the described search-tree algorithm correctly decides an instance of \textsc{Inductive Monopolar Recognition}.

To upper bound the running time, let ${\cal T}$ denote the search tree of the algorithm. By Lemma~\ref{lem:few-down}, \brrefm{br:1-per-p3} and~\ref{br:singleton-cluster} are applied at most $k+1$ times in total along any root-leaf path in ${\cal T}$. It follows that the depth of ${\cal T}$ is at most $k+2$. As each of the branching rules is a two-way branch, ${\cal T}$ is a binary tree, and thus the number of leaves in ${\cal T}$ is $\Oh(2^k)$.

\looseness=-1
The running time along any root-leaf path in ${\cal T}$ is dominated by the overall time taken along the path to test the applicability of the reduction and branching rules, and to apply them.
    By Lemma~\ref{lem:few-up} and Lemma~\ref{lem:few-down}, along any root-leaf path in ${\cal T}$ the total number of applications of~\rrrefm{rr:half-per-edge} and~\ref{rr:2-per-p3} and~\brrefm{br:1-per-p3} and~\ref{br:singleton-cluster} is $\Oh(k)$. \rrref{rr:per-verts-m} is applied once before the application of each of the aforementioned rules. It follows that the total number of applications of all rules along any root-leaf path in ${\cal T}$ is $\Oh(k)$. Moreover, ${\cal T}$ has $\Oh(2^k)$ leaves as argued before. Therefore, we test for the applicability of the rules and apply them, or use the check of Lemma~\ref{lem:reduced-instance}, at most $\Oh(2^k\cdot k)$ times.

We now upper bound the time to test the applicability of the rules and to apply them by $\Oh(k^2 \cdot \Delta)$.
   Let~$\C=(\sA,\pA,\sB,\pB)$ be a constraint associated with a node in ${\cal T}$. Observe that each cluster in~$G[\sA]$ has
  size~$\Oh(\Delta)$. Since~$G[\sA]$ has at most~$k$ clusters, this
  implies that~$|\sA|\leq k\cdot \Delta$. Thus, in~$\Oh(k\cdot \Delta)$ time,
  we can compute a list of all clusters in~$G[\sA]$ and the size
  of each cluster. The same holds for $G[A']$. Observe that we can always check in~$\Oh(1)$ time, for a given
  vertex~$v$, whether $v$ is contained in~$A'$,~$\sA$,~$\pA$,~$\sB$, or~$\pB$ and, in case~$v$ is contained
  in~$A'$ or~$\sA$, we can find the index and the size of the cluster that contains~$v$.
  Moreover, by Lemma~\ref{lem:few-up} and~\ref{lem:few-down}, we can assume that~$|\pA|=\Oh(k)$, and
  by Lemma~\ref{lem:few-down}, we can assume that~$|\pB|= \Oh(k)$.

  To test the applicability of~\rrrefm{rr:per-verts-m} and~\ref{rr:half-per-edge}, we
  check whether $G[\pA]$ is a cluster graph with at most $k$ clusters, whether~$G[\pB]$ is edgeless, and whether there is an edge with one endpoint
  in~$\pB$ and the other endpoint in~$\sB$. This can be done in~$\Oh(k\cdot
  \Delta)$ time since~$|\pA|=\Oh(k)$,~$|\pB|=\Oh(k)$, and the maximum degree is~$\Delta$.

  To test the applicability of~\rrref{rr:2-per-p3}, we consider each
  pair~$v,w$ of vertices in~$\pA$. If~$v$ and~$w$ are adjacent, then
  in~$\Oh(\Delta)$ time we can check whether there is a vertex~$u\in \sA$
  such that~$u$ is adjacent to exactly one of~$v$ and~$w$. If~$v$
  and~$w$ are not adjacent, then in~$\Oh(\Delta)$ time we can check whether
  they have a common neighbor in~$\sA$. If neither condition applies
  to any pair~$v,w$, then~\rrref{rr:2-per-p3} does not apply. Overall,
  this test takes~$\Oh(k^2\cdot \Delta)$ time.

  To test the applicability of~\brref{br:1-per-p3}, we can check for each
  vertex $v$ of the at most~$k$ vertices of~$\pA$ in~$\Oh(\Delta)$ time whether
  $v$ has neighbors in two different clusters of~$\sA$, or whether there are two vertices $u, w$ in the same cluster of $\sA$ such that $v$ is adjacent to $u$ but not adjacent to~$w$. If one of the two
  cases applies to some vertex~$v\in \pA$, then~\brref{br:1-per-p3}
  applies to~$v$. Otherwise, there is no~$P_3$ containing exactly one
  vertex from~$\pA$ and exactly two vertices from~$\sA$, and~\brref{br:1-per-p3} does not apply. Hence, the applicability of~\brref{br:1-per-p3} can be tested in $\Oh(k \cdot \Delta)$ time.

  To test the applicability of~\brref{br:singleton-cluster}, we
  can check in~$\Oh(k)$ time, whether~$G[\sA]$ contains a singleton cluster
  that is also a singleton cluster of~$G[A']$.

  All rules can trivially be applied in $\Oh(1)$ time if they were found to be applicable.
Hence, the running time to test and apply any of the rules is ~$\Oh(k^2\cdot \Delta)$.

  Finally, if none of the rules applies, then we can check in~$\Oh(k\cdot
  \Delta)$ time whether the number of clusters in~$G[\sA\cup \pA]$ is
  at most~$k$. Hence, the algorithm runs in $\Oh(2^k\cdot
  k^3\cdot \Delta)$ time in total.  \smartqed
\end{proof}
Given the above theorem, Corollary~\ref{cor:in-rec} immediately implies Theorem~\ref{thm:main:mono}, which we restate below:

\begin{reptheorem}{thm:main:mono}
  \thmtxtmainmono
\end{reptheorem}

\section{Generalizations of the algorithm for \textsc{Monopolar Recognition}}\label{sec:gen}
In this section, we present two general \FPT\ algorithms for a
range of cases of~\textsc{$(\Pi_{A},\Pi_{B})$-Recognition} in which
$\Pi_A$ and $\Pi_B$ are characterized by a finite set of forbidden
induced subgraphs. We achieve this by adapting the algorithm of
Section~\ref{sec:mono}, meaning that the obtained algorithms rely on the inductive recognition
framework. Therefore, the main step is to solve \textsc{Inductive
  $(\Pi_{A},\Pi_{B})$-Recognition}, where we are given a graph~$G$ with a distinguished
vertex~$v$, and a~\abpartition{}~$(A',B')$ of~$G-v$, and we are asked to determine whether~$G$ has
a~\abpartition{}~$(A,B)$. To solve \textsc{Inductive $(\Pi_{A},\Pi_{B})$-Recognition}, we
consider again constraints~$\C$ of the type~$(\sA,\pA,\sB,\pB)$. That is, $\C$ is a four-partition of the vertex set such that $\sA \subseteq A'$ and $\sB \subseteq B'$, and $\pA$ and $\pB$ represent the permanent vertices, which may not be moved between $A$ and $B$ anymore. We start with the two initial
constraints~$(A',\{v\},B',\emptyset)$ and~$(A',\emptyset,B',\{v\})$, and recursively search
for solutions fulfilling one of the two constraints, building a search tree whose nodes correspond to constraints. As in Section~\ref{sec:mono}, a \abpartition{}~$(A, B)$ of~$G$ fulfills a constraint~$\C$  if $\pA \subseteq A$ and $\pB \subseteq B$.

The first step towards designing both algorithms is to generalize several reduction and
branching rules of the algorithm for \textsc{Monopolar Recognition}.  The generalization
of~\rrref{rr:per-verts-m} is as follows.
\begin{rrule}\label{rr:fsg}
  If~$G[\pA]$ does not fulfill~$\Pi_A$ or if~$G[\pB]$ does not
  fulfill~$\Pi_B$, then reject the current constraint.
\end{rrule}
\begin{proof}[Proof of correctness]%
  If~$G[\pA]\notin \Pi_A$, then since~$\Pi_A$ is hereditary, there is no~$A$ such that~$\pA\subseteq A$
  and~$G[A]\in \Pi_A$. Similarly, if~$G[\pB]\notin \Pi_B$, then there is no~$B$ such that~$\pB\subseteq B$ and~$G[B]\in \Pi_B$. For
  any~\abpartition{} $(A,B)$ fulfilling~$\C$, we have, however,~$\pA\subseteq A$
  and~$\pB\subseteq B$. Thus, no such~\abpartition{} exists.
\end{proof}

The other reduction and branching rules are aimed at destroying forbidden induced subgraphs in
the candidate vertex set~$\sA \cup \pA$ for~$A$ and in the candidate vertex set~$\sB \cup \pB$ for~$B$ by moving
vertices between~$\sA$ and~$\sB$. To destroy the forbidden induced subgraphs of $\Pi_A$ in~$\sA \cup \pA$, we used~\rrref{rr:2-per-p3} and~\brref{br:1-per-p3} in Section~\ref{sec:mono}. A generalized variant of these rules is as follows.
\begin{brule}\label{br:fsg-a}
  If there is a vertex set~$\tilde{A}\subseteq \sA \cup \pA$ such that~$G[\tilde{A}]$ is a minimal forbidden induced subgraph of~$\Pi_A$, then for each~$u\in \tilde{A}\setminus \pA$ branch into a branch associated with the constraint~$(\sA\setminus\{u\},\pA,\sB,\pB\cup \{u\})$.
\end{brule}
\begin{proof}[Proof of correctness]%
  Suppose that~$(A,B)$ is a~\abpartition{} of $G$ fulfilling~$\C$. Since~$G[\tilde{A}]$ does not
  fulfill~$\Pi_A$, there is a vertex~$u\in \tilde{A}$ such that~$u\in B$. Moreover,
  since~$(A,B)$ fulfills~$\C$, we have~$\pA\subseteq A$, and hence~$\pA\cap
  B=\emptyset$. Therefore,~$u\notin \pA$. Consequently, in the branch of~\brref{br:fsg-a} which is
  associated with the constraint~$(\sA\setminus\{u\},\pA,\sB,\pB\cup \{u\})$, this
  constraint is fulfilled by~$(A,B)$ since~$u\in B$.
\end{proof}

In the case of \textsc{Monopolar Recognition}, for subgraphs that do not fulfill~$\Pi_B$, it was sufficient to use~\rrref{rr:half-per-edge}. This can be generalized to the following branching rule;
the correctness proof is analogous to that of \brref{br:fsg-a} and omitted.

\begin{brule}\label{br:fsg-b}
  If there is a vertex set~$\tilde{B}\subseteq \sB \cup \pB$ such that $G[\tilde{B}]$ is a minimal forbidden induced subgraph of~$\Pi_B$, then for each~$u\in \tilde{B}\setminus \pB$ branch into a branch associated with the constraint~$(\sA,\pA\cup \{u\},\sB\setminus\{u\},\pB)$.
\end{brule}

In the following two subsections we use the above reduction and branching rules and some specialized rule to give the promised algorithms.

\subsection{Cluster Graphs and Graphs Excluding Large Cliques}
\label{sec:k-cluster-large-clique}
 For monopolar graphs, $\Pi_A$ is the family of cluster graphs with at most~$k$ cliques, and~$\Pi_B$
is the property of being edgeless. We now consider the more general case where $\Pi_B$ excludes some clique and
has a characterization by minimal forbidden induced subgraphs, each of order at most~$r$ (and $\Pi_A$ remains the family of all cluster graphs with at most $k$ cliques).

The algorithm uses~\rrref{rr:fsg},~\rrref{rr:2-per-p3},~\brref{br:1-per-p3}, and~\brref{br:fsg-b}. (It does not use \brref{br:fsg-a} which is used the next subsection.)
Note that applying~\rrref{rr:2-per-p3} and~\brref{br:1-per-p3} is
almost equivalent to applying~\brref{br:fsg-a}. The difference is that
we do not branch on forbidden induced subgraphs that are an edgeless
graph on~$k+1$ vertices. This improves the efficiency of the resulting
algorithm. The following fact is not hard to prove.

\begin{fact}
  Let $\C = (\sA,\pA,\sB,\pB)$ be such that none of
  \rrref{rr:fsg},~\rrref{rr:2-per-p3},~\brref{br:1-per-p3},
  and~\brref{br:fsg-b} apply. Then, $G[\sA\cup \pA]$ is a cluster graph
  and~$G[\sB\cup \pB]$ satisfies~$\Pi_B$.
\end{fact}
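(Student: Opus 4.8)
The plan is to prove the two conclusions separately, each by contraposition: assuming a conclusion fails, I exhibit an applicable rule, contradicting the hypothesis.

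For the first conclusion, that $G[\sA\cup\pA]$ is a cluster graph, I would invoke Fact~\ref{fact:p3} and show that $G[\sA\cup\pA]$ is $P_3$-free. Suppose for contradiction that $\{a,b,c\}$ induces a $P_3$ in $G[\sA\cup\pA]$, and case on how many of these three vertices lie in $\pA$. If all three lie in $\pA$, then $G[\pA]$ contains an induced $P_3$ and hence is not a cluster graph, so $G[\pA]\notin\Pi_A$ and \rrref{rr:fsg} applies. If exactly two lie in $\pA$, say $w,x\in\pA$, and one lies in $\sA$, say $u$, then $G[\{u,w,x\}]$ is a $P_3$ with one vertex in $\sA$ and two in $\pA$, so \rrref{rr:2-per-p3} applies. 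If exactly one lies in $\pA$ and two lie in $\sA$, then \brref{br:1-per-p3} applies by the same token. Finally, if none lies in $\pA$, then all three lie in $\sA\subseteq A'$; but $(A',B')$ is a~\abpartition{}, so $G[A']\in\Pi_A$ is a cluster graph, and its induced subgraph $G[\{a,b,c\}]$ cannot be a $P_3$, a contradiction. In every case we contradict the assumption that no rule applies, so $G[\sA\cup\pA]$ is $P_3$-free and therefore a cluster graph.

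For the second conclusion, that $G[\sB\cup\pB]$ satisfies $\Pi_B$, I would use that $\Pi_B$ is hereditary and characterized by a finite family of minimal forbidden induced subgraphs, so it suffices to show that $G[\sB\cup\pB]$ contains no such subgraph. Suppose some $\tilde{B}\subseteq\sB\cup\pB$ induces a minimal forbidden induced subgraph of $\Pi_B$. I split on whether $\tilde{B}$ meets $\sB$. If $\tilde{B}\setminus\pB\neq\emptyset$, then \brref{br:fsg-b} is applicable; otherwise $\tilde{B}\subseteq\pB$, so $G[\pB]$ already contains a forbidden induced subgraph and hence $G[\pB]\notin\Pi_B$, whence \rrref{rr:fsg} applies. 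Either way a rule applies, contradicting the hypothesis.

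The step requiring the most care is the edge case $\tilde{B}\subseteq\pB$ in the second conclusion: \brref{br:fsg-b} produces no branch when $\tilde{B}\setminus\pB=\emptyset$, so one must argue separately that this situation is caught instead by \rrref{rr:fsg}. The rest is the routine but essential bookkeeping of the four-way case analysis for the $P_3$, where the only case not directly resolved by a rule, namely all three vertices lying in $\sA$, is handled by recalling that $\sA\subseteq A'$ and that $G[A']\in\Pi_A$ is already a cluster graph.
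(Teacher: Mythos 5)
Your proof is correct and follows essentially the same route as the paper, which merely sketches this argument in one sentence: $P_3$-freeness of $G[\sA\cup\pA]$ via a case split on how many of the three vertices are permanent (using $\sA\subseteq A'$ for the all-nonpermanent case), and inapplicability of \rrref{rr:fsg} and \brref{br:fsg-b} for the $\Pi_B$ part. Your explicit handling of the edge case $\tilde{B}\subseteq\pB$ is a welcome bit of extra care that the paper leaves implicit.
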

\noindent That $G[\sA \cup \pA]$ is a cluster graph can be seen by inapplicability of \rrref{rr:fsg},~\rrref{rr:2-per-p3},~\brref{br:1-per-p3} and the fact that none of the rules puts any vertex into~$\sA$ which is initially a subset of~$A'$ of the \abpartition~$(A', B')$. That $G[\sB \cup \pB]$ satisfies~$\Pi_B$ follows from the inapplicability of \rrref{rr:fsg} and \brref{br:fsg-b}.

To obtain a \abpartition~$(\sA \cup \pA, \sB \cup \pB)$, it remains to ensure that $G[\sA \cup \pA]$ contains at most~$k$ clusters. If~$G[\sA\cup \pA]$ has more
than~$k$ clusters, then some vertex has to be moved from~$\sA$
to~$\pB$. This is done in the following branching rule.

\begin{brule}\label{br:too-many-clusters}
  If~$G[\sA\cup \pA]$ is a cluster graph with more than~$k$ clusters, then let $u\in \sA$ be a
  vertex contained in a cluster of~$G[\sA\cup \pA]$, such that this cluster contains no
  vertices from~$\pA$ (such a cluster must exist by~\rrref{rr:fsg}). Branch into two branches: one associated with the
  constraint~$(\sA\setminus\{u\},\pA\cup \{u\},\sB,\pB)$ and one associated
  with the constraint~$(\sA\setminus \{u\}, \pA, \sB, \pB\cup \{u\})$.
\end{brule}
The rule is trivially correct since~$u$ is either contained in~$A$ or in~$B$ for
any~\abpartition~$(A,B)$ fulfilling~$\C$. Now if none of the rules
applies, then we have found a solution.
\begin{fact}\label{fact:found-sol}
  Let $\C=(\sA,\pA,\sB,\pB)$ be such that~\rrref{rr:fsg},~\rrref{rr:2-per-p3}, \brref{br:1-per-p3}, \brref{br:fsg-b}, and \brref{br:too-many-clusters} do not apply to $\C$.
  Then $(\sA\cup\pA,\sB\cup\pB)$ is a~\abpartition.
\end{fact}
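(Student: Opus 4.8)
The plan is to verify the two defining conditions of a \abpartition{} for the pair $(\sA\cup\pA,\sB\cup\pB)$ separately—that is, $G[\sA\cup\pA]\in\Pi_A$ and $G[\sB\cup\pB]\in\Pi_B$—and to observe that almost all of the work has already been carried out by the preceding fact.

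First I would invoke that preceding fact. Since \rrref{rr:fsg}, \rrref{rr:2-per-p3}, \brref{br:1-per-p3}, and \brref{br:fsg-b} are, by hypothesis, among the rules that do not apply to $\C$, the preceding fact already guarantees that $G[\sA\cup\pA]$ is a cluster graph and that $G[\sB\cup\pB]$ satisfies $\Pi_B$. This immediately settles the second membership condition: $G[\sB\cup\pB]\in\Pi_B$.

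It then remains to upgrade the statement ``$G[\sA\cup\pA]$ is a cluster graph'' to ``$G[\sA\cup\pA]\in\Pi_A$'', i.e., to a cluster graph with at most $k$ clusters. Here I would use that \brref{br:too-many-clusters} does not apply. Recall that \brref{br:too-many-clusters} is triggered precisely when $G[\sA\cup\pA]$ is a cluster graph with strictly more than $k$ clusters. Since we already know $G[\sA\cup\pA]$ is a cluster graph and the rule is not applicable, the number of clusters must be at most $k$, and hence $G[\sA\cup\pA]\in\Pi_A$. Combining the two membership statements yields that $(\sA\cup\pA,\sB\cup\pB)$ is a \abpartition.

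There is no genuine obstacle here; the statement is essentially a bookkeeping corollary of the preceding fact together with the stopping condition encoded in \brref{br:too-many-clusters}. The only point that warrants a sentence of care is confirming that $(\sA\cup\pA,\sB\cup\pB)$ is actually a partition of $V$, but this is immediate: by definition of a constraint, $\C=(\sA,\pA,\sB,\pB)$ is a four-partition of $V$, so $\sA\cup\pA$ and $\sB\cup\pB$ are disjoint and together cover $V$.
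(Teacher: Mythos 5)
Your proof is correct and takes essentially the same route as the paper, which states this as a Fact justified only by the surrounding prose: the preceding fact yields that $G[\sA\cup\pA]$ is a cluster graph and $G[\sB\cup\pB]$ satisfies $\Pi_B$, and the inapplicability of \brref{br:too-many-clusters} bounds the number of clusters of $G[\sA\cup\pA]$ by $k$, giving $G[\sA\cup\pA]\in\Pi_A$. Your extra remark that the four sets of the constraint partition $V$ (so $(\sA\cup\pA,\sB\cup\pB)$ is indeed a bipartition) is a harmless bit of added care.
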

To bound the running time, in particular, the number of applications of the branching and
reduction rules, we make use of the fact that the properties~$\Pi_A$ and $\Pi_B$ are mutually $d$-exclusive for some integer~$d$.
\begin{lemma}\label{lem:mutually-exclusive-monopolar-gen}
  Let~$\Pi_A$ be the set of cluster graphs with at
  most~$k$ cliques, and let~$\Pi_B$ be any hereditary graph property that excludes the complete
  graph on~$s$ vertices as an induced subgraph. Then,~$\Pi_A$ and~$\Pi_B$ are mutually~$((s-1)\cdot k+1)$-exclusive.
\end{lemma}
\begin{proof}
  Let~$G$ be a graph of order at least~$(s-1)\cdot k+1$ that fulfills~$\Pi_A$, that is,~$G$ is
  a cluster graph with at most~$k$ clusters. By the pigeonhole principle, one of these
  clusters has at least~$s$ vertices. Therefore, $G$ contains a clique on~$s$
  vertices. Thus,~$G$ does not fulfill~$\Pi_B$.
\end{proof}
We can now conclude with the complete algorithm and its running-time analysis.
\begin{theorem}\label{thm:generalization}
  Let~$\Pi_A$ be the set of all cluster graphs with at most~$k$ cliques, and let~$\Pi_B$ be any hereditary graph property such that
  \begin{itemize}
  \item $\Pi_B$ can be characterized by forbidden induced subgraphs, each of order at most~$r$; and
  \item for some~$s\le r$, the complete graph on~$s$ vertices is a
    forbidden induced subgraph of~$\Pi_B$.
  \end{itemize}
  Then \textsc{$(\Pi_A,\Pi_B)$-Recognition} can be solved in~$2^{s\cdot k}\cdot (r-1)^{(s-1)\cdot k}\cdot n^{\Oh(1)}$
  time.
\end{theorem}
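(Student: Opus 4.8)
The plan is to invoke the inductive recognition framework of Corollary~\ref{cor:in-rec-k}: since $\Pi_A$ (cluster graphs with at most~$k$ cliques) and $\Pi_B$ are both hereditary, it suffices to build a parameterized inductive recognizer, i.e.\ to solve \textsc{Inductive $(\Pi_A,\Pi_B)$-Recognition} within the claimed bound, and the $n$ iterations then inherit the same asymptotic running time. I would run the search-tree algorithm that starts from the two initial constraints $(A',\{v\},B',\emptyset)$ and $(A',\emptyset,B',\{v\})$, applies \rrref{rr:fsg} and \rrref{rr:2-per-p3} exhaustively, and then branches with \brref{br:1-per-p3}, \brref{br:fsg-b}, and finally \brref{br:too-many-clusters}. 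Correctness is immediate from the correctness of the individual rules together with Fact~\ref{fact:found-sol}: at any leaf to which no rule applies, $(\sA\cup\pA,\sB\cup\pB)$ is a \abpartition{}, and conversely every \abpartition{} of~$G$ fulfills one initial constraint and survives the branching. Hence the real content is bounding the size of the search tree.

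For the tree size I would bound, along any root-to-leaf path, the application count and branching factor of each rule, in the spirit of Lemmas~\ref{lem:few-up} and~\ref{lem:few-down}. First, the branching factor of \brref{br:fsg-b} is only $r-1$, not~$r$: since $\sB\subseteq B'$ and $G[B']\in\Pi_B$ is hereditary, $G[\sB]$ contains no forbidden induced subgraph of~$\Pi_B$, so any minimal forbidden subgraph $\tilde B\subseteq\sB\cup\pB$ must use at least one vertex of~$\pB$, giving $|\tilde B\setminus\pB|\le r-1$. Second, I would use the mutual $((s-1)k+1)$-exclusivity of Lemma~\ref{lem:mutually-exclusive-monopolar-gen}: each application of \brref{br:fsg-b} moves a vertex of~$B'$ into~$\pA$, and once $(s-1)k+1$ vertices of~$B'$ lie in~$\pA$, the graph $G[\pA]$ can no longer be a cluster graph with at most~$k$ clusters (otherwise pigeonhole would place~$s$ mutually adjacent $B'$-vertices into one cluster, producing a $K_s$ inside $G[B']$), so \rrref{rr:fsg} rejects. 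Thus \brref{br:fsg-b} is applied at most $(s-1)k$ times along a path, contributing a factor $(r-1)^{(s-1)k}$.

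Next I would control the $2$-way branches, which come only from \brref{br:1-per-p3} and \brref{br:too-many-clusters}. Each application of these rules makes a vertex of $\sA\subseteq A'$ permanent, either by moving it to~$\pB$ or, in the first branch of \brref{br:too-many-clusters}, by committing it to~$\pA$. The number of $A'$-vertices that ever reach~$\pB$ is at most $(s-1)k+1$: they induce a cluster graph with at most~$k$ cliques, so more than that would, by pigeonhole, contain a~$K_s$ and hence make $G[\pB]\notin\Pi_B$, triggering \rrref{rr:fsg}. For the commits to~$\pA$, each one marks a cluster of $G[\sA\cup\pA]$ that had no permanent vertex, and distinct such clusters remain distinct in $G[\pA]$; so more than~$k$ commits would force more than~$k$ clusters in $G[\pA]$ and again trigger \rrref{rr:fsg}, bounding the commits by~$k$. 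Summing, the number of $2$-way branch nodes on a path is at most $((s-1)k+1)+k=sk+1$, contributing a factor $2^{sk}$.

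Combining the two factors bounds the number of leaves by $2^{sk}\cdot(r-1)^{(s-1)k}$, up to constants absorbing the fixed value~$r$; since $\Pi_A$- and $\Pi_B$-membership are polynomial-time decidable and forbidden subgraphs have order at most the constant~$r$, each node needs only $\Oh(n^{\Oh(1)})$ time, so one call to the recognizer runs in $2^{sk}(r-1)^{(s-1)k}n^{\Oh(1)}$ time, and Corollary~\ref{cor:in-rec-k} yields the same bound overall. I expect the main obstacle to be the bookkeeping that makes the exponents come out to \emph{exactly} $sk$ and $(s-1)k$: one must split the applications of \brref{br:too-many-clusters} into its two branches, merge its ``$A'$-to-$\pB$'' count with those of \brref{br:1-per-p3} and \rrref{rr:2-per-p3} into a single $(s-1)k$-bounded quantity, and keep the commits-to-$\pA$ count separate, so that the two exclusivity bounds \emph{add} to $sk$ rather than multiply, while ensuring that whenever a bound is exceeded the algorithm really does reach a rejection via \rrref{rr:fsg}.
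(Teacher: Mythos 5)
Your proposal is correct and takes essentially the same route as the paper: the same inductive-recognition setup, the same five rules, correctness at exhausted leaves via Fact~\ref{fact:found-sol}, the $(r-1)$ branching factor of \brref{br:fsg-b} from the fact that any forbidden subgraph in $\sB\cup\pB$ must meet~$\pB$, and the same use of Lemma~\ref{lem:mutually-exclusive-monopolar-gen} to cap \brref{br:fsg-b} at roughly $(s-1)k$ applications and the two-way branches at roughly $sk$ per root--leaf path. The only deviations are off-by-one constants in these counts, which the paper likewise absorbs into the $\Oh$-notation.
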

\begin{proof}
  The algorithm creates the two initial constraints~$(\sA,\{v\},\sB,\emptyset)$
  and $(\sA,\emptyset,\sB,\{v\})$. Then, for each initial constraint, it
  applies \rrref{rr:fsg}, \rrref{rr:2-per-p3}, \brref{br:1-per-p3}, \brref{br:fsg-b},
  and \brref{br:too-many-clusters} exhaustively. If none of these rules applies to the
  current constraint, then the algorithm outputs~$(\sA\cup\pA,\sB\cup\pB)$, which, by
  Fact~\ref{fact:found-sol}, is a~\abpartition{} of~$G$. Thus, to prove the correctness of the algorithm,
  it remains to show that if there is a~\abpartition~$(A,B)$ for~$G$, then the algorithm
  outputs such a partition.

  Suppose that $(A,B)$ is a \abpartition{}. Then $(A,B)$ fulfills one of the initial
  constraints. If a constraint~$\C$ is fulfilled by~$(A,B)$, then~\rrref{rr:fsg} does not
  apply to this constraint, and, by the correctness of the rules, any application
  of~\rrref{rr:2-per-p3},~\brref{br:1-per-p3},~\brref{br:fsg-b},
  or~\brref{br:too-many-clusters} yields at least one constraint that is fulfilled
  by~$(A,B)$. Hence, the initial constraint~$\C$ fulfilling~$(A,B)$ has at least one
  descendant that is fulfilled by~$(A,B)$, and to which none of the reduction and branching
  rules applies. For this constraint, the algorithm outputs a~\abpartition.

  It remains to bound the running time of the algorithm. Since all minimal forbidden
  induced subgraphs of~$\Pi_B$ have at most~$r$ vertices, we can check in~$n^{\Oh(1)}$~time, whether any of the branching and reduction rules applies (note that $r$ is a problem-specific constant). To obtain the running time
  bound, it is thus sufficient to bound the number of created constraints in the search
  tree.

  To this end, we bound the number of applications of the branching and reduction rules
  along any path from an initial constraint to a leaf constraint.   \rrref{rr:fsg} is applied at most once. To bound the
  number of applications of the other rules, we use that,
  by~Lemma~\ref{lem:mutually-exclusive-monopolar-gen},~$\Pi_A$ and~$\Pi_B$ are mutually
  $((s-1)\cdot k+1)$ exclusive. This implies that~\brref{br:fsg-b} is
  applied at most~$(s-1)\cdot k+1$ times: Each application of the rule adds a vertex of~$B'$ to~$\pA$. Since~$G[B']$ fulfills~$\Pi_B$, so does~$G[\pA\cap
  B']$. Thus, if~$|\pA\cap B'|>(s-1)\cdot k$, then~$G[\pA]$ does not fulfill~$\Pi_A$
  and~\rrref{rr:fsg} applies, terminating the current branch.

  \rrref{rr:2-per-p3},~\brref{br:1-per-p3}, and~\brref{br:too-many-clusters} can be
  applied altogether at most~$k+(s-1)\cdot k+2$ times: Each application of any of these
  rules either adds a vertex of~$A'$ to~$\pB$ or increases the number of clusters
  in~$G[\pA]$ by one. Again by~Lemma~\ref{lem:mutually-exclusive-monopolar-gen}, at
  most~$(s-1)\cdot k$ vertices of~$A'$ can be moved from~$\sA$ to~$\pB$, before~\rrref{rr:fsg}
  applies. Similarly, if~$G[\pA]$ has more than~$k$ clusters, then~\rrref{rr:fsg} applies.

  To bound the number of leaf constraints, observe that~\brref{br:fsg-b} branches into at most~$r-1$ new constraints, and~\brref{br:1-per-p3} and~\brref{br:too-many-clusters} branch into $2$ new constraints. Thus, the overall number of leaf constraints is~$\Oh(2^{s\cdot k+2}\cdot (r-1)^{(s-1)\cdot k+1})$. By the bound on the number of constraints on any root-leaf path, we thus have that the overall search tree size is~$\Oh(2^{s\cdot k}\cdot (r-1)^{(s-1)\cdot k}\cdot k^{\Oh(1)})$ which implies the overall running time bound.
\end{proof}

\subsection{Mutually Exclusive Graph Properties with Small Forbidden Subgraphs }
\label{sec:mutual-small-fsbg}
We now relax the demands on the hereditary properties~$\Pi_A$ and~$\Pi_B$ even further:
we demand only that~$\Pi_A$ excludes some fixed edgeless graph, that~$\Pi_B$ excludes some
fixed clique, and that $\Pi_A$ and~$\Pi_B$ are each characterized by forbidden induced subgraphs of constant
size. Recall that, by Proposition~\ref{fact:mutually-exclusive}, such properties are
mutually~$d$-exclusive for some constant~$d$.

\begin{theorem}\label{thm:sparse-dense-rec-const-fsg}
  Let~$\Pi_A$ and~$\Pi_B$ be two hereditary graph properties such that
  \begin{itemize}
  \item $\Pi_{A}$ excludes an edgeless graph of order~$c_A$ and has a characterization by
    forbidden induced subgraphs, each of order at most~$r_A$; and
  \item $\Pi_B$ excludes a complete graph of order~$c_B$ and has a characterization by forbidden induced subgraphs of order at most~$r_B$.
  \end{itemize}
  Then \textsc{$(\Pi_{A},\Pi_{B})$-Recognition} can be solved
  in~$(r_A-1)^{R(c_A,c_B)}\cdot (r_B-1)^{R(c_A,c_B)}\cdot n^{\Oh(1)}$ time.
\end{theorem}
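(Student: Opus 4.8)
The plan is to reuse the search-tree framework developed for Theorem~\ref{thm:generalization}, but now symmetrically in both parts, since neither $\Pi_A$ nor $\Pi_B$ is the special case of cluster graphs. Concretely, I would solve \textsc{Inductive $(\Pi_A,\Pi_B)$-Recognition} via the inductive recognition framework (Corollary~\ref{cor:in-rec-k}), starting from the two initial constraints $(A',\{v\},B',\emptyset)$ and $(A',\emptyset,B',\{v\})$ and then applying \rrref{rr:fsg}, \brref{br:fsg-a}, and \brref{br:fsg-b} exhaustively. The first rule rejects any constraint whose permanent parts already violate $\Pi_A$ or $\Pi_B$; the latter two branch on destroying a minimal forbidden induced subgraph of $\Pi_A$ found in $\sA\cup\pA$ (moving one of its non-permanent vertices to $\pB$) or of $\Pi_B$ found in $\sB\cup\pB$ (moving one of its non-permanent vertices to $\pA$). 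Observe that, in contrast to Theorem~\ref{thm:generalization}, here there is no ``too many clusters'' phase: once none of the three rules applies, $G[\sA\cup\pA]$ contains no forbidden induced subgraph of $\Pi_A$ and $G[\sB\cup\pB]$ contains none of $\Pi_B$, so by heredity $(\sA\cup\pA,\sB\cup\pB)$ is directly a \abpartition{} of $G$. Correctness then follows exactly as before: if a \abpartition{} $(A,B)$ fulfills a constraint, then by the correctness proofs of the three rules at least one child constraint is still fulfilled by $(A,B)$, so some reduced descendant yields the output partition.

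The key quantitative step is bounding the search-tree depth, and this is where mutual exclusivity does the work. By Proposition~\ref{fact:mutually-exclusive}, $\Pi_A$ and $\Pi_B$ are mutually $R(c_A,c_B)$-exclusive. Each application of \brref{br:fsg-b} moves a vertex of $B'$ into $\pA$; since $G[B']\in\Pi_B$ and $\Pi_B$ is hereditary, every induced subgraph of $G[\pA\cap B']$ also satisfies $\Pi_B$, so if more than $R(c_A,c_B)-1$ such vertices accumulate then $G[\pA]$ contains a subgraph satisfying both $\Pi_A$ and $\Pi_B$ on at least $R(c_A,c_B)$ vertices, which is impossible, forcing $G[\pA]\notin\Pi_A$ and triggering \rrref{rr:fsg}. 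Symmetrically, \brref{br:fsg-a} can be applied at most $R(c_A,c_B)-1$ times along any root-leaf path before \rrref{rr:fsg} rejects the constraint because $G[\pB]\notin\Pi_B$. Hence the total number of branching steps along any path is at most $2R(c_A,c_B)$, bounded by a constant.

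For the running time, each branching rule produces at most $r_A-1$ children (for \brref{br:fsg-a}, one per non-permanent vertex of a forbidden subgraph of order at most $r_A$) or at most $r_B-1$ children (for \brref{br:fsg-b}), so the total number of leaf constraints is $\Oh((r_A-1)^{R(c_A,c_B)}\cdot(r_B-1)^{R(c_A,c_B)})$. Since $r_A,r_B,c_A,c_B$ are problem-specific constants, $R(c_A,c_B)$ is a fixed integer and each rule's applicability can be tested and applied in $n^{\Oh(1)}$ time (checking all vertex subsets of size at most $\max(r_A,r_B)$ for a forbidden induced subgraph, and membership of $G[\pA]$, $G[\pB]$ in $\Pi_A$, $\Pi_B$ in polynomial time, as these properties have finite forbidden sets). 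Running this inductive recognizer over all $n$ iterations of the inductive recognition framework preserves the $n^{\Oh(1)}$ factor, giving the claimed $(r_A-1)^{R(c_A,c_B)}\cdot(r_B-1)^{R(c_A,c_B)}\cdot n^{\Oh(1)}$ bound.

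The main obstacle I anticipate is not the branching-depth argument, which is a clean symmetric analogue of Lemma~\ref{lem:mutually-exclusive-monopolar-gen} combined with Proposition~\ref{fact:mutually-exclusive}, but rather verifying carefully that \brref{br:fsg-a} and \brref{br:fsg-b} as stated remain correct when \emph{both} parts are governed by nontrivial forbidden-subgraph properties. In particular one must confirm that a vertex moved permanently into $\pB$ by \brref{br:fsg-a} can still later be seen by \brref{br:fsg-b} as part of a forbidden $\Pi_B$-subgraph (it can, since such subgraphs live in $\sB\cup\pB$), and conversely, so that the interleaving of the two rules does not leave an undetected forbidden subgraph straddling the permanent and switchable vertices. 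Once one checks that inapplicability of all three rules genuinely certifies $G[\sA\cup\pA]\in\Pi_A$ and $G[\sB\cup\pB]\in\Pi_B$ — which follows because neither $\sA$ nor $\sB$ ever grows and every forbidden subgraph would have been caught by \brref{br:fsg-a} or \brref{br:fsg-b} — the remaining bookkeeping is routine.
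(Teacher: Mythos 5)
Your proposal matches the paper's proof essentially step for step: the same inductive recognition framework, the same three rules (\rrref{rr:fsg}, \brref{br:fsg-a}, \brref{br:fsg-b}), the same bound of $R(c_A,c_B)$ applications of each branching rule per root-leaf path via Proposition~\ref{fact:mutually-exclusive}, and the same count of leaf constraints. The only detail the paper makes explicit that you leave implicit is why every forbidden induced subgraph found by \brref{br:fsg-a} (resp.\ \brref{br:fsg-b}) contains at least one permanent vertex --- namely that $\sA\subseteq A'$ and $G[A']\in\Pi_A$ is hereditary --- which is exactly what caps the branching factor at $r_A-1$ rather than $r_A$.
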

\begin{proof}
  We show that \textsc{Inductive $(\Pi_{A},\Pi_{B})$-Recognition} can be solved in this
  running time. In conjunction with Theorem~\ref{thm:in-rec}, this implies the above Theorem~\ref{thm:sparse-dense-rec-const-fsg}.

  The algorithm performs a search from the two initial constraints of
  Fact~\ref{fact:init}. For each constraint encountered during the search, we check
  if \rrref{rr:fsg} applies. If this is not the case, we check if~\brref{br:fsg-a}
  or~\brref{br:fsg-b} applies. If neither applies, then~$(\sA\cup \pA,\sB\cup \pB)$ is
  a~\abpartition{}. Otherwise, apply the respective rule and continue the search with the
  constraints created by the rule. By the correctness of~\brref{br:fsg-a}, \brref{br:fsg-b}, and Fact~\ref{fact:init}, this algorithm finds a~\abpartition{}
  of~$G$ if it exists.

  It remains to analyze the running time of the algorithm. First, observe that throughout the
  algorithm, we have~$\sA\subseteq A'$ and~$\sB\subseteq B'$. Thus, every vertex
  set~$\tilde{A}$ to which~\brref{br:fsg-a} applies contains at least one vertex
  from~$\pA$ and therefore creates at most~$r_A-1$ new recursive branches. Now, observe
  that each application of \brref{br:fsg-a} increases the number of vertices in~$\pB$ by
  one. Moreover, all vertices in~$\pB$ are from~$A'$. Thus, if~$|\pB|\ge R(c_A,c_B)$, then by Proposition~\ref{fact:mutually-exclusive}, this
  implies that~$G[\pB]\notin \Pi_B$. Thus,~\brref{br:fsg-a} is applied at
  most~$R(c_A,c_B)$ times before~\rrref{rr:fsg} applies. Similarly,~\brref{br:fsg-b} is
  applied at most~$R(c_A,c_B)$ times before~\rrref{rr:fsg} applies and each application
  of~\brref{br:fsg-b} creates at most~$r_B-1$ constraints. Overall, the number of created
  constraints is thus~$\Oh((r_A-1)^{R(c_A,c_B)}\cdot(r_B-1)^{R(c_A,c_B)})$. For each
  constraint, we must check if any of the reduction or branching rules applies, which can be done in~$n^{\Oh(1)}$ time by using the polynomial-time algorithms for checking
  membership of a graph in~$\Pi_A$ and~$\Pi_B$.
\end{proof}
For the special case of recognizing monopolar graphs with at most~$k$ cliques,
Theorem~\ref{thm:sparse-dense-rec-const-fsg} applies: The graphs fulfilling $\Pi_A$ have at most~$k$
clusters, thus the edgeless graph of order~$k+1$ is forbidden, implying~$c_A=k+1$. The
forbidden subgraphs for~$\Pi_A$ are exactly the~$P_3$ and the edgeless graph on~$k+1$
vertices, implying~$r_A=k+1$. For~$\Pi_B$, the only forbidden subgraph is the clique on
two vertices, implying~$c_B=r_B=2$. Altogether, this gives a running time
of~$k^{R(k+1,2)}\cdot n^{O(1)}=k^k\cdot n^{O(1)}$. Hence, our tailored algorithm in
Section~\ref{sec:mono} is substantially more efficient than the generic algorithm.

Another application of Theorem~\ref{thm:sparse-dense-rec-const-fsg}
is to~\textsc{$(\Pi_A,\Pi_B)$-Recognition} for~$\Pi_A$ being the triangle-free graphs
and~$\Pi_B$ being the complete graphs. The running time becomes~$\Oh(2^{R(3,2)}\cdot nm)$ in
this case, yielding an~$O(nm)$-time algorithm with very small constant hidden in the
$\Oh$-Notation.
\section{An FPT~Algorithm for \textsc{2-Subcoloring}}
\label{sec:2-subcolor}
In this section, we give an FPT~algorithm for \textsc{2-Subcoloring} parameterized by the smaller number of clusters in the two parts.
Although the general approach is similar to the approach used for \textsc{Monopolar Recognition}, in that it relies on inductive recognition and the notion of constraints, the algorithm is substantially more complex. In particular, the notion of constraints and the reduction and branching rules are more involved. %

Throughout, given a graph $G$ and a nonnegative integer $k$, we call a 2-subcoloring $(A, B)$ of $G$ \emph{valid} if $G[A]$ has at most $k$ cliques. In the inductive recognition framework, we need a parameterized inductive recognizer for the following problem:

\begin{quote}
  \textsc{Inductive 2-Subcoloring}\\
  \textbf{Input:} A graph $G=(V,E)$, a vertex~$v\in V$, and a valid 2-subcoloring~$(A',B')$ of~$G'=G-v$.\\
  \textbf{Question:} Does $G$ have a valid 2-subcoloring~$(A,B)$?
\end{quote}
Fix an instance of \textsc{Inductive 2-Subcoloring} with a graph $G = (V,E)$, a vertex $v \in V$, and a valid 2-subcoloring $(A',B')$ of $G' = G-v$.
We again apply a search-tree algorithm that starts with initial partitions~$(\sA, \sB)$ of~$V$, derived from $(A', B')$, that are not necessarily 2-subcolorings of~$G$. Then, we try to ``repair'' those partitions by moving vertices between~$\sA$ and $\sB$ to form a valid 2-subcoloring $(A,B)$ of~$G$.
As before, %
each node in the search tree is associated with one constraint.

\begin{definition} \rm
  A \emph{constraint}~${\cal C}=(\ssA_1,\ldots, \ssA_k,\ssB_1, \ldots , \ssB_n,\pA,\pB)$ consists of a partition~$(\ssA_1,\ldots,\ssA_k,\ssB_1,\ldots,\ssB_{n})$ of~$V$ and two vertex sets $\pA \subseteq \sA$ and $\pB \subseteq \sB$, where $\sA = \bigcup_{i=1}^{k} \ssA_i$ and $\sB = \bigcup_{i=1}^{n} \ssB_i$, such that for any $i \not= j$:
\begin{compactitem}
\item $u$ and~$w$ are not adjacent for any~$u\in \ssA_i\setminus \pA$ and~$w\in \ssA_j\setminus \pA$, and
\item $u$ and~$w$ are not adjacent for any~$u\in \ssB_i\setminus \pB$ and~$w\in \ssB_j\setminus \pB$.
\end{compactitem}
  We explicitly allow (some of) the sets of the partition~$(\ssA_1,\ldots,\ssA_k,\ssB_1,\ldots,\ssB_n)$ of~$V$ to be empty. The vertices in~$\pA$ and~$\pB$ are called \emph{permanent}
  vertices of the constraint.
\end{definition}

\noindent
The permanent vertices in~$\pA$ and~$\pB$ in the definition will correspond precisely to those
vertices that have switched sides during the algorithm.
We refer to the sets $\ssA_1,\ldots,\ssA_k$ and~$\ssB_1,\ldots,\ssB_{n}$ as \emph{groups}; during the algorithm,~$G[\sA]$ and~$G[\sB]$ are not necessarily cluster graphs and, thus, we avoid using the term clusters.

\looseness=-1 We now define the notion of a valid 2-subcoloring fulfilling a constraint. Intuitively speaking, a constraint ${\cal C}$ is fulfilled by a bipartition~$(A,B)$ if~$(A,B)$
respects the assignment of the permanent vertices stipulated by ${\cal C}$, and if all vertices that do
not switch sides stay in the bipartition $(A,B)$ in the same groups they belong to in ${\cal C}$. This notion is formalized as follows.

\begin{definition} \label{def:4conditions}
  A constraint ${\cal C}=(\ssA_1,\ldots, \ssA_k,\ssB_1, \ldots , \ssB_n,\pA,\pB)$ is \emph{fulfilled} by a bipartition~$(A,B) $ of~$V$
   if $G[A]$ is a cluster graph with~$k$
    clusters~$A_1, \dots , A_k$ and~$G[B]$ is a cluster graph with~$n$
    clusters~$B_1,\dots , B_n$ (some of the clusters may be empty) such that:%
    \vspace{-.4cm}
    \begin{multicols}{2}
      \begin{compactenum}
      \item \label{cond:group-a}for each~$i\in [k]$, $A_i\cap \sA
        \subseteq \ssA_i$;
      \item \label{cond:group-b}for each~$i\in [n]$, $B_i\cap \sB
        \subseteq \ssB_i$;
      \item \label{cond:pa}$\pA\subseteq A$; and
      \item \label{cond:pb}$\pB\subseteq B$.
      \end{compactenum}
    \end{multicols}
    \vspace{-.4cm}
\end{definition}

\noindent
We now need a set of initial constraints to jumpstart the search-tree algorithm.

\begin{lemma}\label{lem:init}
  Let~$A'_1, \ldots ,A'_k$ denote the clusters
  of~$G'[A']$ and let~$B'_1, \ldots ,B'_n$ denote the clusters
  of~$G'[B']$. Herein, if there are less than~$k$ clusters in~$G[A']$ or less than~$n$ clusters in $G[B']$, we add an appropriate number of empty sets. By relabeling, we may assume that only~$B'_1, \ldots ,B'_i$ contain neighbors
  of~$v$, and~$B'_{i+1}=\emptyset$. Each valid 2-subcoloring~$(A,B)$
  of~$G$ fulfills either:
  \begin{compactitem}
  \item $(A'_1,\ldots, A'_j\cup \{v\}, \ldots, A'_k, B'_1, \ldots, B'_n, \{v\}, \emptyset)$ for some~$j\in [k]$, or
  \item $(A'_1,\ldots, A'_k, B'_1, \ldots, B'_j\cup \{v\}, \ldots, B'_n, \emptyset,\{v\})$
    for some~$j\in [i+1]$.
  \end{compactitem}
\end{lemma}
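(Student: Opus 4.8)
The plan is to take an arbitrary valid 2-subcoloring $(A,B)$ of $G$ and show directly that it fulfills one of the two listed constraints in the sense of Definition~\ref{def:4conditions}. The whole argument rests on one elementary observation: since $G' = G - v$, any two vertices of $V \setminus \{v\}$ are adjacent in $G$ if and only if they are adjacent in $G'$. Consequently, passing from $(A',B')$ to $(A,B)$ can never \emph{merge} old clusters: if $u,w \in A'$ lie in distinct clusters of $G'[A']$, then they are non-adjacent in $G'$, hence in $G$, so they can never occur together in a single cluster of $G[A]$, and the analogous statement holds for $B'$ and $G[B]$.

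First I would establish the relabeling that witnesses conditions~\ref{cond:group-a} and~\ref{cond:group-b}. For every cluster $C$ of $G[A]$, the set $C \cap A'$ is a clique of $G'[A']$ (adjacencies are preserved and $v \notin A'$), hence is contained in a single cluster $A'_t$ of $G'[A']$. By the non-merging observation, the induced assignment $C \mapsto t$ is injective on those clusters of $G[A]$ that meet $A'$, so after padding $G[A]$ with empty clusters up to $k$ it extends to a bijection; relabeling the clusters of $G[A]$ accordingly yields $A_1,\ldots,A_k$ with $A_i \cap A' \subseteq A'_i$ for every $i$. Running the same argument on $G[B]$, padding up to $n$ clusters (possible since $G[B]$ has at most $|V| = n$ clusters), yields $B_1,\ldots,B_n$ with $B_i \cap B' \subseteq B'_i$.

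Next I would split into the two cases according to whether $v \in A$ or $v \in B$, which fixes the permanent vertex and hence the shape of the constraint. If $v \in A$, let $A_j$ be the cluster of $G[A]$ containing $v$; I would verify that $(A,B)$ fulfills the first listed constraint for this $j$. Conditions~\ref{cond:pa} and~\ref{cond:pb} are immediate ($v \in A$ and $\pB = \emptyset$), while conditions~\ref{cond:group-a} and~\ref{cond:group-b} follow from the labeling above, once one checks that enlarging the $j$-th $A$-group by $v$ does not disturb the containments (because $v \notin A'$, intersecting with $A'$ removes $v$ again). The case $v \in B$ is symmetric for conditions~\ref{cond:group-a}--\ref{cond:pb}.

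The one genuinely new point — and the step I expect to be the main obstacle — is showing that when $v \in B$ the group index $j$ may be taken in $[i+1]$ rather than in all of $[n]$. Here I would examine the cluster $B^*$ of $G[B]$ that contains $v$. If $B^*$ meets $B'$, then any common vertex $w \in B^* \cap B'$ shares the clique $B^*$ with $v$ and is therefore a neighbor of $v$, so $w$ lies in one of $B'_1,\ldots,B'_i$; labeling $B^*$ by that index gives $j \le i$. If instead $B^* \cap B' = \emptyset$, I would label $B^*$ by $i+1$, an index that is free precisely because $B'_{i+1} = \emptyset$ forces no cluster meeting $B'$ to claim it; then $B^* \cap B' = \emptyset \subseteq B'_{i+1}$, so condition~\ref{cond:group-b} still holds with $j = i+1$. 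In both subcases $j \in [i+1]$, and the choice is consistent with the global bijection of the second paragraph, which completes this case and the lemma.
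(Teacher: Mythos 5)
Your proof is correct and follows essentially the same route as the paper's: both arguments rest on the observation that each cluster of $G[A]$ (resp.\ $G[B]$) intersects at most one cluster of $G'[A']$ (resp.\ $G'[B']$), use this to label the new clusters consistently with the old groups, and then place $v$'s cluster by distinguishing whether it meets the old part or must occupy a fresh (empty) group, which is exactly how the bound $j\in[i+1]$ arises. The paper states the case split in terms of whether $v$ has a neighbor in $A'_j\cap A$ rather than whether $v$'s cluster meets $A'$, but these are equivalent, and your write-up merely makes the relabeling bijection more explicit.
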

\begin{proof}%
  Since $(A',B')$ is a valid $2$-subcoloring of $G' = G-v$ and $v \in \pA \cup \pB$ for each constraint $\C$, the constructed tuples are indeed constraints. Let~$A_1, \dots, A_k$ be the clusters of~$G[A]$ and
  $B_1, \dots, B_n$ be those of~$G[B]$.

  First, assume that~$v\in A$. If, for some~$j\in [k]$, there is a
  vertex~$u\in A'_j\cap N(v)\cap A$, then~$(A,B)$ fulfills the
  constraint~$(A'_1,\ldots, A'_j\cup \{v\}, \ldots, B'_n, \{v\},
  \emptyset)$. This can be seen as follows. Since~$G[A]$ is a cluster graph,
  each~$A_i$ contains vertices of at most one cluster of~$(A'_1,\ldots,
  A'_k)$. Similarly, each~$B_i$ contains vertices of at most one cluster
  of~$(B'_1,\ldots, B'_n)$. Hence, the clusters of~$G[A]$ and~$G[B]$
  can be labeled accordingly to satisfy Conditions~\ref{cond:group-a}
  and~\ref{cond:group-b} of Definition~\ref{def:4conditions}. Moreover, by assumption, ~$\{v\}\subseteq A$ and
  thus Conditions~\ref{cond:pa} and~\ref{cond:pb} of Definition~\ref{def:4conditions} are fulfilled.

  Similarly, if, for all~$j\in [k]$, vertex~$v$ has no neighbors in~$A'_j\cap
  A$, then there is a~$j\in [k]$ such that~$A'_j\cap
  A=\emptyset$. Then,~$(A,B)$ fulfills the
  constraint~$(A'_1,\ldots, A'_j\cup \{v\}, \ldots, B'_n, \{v\},
  \emptyset)$, by the same arguments as above.

  Symmetric arguments apply for the case~$v\in B$.\smartqed
\end{proof}

\noindent\looseness=-1
Now that we have identified the initial constraints, we turn to the search-tree algorithm and its reduction and branching rules. A crucial ingredient to the rules and the analysis of the running time is the following lemma.
A consequence of the lemma is that if the number of initial constraints is too large, then most of them should be rejected immediately.

\begin{lemma}\label{lem:mustA}
Let ${\cal C}=(\ssA_1,\ldots, \ssA_k,\ssB_1, \ldots , \ssB_n,\pA,\pB)$ be a constraint and let $(A,B)$ be any valid 2-subcoloring of $G$ fulfilling $\C$. If $u \in V$ has neighbors in more than $k+1$ groups among $\ssB_1,\ldots,\ssB_n$, then $u \in A$.
\end{lemma}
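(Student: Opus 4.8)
The plan is to argue by contradiction: I would assume $u \in B$ and derive that $G[A]$ must then contain more than $k$ clusters, contradicting the validity of $(A,B)$. Since $(A,B)$ fulfils $\C$, Definition~\ref{def:4conditions} guarantees that $G[B]$ is a cluster graph, so $u$ lies in exactly one of its clusters, say $B_\ell$. The whole argument hinges on classifying, for each group $\ssB_j$ in which $u$ has a neighbour, where that neighbour ends up in the bipartition $(A,B)$.

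The first step is to show that for every group $\ssB_j$ with $j \neq \ell$ in which $u$ has a neighbour, \emph{every} such neighbour must lie in $A$. Indeed, if $u$ had a neighbour $w \in \ssB_j$ with $w \in B$, then since $u,w \in B$ are adjacent and $G[B]$ is a cluster graph, $w$ would lie in the same cluster $B_\ell$ as $u$; as $w \in \ssB_j \subseteq \sB$, this means $w \in B_\ell \cap \sB$, and Condition~\ref{cond:group-b} of Definition~\ref{def:4conditions} forces $w \in \ssB_\ell$. Because the groups partition $\sB$, this would give $j = \ell$, a contradiction. Hence any neighbour of $u$ in $\ssB_j$ (for $j \neq \ell$) is in $A$.

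The second and most delicate step is to show that neighbours of $u$ chosen from distinct groups $\ssB_j$ (with $j \neq \ell$) land in distinct clusters of $G[A]$. I would pick, for each such group in which $u$ has a neighbour, one neighbour $w_j \in \ssB_j \cap A$. Since $w_j \in A$ while $\pB \subseteq B$ by Condition~\ref{cond:pb}, each $w_j$ is non-permanent, i.e.\ $w_j \in \ssB_j \setminus \pB$. Now if two such vertices $w_{j_1}, w_{j_2}$ with $j_1 \neq j_2$ were in the same cluster of the cluster graph $G[A]$, they would be adjacent; but the constraint definition forbids adjacency between vertices of $\ssB_{j_1} \setminus \pB$ and $\ssB_{j_2} \setminus \pB$ when $j_1 \neq j_2$. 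This contradiction forces the $w_j$ into pairwise distinct clusters of $G[A]$. This combination of the across-groups non-adjacency condition with the observation that permanent vertices are ruled out (they are pinned to $B$) is the crux of the argument, and the step I expect to require the most care.

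Finally I would count. As $u$ has neighbours in more than $k+1$ groups and at most one of these groups is $\ssB_\ell$, there are at least $k+1$ groups $\ssB_j$ with $j \neq \ell$ containing a neighbour of $u$. By the two steps above, these contribute at least $k+1$ pairwise distinct (nonempty) clusters of $G[A]$, contradicting the validity of $(A,B)$, which requires $G[A]$ to have at most $k$ clusters. Therefore $u \in A$, as claimed.
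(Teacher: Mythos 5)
Your proof is correct and follows essentially the same route as the paper's: assume $u\in B$, observe that at most one neighbour per relevant group can share $u$'s cluster in $G[B]$ (forcing at least $k+1$ chosen neighbours into $A$), note these are non-permanent and hence pairwise nonadjacent by the constraint definition, and conclude that $G[A]$ has more than $k$ clusters. The only difference is cosmetic — you name the cluster $B_\ell$ containing $u$ explicitly, whereas the paper argues that at most one of the $k+2$ selected neighbours can lie in $B$.
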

\begin{proof}
For the sake of contradiction, suppose that there is a valid 2-subcoloring $(A,B)$ of $G$ fulfilling $\C$ such that $u \in B$. Let $A_1,\ldots,A_k$ and $B_1,\ldots,B_n$ denote the (possibly empty) clusters of $G[A]$ and $G[B]$ respectively. Let $b_1,\ldots,b_{k+2}$ be neighbors of $u$ in distinct groups among $\ssB_1,\ldots,\ssB_n$; these vertices exist by assumption. Without loss of generality, $b_i \in \ssB_i$ for each $i=1,\ldots,k+2$.

Since $(A,B)$ fulfills $\C$, we have $B_i \cap \sB \subseteq \ssB_i$ for each $i=1,\ldots,k+2$. Therefore, the vertices among $b_1,\ldots,b_{k+2}$ that are in $B$ are in distinct clusters of $G[B]$. Since $u \in B$, at least $k+1$ vertices among $b_1,\ldots,b_{k+2}$ are in $A$, say $b_1,\ldots,b_{k+1} \in A$. Since $(A,B)$ fulfills $\C$, we have $b_1,\ldots,b_{k+1} \not\in \pB$, and, thus, $b_i \in \ssB_i \setminus \pB$ for each $i=1,\ldots,k+1$. Then the definition of a constraint implies that $b_1,\ldots,b_{k+1}$ are pairwise nonadjacent. Hence, these vertices are in distinct clusters of $G[A]$. Therefore, $G[A]$ has at least $k+1$ clusters, a contradiction.
\smartqed\end{proof}

Lemma~\ref{lem:mustA} implies that if $v$ has neighbors in more than $k+1$ clusters of $B'$, then we should immediately reject the initial constraints generated by Lemma~\ref{lem:init} that place $v$ in $\pB$. Hence, we obtain the following corollary of Lemmas~\ref{lem:init} and~\ref{lem:mustA}.

\begin{corollary}\label{cor:v-init}
Lemma~\ref{lem:init} generates at most $2k+2$ constraints that are not immediately rejected.
\end{corollary}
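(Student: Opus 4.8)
The plan is to count the constraints produced by Lemma~\ref{lem:init} and then to use Lemma~\ref{lem:mustA} to discard those that cannot possibly be fulfilled. First I would recall from Lemma~\ref{lem:init} that the generated constraints come in two types. Constraints of the \emph{first type} place~$v$ into one of the~$k$ groups of~$A'$ (so~$v\in\pA$), giving exactly~$k$ constraints; constraints of the \emph{second type} place~$v$ into one of the groups of~$B'$ (so~$v\in\pB$), giving exactly~$i+1$ constraints, where~$i$ denotes the number of clusters of~$G'[B']$ that contain a neighbor of~$v$. Thus Lemma~\ref{lem:init} generates~$k+i+1$ constraints in total, and the task reduces to showing that at most~$2k+2$ of them survive the immediate-rejection step.

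If~$i\le k+1$, there is nothing to do: the total~$k+i+1$ is already at most~$k+(k+1)+1=2k+2$. It remains to handle the case~$i>k+1$, and here I would invoke Lemma~\ref{lem:mustA} with~$u=v$ on each second-type constraint~$\C$. In~$\C$ we have~$v\in\pB$, so every valid 2-subcoloring~$(A,B)$ fulfilling~$\C$ satisfies~$v\in B$ by Condition~\ref{cond:pb} of Definition~\ref{def:4conditions}. On the other hand, the neighbors of~$v$ in~$\sB$ are exactly its neighbors in~$B'$, and these are distributed among the~$i$ groups~$\ssB_1,\ldots,\ssB_i$ of~$\C$. Since~$i>k+1$, vertex~$v$ has neighbors in more than~$k+1$ groups, so Lemma~\ref{lem:mustA} forces~$v\in A$, contradicting~$v\in B$. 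Hence no second-type constraint admits a fulfilling valid 2-subcoloring, and all~$i+1$ of them are immediately rejected; only the~$k$ first-type constraints survive, and~$k\le 2k+2$.

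The step I expect to require the most care is verifying that Lemma~\ref{lem:mustA} really is applicable to each second-type constraint, that is, that~$v$ genuinely has neighbors in more than~$k+1$ of the~$\ssB$ groups even though~$v$ itself has been inserted into one of these groups. The key observation to record explicitly is that inserting~$v$ into a group~$\ssB_j$ does not change which groups contain a neighbor of~$v$ (as~$v$ is not adjacent to itself), whether~$j\le i$ or~$j=i+1$; consequently this neighbor-group count equals~$i$ in every second-type constraint, and the hypothesis~$i>k+1$ of Lemma~\ref{lem:mustA} is met. Combining the two cases then yields the bound of~$2k+2$ claimed in the corollary.
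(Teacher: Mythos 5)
Your proof is correct and follows exactly the paper's (largely implicit) argument: count the $k+i+1$ initial constraints from Lemma~\ref{lem:init}, note the bound is immediate when $i\le k+1$, and when $i>k+1$ use Lemma~\ref{lem:mustA} with $u=v$ to reject every constraint placing $v$ in $\pB$, leaving only the $k$ first-type constraints. Your extra check that inserting $v$ into a $\ssB$-group does not alter the count of groups containing a neighbor of $v$ is a sensible detail the paper leaves unstated.
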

As before, each nonroot node of the search tree is associated with a constraint. The root of the search tree is a dummy node with children associated with the constraints generated by Lemma~\ref{lem:init} that are not immediately rejected due to Lemma~\ref{lem:mustA}. We now give two reduction rules, which are applied exhaustively to each search-tree node, in the order they are presented.
Let ${\cal C}=(\ssA_1,\ldots, \ssA_k,\ssB_1, \ldots , \ssB_n,\pA,\pB)$ be a constraint.
The first reduction rule identifies some obvious cases in which the constraint cannot be fulfilled.

\begin{rrule}\label{rr:per-verts}
  If~$G[\pA]$ or~$G[\pB]$ is not a cluster graph, or if there
  are~$i\neq j$ such that there is an edge between~$\ssA_i\cap\pA$
  and~$\ssA_j\cap\pA$ or an edge between~$\ssB_i\cap\pB$ and~$\ssB_j\cap\pB$,
  then reject $\C$.
\end{rrule}
\begin{proof}[Proof of correctness]%
If~$G[\pA]$ is not a cluster graph, then there is no
  valid 2-subcoloring~$(A,B)$ such that~$\pA\subseteq
  A$. Similarly, there is no valid 2-subcoloring $(A,B)$ such
  that~$\pB\subseteq B$ if~$G[\pB]$ is not a cluster graph.

  If there is an edge between two vertices~$u\in \ssA_i\cap\pA$ and~$w\in
  \ssA_j\cap\pA$ where~$i\neq j$, then every valid 2-subcoloring~$(A,B)$
  needs to have~$u$ and~$w$ in the same cluster of~$G[A]$, which
  contradicts Condition~\ref{cond:group-a}. Similarly,
  Condition~\ref{cond:group-b} is violated if there is an edge between two
  vertices~$u\in \ssB_i\cap\pB$ and~$w\in \ssB_j\cap\pB$ where~$i\neq
  j$. \smartqed
\end{proof}
The second reduction rule is the natural consequence of Lemma~\ref{lem:mustA}.

\begin{rrule}\label{rr:make-per-vert}
  If there is a vertex~$u\in \ssA_i\setminus \pA$ that has
  neighbors in more than~$k+1$ groups of~$\sB$, then set~$\pA\leftarrow \pA\cup \{u\}$.
\end{rrule}

The algorithm contains a single branching rule. This rule, called $\switch(u)$, uses branching to fix a vertex $u$ in one of the clusters in one of the parts of the 2-subcoloring. The vertices to which $\switch()$ must be applied are identified by \emph{switching rules}. We say that a switching rule
that calls for applying~$\switch(u)$ is \emph{correct}
if for all valid 2-subcolorings~$(A,B)$ of $G$ fulfilling~$\C$, we have $u \in \sA \cap B$ or $u \in \sB \cap A$.
 We first describe the switching rules, and then describe $\switch(u)$.

The first switching rule identifies vertices that are not adjacent to some
permanent vertices of their group. Recall from Fact~\ref{fact:p3} that cluster graphs do not contain induced $P_3$'s.

\begin{srule}\label{sr:nonedge-within}
  If there is a vertex~$u$ such that~$u\in \ssA_i \setminus \pA$ and~$u$ is not
  adjacent to some vertex in~$\ssA_i\cap \pA$, or~$u\in \ssB_i \setminus \pB$ and~$u$ is
  not adjacent to some vertex in~$\ssB_i\cap \pB$, then
  call~$\switch(u)$.
\end{srule}
\begin{proof}[Proof of correctness]%
  Let~$(A,B)$ be a valid 2-subcoloring fulfilling~$\C$ and let $(A_1,\ldots,A_k)$ be the partition of $A$ induced by the clusters of $G[A]$. We show
  the correctness of the case~$u\in \ssA_i \setminus \pA$ (the other case is
  symmetric). Suppose that $u \in A$. By Condition~\ref{cond:group-a}, $w \in A_i$ for any $w \in \ssA_i \cap A$. Hence, $u \in A_i$. Moreover, since $\pA \subseteq A$ by Condition~\ref{cond:pa}, $\ssA_i \cap \pA \subseteq \ssA_i \cap A$ and, thus, $w \in A_i$ for any $w \in \ssA_i \cap \pA$. However, $G[\{u\}\cup (\ssA_i\cap \pA)]$ is not a clique by assumption, contradicting that $A_i$ is a clique.
\smartqed
\end{proof}
The second switching rule finds vertices that have permanent neighbors
in another group.

\begin{srule}\label{sr:edge-between}
  If there is a vertex~$u$ such that~$u\in \ssA_i \setminus \pA$ and~$u$ has a neighbor
  in~$\pA\setminus \ssA_i$, or~$u\in \ssB_i \setminus \pB$ and~$u$ has a neighbor
  in~$\pB\setminus \ssB_i$, then call~$\switch(u)$.
\end{srule}
\begin{proof}[Proof of correctness]%
  Let~$(A,B)$ be a valid 2-subcoloring fulfilling~$\C$. We show the
  correctness of the case~$u\in \ssA_i \setminus \pA$ (the other case is
  symmetric). If~$u\in A$, then $u$ must be in the same cluster of~$G[A]$
  as its neighbor in~$\pA\setminus \ssA_i$, because $\pA \subseteq A$ by Condition~\ref{cond:pa}. However, this contradicts
  Condition~\ref{cond:group-a}. \smartqed
\end{proof}
Now, we describe $\switch(u)$, which is a combination of a reduction rule and a branching rule. There are two main scenarios that we distinguish. If~$u$ has permanent
neighbors in the other part, then there is only one choice for
assigning~$u$ to a group. Otherwise, we branch into all (up to
symmetry when a group is empty) possibilities to place~$u$ into a group. It is important to note that the switching rules never apply $\switch(u)$ to a permanent vertex.

\begin{brule}[$\switch(u)$] \label{br:switch} \
\vspace{-0.2cm}  \begin{itemize}
  \item If~$u\in \ssA_i \setminus \pA$ and~$u$ has a permanent neighbor in some~$\ssB_j$,
    then set~$\ssA_i\leftarrow \ssA_i\setminus\{u\}$,~$\ssB_j\leftarrow \ssB_j\cup
    \{u\}$,~$\pB\leftarrow \pB \cup\{u\}$.
  \item If~$u\in \ssA_i \setminus \pA$ and~$u$ has only nonpermanent neighbors in~$\sB$,
    then, for each~$\ssB_j$ such that~$N(u)\cap \ssB_j\neq \emptyset$ and~$\ssB_j\cap \pB=\emptyset$, and for one~$\ssB_j$ such that~$\ssB_j=\emptyset$ (chosen arbitrarily), branch into a branch associated with the
    constraint~$(\ssA_1,\ldots, \ssA_i\setminus \{u\}, \ldots, \ssA_k, \ssB_1, \ldots, \ssB_j\cup
    \{u\}, \ldots, \ssB_n, \pA, \pB\cup \{u\})$.
  \item If~$u\in \ssB_i \setminus \pB$ and~$u$ has a permanent neighbor in some~$\ssA_j$,
    then set~$\ssB_i\leftarrow \ssB_i\setminus\{u\}$,~$\ssA_j\leftarrow \ssA_j\cup
    \{u\}$,~$\pA\leftarrow \pA \cup\{u\}$.
  \item If~$u\in \ssB_i \setminus \pB$ and~$u$ has only nonpermanent neighbors in~$\sA$,
    then for each~$\ssA_j$ with~$\ssA_j\cap \pA=\emptyset$, branch into a branch associated with the
    constraint~$(\ssA_1,\ldots, \ssA_j\cup \{u\}, \ldots, \ssA_k, \ssB_1, \ldots, \ssB_i\setminus
    \{u\}, \dots, \ssB_n, \pA\cup \{u\}, \pB)$; if no such $\ssA_j$ exists, reject $\C$.
  \end{itemize}
\end{brule}
\begin{proof}[Proof of correctness]%
  All tuples produced by \brref{br:switch} are indeed constraints, since~$u$ is made permanent in every case. Let~$(A,B)$ be a valid
  2-subcoloring fulfilling~$\C$. We prove that $(A, B)$ fulfills one of the constraints created by \brref{br:switch}. We consider two cases.

  \emph{Case 1: $u\in \ssA_i\setminus \pA$.}  Since the switching rule
  calling~$\switch(u)$ is correct, we have~$u\in B$ and, thus, $\pB\cup
  \{u\}\subseteq B$. Thus, for all constraints
  produced by~$\switch(u)$, the 2-subcoloring~$(A,B)$ satisfies Conditions~\ref{cond:pa} and~\ref{cond:pb} of Definition~\ref{def:4conditions}. Furthermore, Condition~\ref{cond:group-a} is satisfied in every generated constraint because removing $u$ from~$\sA$ weakens the requirement placed on~$A$. It remains to show that $(A, B)$ satisfies Condition~\ref{cond:group-b} for one of the generated constraints. If~$u$ has a neighbor~$w\in \pB$, then~$u$ is in the same cluster as~$w$ in~$G[B]$ since $(A, B)$ fulfills~$\C$ and by Condition~\ref{cond:group-b} of fulfilling constraints. Hence, for the (single) constraint produced by~$\switch(u)$ in this situation, the 2-subcoloring~$(A, B)$ satisfies Condition~\ref{cond:group-b}.
  Now consider the case that~$u$ has no neighbor in~$\pB$. Let~%
  $B_1, \ldots, B_n$ be the clusters in~%
  $G[B]$, %
  indexed according to the groups in~$\C$ and padded with empty sets if the number of clusters is smaller than~$n$. Let $B_j$ be the cluster that contains~$u$. As $u$ does not have neighbors in~$\pB$, cluster~$B_j$ does not contain any vertex of~$\pB$. There are two subcases. First, $u$ has a neighbor~$w \in \sB$. As $(A, B)$ fulfills $\C$ and by Condition~\ref{cond:group-b}, we have~$w \in \ssB_j$. Thus, $(A, B)$ satisfies Condition~\ref{cond:group-b} for the constraint generated by~$\switch(u)$ in which $u$~is added to~$\ssB_j$. Second, $u$ does not have a neighbor in~$\sB$. Hence, $B_j \cap \sB = \emptyset$ and hence, without loss of generality, by relabeling we have $\ssB_j = \emptyset$. Therefore, for the constraint in which $u$ is added to~$\ssB_j = \emptyset$, the 2-subcoloring~$(A, B)$ satisfies Condition~\ref{cond:group-b}.

  \emph{Case 2: $u\in \ssB_i\setminus \pB$.} In this case the argument is
  simpler. Since the switching rule invoking~$\switch(u)$ is
  correct, we have~$u\in A$, and thus $\pA\cup \{u\}\subseteq A$. This
  means that, for all constraints
  produced by~$\switch(u)$, the 2-subcoloring~$(A,B)$ fulfills Conditions~\ref{cond:pa}
  and~\ref{cond:pb}. The case that~$u$ has
  a neighbor in~$w\in \pA$ follows by an argument symmetric to the
  one of Case~1.
  If~$u$ has no neighbor~$w\in \pA$,
  then~$\switch(u)$ considers all possibilities of placing~$u$ in one of
  the clusters of~$A$. Again, $u$ cannot be in a cluster in $G[A]$ that contains a vertex from~$\pA$, meaning that the groups containing permanent
  vertices may be ignored in the branching. Hence, for one of the produced constraints, $(A,B)$
  fulfills~Conditions~\ref{cond:group-a} and~\ref{cond:group-b}.
  However, special consideration is needed if
  each of $\ssA_1,\ldots,\ssA_k$ has a permanent vertex. Let $(A,B)$ be a valid 2-subcoloring of~$G$ that fulfills~$\C$. Then Condition~\ref{cond:group-a} and~\ref{cond:pa} imply that $G[A]$ has $k$ clusters that each contains a permanent vertex of $\pA$. However, $u$ is not adjacent to any permanent vertices. Hence, $u \not\in A$, which contradicts the correctness of the switching rule that called $\switch(u)$. Therefore, $\C$ cannot be fulfilled, and the rule correctly rejects~$\C$.
\smartqed
\end{proof}

If none of the previous rules applies, then the
constraint directly gives a solution:

\begin{lemma}\label{lem:init-constraint-sc}
  Let~${\cal C}=(\ssA_1,\ldots, \ssB_n,\pA,\pB)$ be a constraint such that
  none of the rules applies. Then~$(\sA,\sB)$ is a valid
  2-subcoloring.
\end{lemma}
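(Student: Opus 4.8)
The plan is to reduce the statement to two structural claims and then combine them: (i) in $G[\sA]$ there is no edge between two distinct groups $\ssA_i,\ssA_j$ ($i\neq j$), and symmetrically no edge between distinct groups $\ssB_i,\ssB_j$; and (ii) each group $\ssA_i$ and each group $\ssB_i$ induces a clique. Granting these, $G[\sA]$ is the disjoint union of the (possibly empty) cliques $G[\ssA_1],\dots,G[\ssA_k]$, hence a cluster graph with at most $k$ clusters, and $G[\sB]$ is likewise a cluster graph; therefore $(\sA,\sB)$ is a valid $2$-subcoloring. By Fact~\ref{fact:p3} claims (i) and (ii) are equivalent to saying that $G[\sA]$ and $G[\sB]$ contain no induced $P_3$, and I would use this formulation throughout.

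For claim (i) I would do a case distinction on the permanence of the endpoints of a putative edge between $\ssA_i$ and $\ssA_j$ with $i\neq j$. If both endpoints are non-permanent, the edge is excluded directly by the definition of a constraint. If both are permanent, it is excluded because \rrref{rr:per-verts} does not apply. If one endpoint $u\in\ssA_i\setminus\pA$ is non-permanent while the other lies in $\ssA_j\cap\pA\subseteq\pA\setminus\ssA_i$, then $u$ would have a permanent neighbour outside its own group and \srref{sr:edge-between} would apply, a contradiction; the $\sB$-side is symmetric. As a consequence, the two neighbours of the middle vertex of any induced $P_3$ in $G[\sA]$ (respectively $G[\sB]$) lie in the same group, so the $P_3$ lies entirely inside one group and it suffices to establish claim (ii).

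For claim (ii), fix a group $\ssA_i$ and two of its vertices $u,w$; I want them adjacent. Since the only way a vertex changes groups is through \brref{br:switch}, which simultaneously marks the vertex permanent, every vertex that is still non-permanent in $\C$ has remained in its seeding group; as the initial constraints of Lemma~\ref{lem:init} set $\ssA_i=A'_i$ and $A'_i$ is a cluster (hence a clique) of $G[A']$, any two non-permanent vertices of $\ssA_i$ are adjacent. If exactly one of $u,w$ is permanent, then the non-permanent one is adjacent to all of $\ssA_i\cap\pA$ because \srref{sr:nonedge-within} does not apply, so $u$ and $w$ are adjacent. The $\sB$-side is symmetric.

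The remaining case, where $u$ and $w$ are both permanent and in the same group, is the main obstacle, and it does \emph{not} follow from reducedness alone: one can write down a constraint that satisfies the definition and is reduced under all rules, yet in which $\ssA_i\cap\pA$ splits into two non-adjacent clusters linked by a common non-permanent neighbour, creating an induced $P_3$. I would therefore prove, as an invariant maintained along every root-to-leaf path of the search tree, that $\ssA_i\cap\pA$ induces a clique for every $i$ (and symmetrically $\ssB_i\cap\pB$). This I would establish by induction over the steps of the algorithm, starting from the constraints of Lemma~\ref{lem:init}, where each $\ssA_i\cap\pA$ has at most one vertex. In the inductive step, \rrref{rr:make-per-vert} makes a vertex permanent inside its seeding cluster, so it joins a clique to which it is already adjacent; and \brref{br:switch} adds a vertex $x$ to a group $\ssA_j$ either because $x$ has a permanent neighbour there, in which case $x$ must be adjacent to \emph{all} of $\ssA_j\cap\pA$ (otherwise $x$ together with two adjacent vertices of the clique $\ssA_j\cap\pA$ forms an induced $P_3$ in $G[\pA]$ and \rrref{rr:per-verts} rejects the constraint), or to a group with $\ssA_j\cap\pA=\emptyset$, where $x$ becomes a new singleton clique; the $\sB$-side is symmetric. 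With this invariant, the two permanent vertices $u,w\in\ssA_i\cap\pA$ are adjacent, which completes claim (ii) and hence the lemma.
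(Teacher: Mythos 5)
Your proof is correct and takes essentially the same route as the paper's: the same decomposition into (i) no edges between distinct groups, argued from the constraint definition together with \rrref{rr:per-verts} and \srref{sr:edge-between}, and (ii) each group inducing a clique, argued from the seeding clusters of~$(A',B')$ for nonpermanent pairs and from \srref{sr:nonedge-within} for mixed pairs. For the permanent--permanent pairs the paper shows that the description of $\switch$ makes $G[\ssA_i\cap\pA]$ connected while \rrref{rr:per-verts} makes it $P_3$-free, hence a clique; your explicit inductive invariant is the same appeal to the algorithm's history packaged differently, and your remark that this subcase cannot be deduced from the reduced constraint alone is precisely the point the paper's phrase ``by the description of $\switch(u)$'' is silently relying on.
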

\begin{proof}
  We need to show that~$G[\sA]$ and~$G[\sB]$ are cluster graphs
  and that~$G[\sA]$ has at most~$k$ clusters. First, we claim that~$G[\ssA_i]$ is a clique for every~$i = 1,\ldots,k$. Every vertex
  in~$\ssA_i \setminus \pA$ is adjacent to every vertex in~$\ssA_i\cap \pA$;
  otherwise,~\srref{sr:nonedge-within} applies. Any two vertices
  in~$\ssA_i\setminus \pA$ are also adjacent, because they are in the same
  cluster of $A'$. It remains to show that $G[\ssA_i \cap \pA]$ is a clique. By the description of $\switch(u)$, if a vertex $x$ is placed into $\ssA_i$ and $\ssA_i \cap \pA \not= \emptyset$, then $x$ is adjacent to a vertex of $\ssA_i \cap \pA$. Hence, $G[\ssA_i \cap \pA]$ is connected. Since~\rrref{rr:per-verts} does not apply, $G[\ssA_i \cap \pA]$ does not contain an induced $P_3$ and, thus, it is a clique. Hence, $G[\ssA_i]$ is a clique, as claimed.

  Second, we claim that there are no
  edges between~$\ssA_i$ and~$\ssA_j$, where~$i\neq j$. Suppose for the sake of a contradiction that $e$ is such an edge. Since~\rrref{rr:per-verts} does not apply, $e$ is incident with at least one nonpermanent vertex. Since~\srref{sr:edge-between} does not apply, $e$ is in fact incident with two nonpermanent vertices. Then $e$ cannot exist by the definition of a constraint. The claim follows.

\looseness=-1
  The combination of the above claims shows that~$G[\sA]$ is a cluster graph with the clusters $\ssA_i$ (some of which may be empty) and, thus, has at most~$k$
  clusters. Similar arguments show that~$G[\sB]$ is a cluster graph: in the above argument, we used only \rrref{rr:per-verts} and~\srrefm{sr:nonedge-within} and~\ref{sr:edge-between}, which apply to vertices in~$\sA$ and~$\sB$ symmetrically. \smartqed
\end{proof}
Using the above rules and lemmas, we can now show the following.

\begin{theorem}\label{thm:ind2col}
\textsc{Inductive 2-Subcoloring} can be solved in $\Oh(k^{2k + 1}\cdot(n+m))$ time.
\end{theorem}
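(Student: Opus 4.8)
The plan is to turn the rules developed above into a branching algorithm and then bound the size of its search tree. The algorithm creates the at most~$2k+2$ initial constraints that survive the pruning of Corollary~\ref{cor:v-init}, and to each search-tree node it applies \rrref{rr:per-verts} and~\rrref{rr:make-per-vert} exhaustively before using \srref{sr:nonedge-within} and~\srref{sr:edge-between} to select a vertex~$u$ on which to branch via $\switch(u)$ (\brref{br:switch}); at a leaf to which no rule applies it returns~$(\sA,\sB)$. Correctness is immediate from the ingredients already in place: Lemma~\ref{lem:init} guarantees that every valid 2-subcoloring fulfills one of the initial constraints, the correctness of the reduction, switching, and branching rules guarantees that fulfillability is preserved along at least one child of every node, and Lemma~\ref{lem:init-constraint-sc} guarantees that an exhausted (non-rejected) leaf yields a valid 2-subcoloring. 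Hence $G$ admits a valid 2-subcoloring if and only if the search reaches an exhausted leaf, which is what the algorithm reports.

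The work lies in the running-time analysis, where I would bound both the branching factor and the number of branchings along any root-leaf path. For the branching factor, note that since \rrref{rr:make-per-vert} has been applied exhaustively, any nonpermanent~$u\in\sA$ to which $\switch(u)$ is applied has neighbors in at most~$k+1$ groups of~$\sB$; hence the corresponding case of \brref{br:switch} opens at most~$(k+1)+1=k+2$ branches. The case~$u\in\sB$ opens one branch per group~$\ssA_j$ with~$\ssA_j\cap\pA=\emptyset$, of which there are at most~$k$. Thus every branching has $\Oh(k)$ children. For the number of branchings from~$\sB$ to~$\sA$, observe that such a branching is performed only into a group~$\ssA_j$ with~$\ssA_j\cap\pA=\emptyset$ and that it makes that group contain a permanent vertex; since permanent vertices never move and there are only the~$k$ groups~$\ssA_1,\dots,\ssA_k$, at most~$k$ such branchings occur on any path.

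The crux is to bound the number of branchings in the opposite direction, which switch a vertex of~$\sA$ into~$\sB$, and I expect this to be the main obstacle, since it is the only place where the otherwise unbounded number of~$\sB$-groups must be controlled. Here the key structural fact is that~$G[A']$ is a cluster graph, so any two vertices in the same cluster~$A'_c$ of~$G[A']$ are adjacent. I would exploit this as follows: as soon as a first vertex~$a$ of~$A'_c$ is switched into~$\sB$ and thereby made permanent, every vertex of~$A'_c$ that is switched later is adjacent to~$a$, a permanent vertex of~$\sB$, and is therefore handled by the reduction (first) bullet of $\switch(u)$---which places it alongside~$a$ without branching---rather than by the branching bullet. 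Hence each cluster of~$G[A']$ accounts for at most one branching from~$\sA$ to~$\sB$ along a path; since~$(A',B')$ is valid, $G[A']$ has at most~$k$ clusters, and the distinguished vertex~$v$ is permanent from the outset and never switches, so there are at most~$k$ such branchings. Combined with the previous paragraph, every root-leaf path therefore carries at most~$2k$ branchings.

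Combining these bounds, the branching structure spanned by the at most~$2k+2$ initial constraints has at most~$(2k+2)\cdot(k+2)^{2k}=\Oh(k^{2k+1})$ branching nodes. It then remains to show that the exhaustive application of \rrref{rr:per-verts} and~\rrref{rr:make-per-vert}, together with the applicability tests of all rules, can be organized---with suitable bookkeeping of the groups and of the permanent status of the vertices---so that the total work spent between two consecutive branchings is~$\Oh(n+m)$; a single neighborhood scan suffices, for instance, to detect a vertex with neighbors in more than~$k+1$ groups of~$\sB$. Realizing a single factor of~$(n+m)$ here requires some care in how the reductions and re-tests are implemented, but is otherwise routine. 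Multiplying the number of branching nodes by this per-step cost then yields the claimed running time of~$\Oh(k^{2k+1}\cdot(n+m))$.
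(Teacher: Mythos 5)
Your proposal is correct and follows essentially the same route as the paper's proof: the same correctness argument via the rules and Lemma~\ref{lem:init-constraint-sc}, the same $\Oh(k)$ branching factor (at most $k+2$ children for $\switch(u)$ with $u\in\sA$ once \rrref{rr:make-per-vert} is exhausted, at most $k$ for $u\in\sB$), and the same $\Oh(k)$ bound on the number of branchings per root--leaf path, where your per-cluster-of-$G[A']$ count of the $\sA$-to-$\sB$ branchings is just a repackaging of the paper's count of $\sB$-groups containing a permanent vertex and rests on the identical facts (permanent vertices in $\sB$ come from $A'$, same-cluster vertices of $G[A']$ are adjacent, and the non-branching bullet of $\switch(u)$ absorbs vertices with a permanent neighbor in $\sB$). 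The one step you defer as routine---organizing the rule tests so that the total work between branchings is linear---is exactly where the paper spends the rest of its proof, via amortization that charges $\Oh(k\cdot\deg(u))$ to each vertex $u$ made permanent together with timestamp and group-index bookkeeping; this is indeed doable but is the part that still needs to be written out.
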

\begin{proof}\looseness=-1
Given the valid 2-subcoloring $(A',B')$ of $G'$, we use Lemma~\ref{lem:init} to generate a set of initial constraints, and reject those which cannot be fulfilled due to Lemma~\ref{lem:mustA}. By Corollary~\ref{cor:v-init}, at most $2k+2$ initial constraints remain, which are associated with the children of the (dummy) root node. For each node of the search tree, we first exhaustively apply the reduction rules on the associated constraint. Afterwards, if there exists a vertex~$u$ to which a switching rule applies, then we apply $\switch(u)$. If $\switch(u)$ does not branch but instead reduces to a new constraint, then we apply the reduction rules exhaustively again, etc.

A \emph{leaf} of the search tree is a node associated either with a constraint that is rejected, or with a constraint to which no rule applies. The latter is called an \emph{exhausted leaf}. If the search tree has an exhausted leaf, then the algorithm answers `yes'; otherwise, it answers `no'. By the correctness of the reduction, branching, and switching rules, and by Lemma~\ref{lem:init-constraint-sc}, graph~$G$ has a valid 2-subcoloring if and only if the search tree has at least one exhausted leaf node. Therefore, the described search-tree algorithm correctly decides an instance of \textsc{Inductive 2-Subcoloring}.

We now bound the running time of the algorithm. Observe that each described reduction rule and the branching rule $\switch()$ either rejects the constraint or makes a vertex permanent. Hence, along each root-leaf path, $\Oh(n)$ rules are applied. Each rule can trivially be tested for applicability and applied in polynomial time. Hence, it remains to bound the number of leaves of the search tree.

As mentioned, at the root of the search tree, we create at most~$\Oh(n)$ constraints, out of which at most $2k+2$ constraints do not correspond to leaf nodes by Lemma~\ref{lem:init}, Corollary~\ref{cor:v-init} and \rrref{rr:make-per-vert}. The only branches are created by a call to $\switch(u)$ for a vertex~$u$ that has only nonpermanent neighbors in the
  other part of the bipartition~$(\sA,\sB)$. Observe that
  if such a vertex~$u\in \sB \setminus \pB$, then in each constraint~$\C'$ constructed by
  $\switch(u)$ the number of groups in~$A^{\C'}_{*}$ that have at least
  one permanent vertex increases by one compared to $\C$. Since each constraint has~$k$
  groups in~$\sA$, this branch can be applied at most~$k$ times along each root-leaf path in the search tree.

\looseness=-1
  Similarly, if~$u\in \sA \setminus \pA$, then in each constraint~$\C'$ constructed by
  $\switch(u)$ the
  number of groups in~$\sB$ that have at least one permanent vertex
  increases by one compared to $\C$. We claim that, if~$\sB$ has~$k$ groups with a
  permanent vertex, then~$u$ has a neighbor
  in~$\pB$. First, each permanent vertex in~$\sB$ is part
  of~$A'$ by the description of the rules. Moreover, the permanent vertices of the $k$ groups in~$\sB$ with a permanent vertex stem from $k$ different
  clusters in~$G[A']$, because $\switch()$ places a vertex of $\sA \setminus \pA$ that has neighbors in $\pB$ in the same group as its neighbors in $\pB$. This implies that one of the clusters in~$G[A']$ that the permanent vertices stem from contains~$u$. Hence, $u$ is adjacent to a vertex in $\pB$, as claimed.
  The claim implies that if $\sB$ has $k$ groups with a permanent vertex, then $\switch(u)$ applied to a vertex $u \in \sA \setminus \pA$ does not branch. Hence, also the branch of $\switch(u)$ in which $u \in \sA \setminus \pA$ is performed at most~$k$ times along each root-leaf path in
  the search tree.

  In summary, the branchings of $\switch(u)$ in which $u\in \sB \setminus \pB$ branch into at most~$k$ cases, and
  the branchings in which $u \in \sA \setminus \pA$ branch into at most~$k+2$ cases,
  since~\rrref{rr:make-per-vert} does not apply. Observe
  that~$k$ of the initial constraints have already one group in~$\sA$ with
  a permanent vertex, and the other~$k+1$ initial constraints have one
  group in~$B$ with a permanent vertex. Thus, if the initial
  constraint~$\C$ places~$v$ in~$\pA$, then the overall number of
  constraints from~$\C$ by branching is at most
  $k^{k-1}\cdot (k+2)^k.$  If the initial constraint~$\C$ places~$v$ in~$\pB$,
  then the overall number of constraints created from~$\C$ by
  branching is at most
  $k^{k}\cdot (k+2)^{k-1}.$
  Altogether, the number of constraints created by branching is thus
  $$(2k+1)\cdot
  k^k\cdot (k+2)^k=(2k+1) \cdot k^k \cdot k^k \cdot [(1+1/(k/2))^{k/2}]^2 =\Oh(k^{2k+1})$$
   after noting that  $[(1+1/(k/2))^{k/2}]^2 =\Oh(1)$. This provides the claimed bound on the number of leaves of the search tree. We now give the detailed proof of the running time.

  Our goal is to achieve (almost)
  linear running time per search-tree node. To this end, we pursue the
  following strategy for applying the reduction, switching, and
  branching rules. Note that each rule except \rrref{rr:make-per-vert}
  can become applicable only due to making a vertex permanent, that
  is, by placing a vertex into $\pA$ or $\pB$. Hence, it suffices to
  check whether any rule applies whenever we make a vertex permanent;
  the applications of \rrref{rr:make-per-vert} which are not caused by
  making a vertex permanent receive special treatment below. We now
  argue that, whenever we make a vertex~$u$ permanent, we can
  determine in $\Oh(k \cdot \deg(u))$~time, whether any rule applies
  (and a vertex to which it applies), after an initial, one-time
  expense of $\Oh(m + n)$~time. After this, we prove that this is
  enough to show the running time bound of $\Oh(k^{2k + 1}\cdot(m +
  n))$.

  First, observe that we can initialize and maintain in $\Oh(m + n)$
  time throughout the algorithm a data structure that allows us to
  determine in $\Oh(1)$ time for an arbitrary vertex to which group it
  belongs and whether it is permanent, and to determine in $\Oh(1)$
  time for an arbitrary group how many permanent vertices it contains.
  We additionally maintain a data structure that allows us to determine in $\Oh(1)$
  time for an arbitrary permanent vertex to which cluster it belongs
  in $G[\pA]$ or $G[\pB]$, and to determine in $\Oh(1)$ time for an
  arbitrary cluster in $G[\pA]$ or $G[\pB]$ how many vertices it
  contains. This can also be done in $\Oh(m + n)$ time overall, since
  each vertex becomes permanent only once.

  To check whether
  \rrref{rr:per-verts} becomes applicable when we make a vertex~$u$
  permanent, it suffices to check whether all permanent
  neighbors of~$u$ are in the same cluster and whether these
  neighbors include all vertices in this cluster. This can clearly be
  done in $\Oh(\deg(u))$ time using the data structures mentioned above.

  For \srref{sr:nonedge-within}, we first iterate over the $\deg(u)$ neighbors of~$u$, labeling each with a ``timestamp'', an integer that is initially~0 and increases whenever we make a vertex permanent. Then, we iterate over a list of vertices in the group of~$u$ (note that, within overall linear
  time per search tree node, we can maintain these lists for all
   groups). For each vertex~$v$ in this list, we check in $\Oh(1)$~time  whether it is labeled with the current timestamp. If not, then~$u$ and~$v$ are not adjacent and \srref{sr:nonedge-within} applies. After iterating over at most $\deg(u) + 1$ vertices, we
  encounter a vertex that is nonadjacent to $u$ if there is one. %

  We can check for the applicability of \srref{sr:edge-between} in
  $\Oh(\deg(u))$ time by examining each neighbor of $u$ using the
  aforementioned data structures.

  For \rrref{rr:make-per-vert}, note first that, except for the possible
  applications after the initial constraints have been created,
  \rrref{rr:make-per-vert} can only become applicable when we move a
  vertex to~$\sB$. Whenever we move a vertex to $\sB$ in any of the
  rules, we also make it permanent. Except for the initial
  applications, which we treat below, it thus suffices to check
  whether \rrref{rr:make-per-vert} becomes applicable whenever we make
  a vertex~$u$ permanent. To do this in $\Oh(k\cdot\deg(u))$~time, we
  maintain throughout the algorithm for each nonpermanent vertex
  in~$\sA$ a list with at most~$k$ entries containing the indices of the groups
  in~$\sB$ in which it has neighbors. This list can be initialized in
  $\Oh(m + n)$
  time in the beginning. Whenever we
  move a vertex~$u$ to $\sA$ (and make it permanent), we initialize
  such a list for $u$. Whenever we move a vertex~$u$ to $\sB$ (and
  make it permanent), we update the list for each neighbor. Both
  cases take $\Oh(k \cdot \deg(u))$~time. At the same time, we
  can check whether \rrref{rr:make-per-vert} becomes applicable and
  identify the vertex to which it becomes applicable.

  Concluding, whenever we make a vertex $u$ permanent, we can
  determine in $\Oh(k\cdot\deg(u))$ time whether any rule becomes
  applicable, and, if so, find a corresponding vertex to apply it to if
  needed. We now show that this suffices to prove an overall running
  time of $\Oh(k^{2k + 1}\cdot(m + n))$~time.

  To see this, note first that each reduction and switching rule,
  including the reduction part of \brref{br:switch}, can be carried
  out in $\Oh(1)$~time, once we have determined whether they are
  applicable, and a vertex to which they are applicable if they need
  one. (By carrying out a switching rule, we mean to add the
  vertex~$u$ to a queue of vertices to which we shall
  apply~$\switch()$.)  Furthermore, we make each vertex permanent only
  once. Hence, the time for checking the applicability and for
  applying the reduction and switching rules along each root-leaf path
  in the search tree is bounded by $\sum_{u \in V}\Oh(k\cdot\deg(u)) =
  \Oh(k\cdot(m + n))$. Herein, we include the reduction part of
  \brref{br:switch} but exclude the applications of
  \rrref{rr:make-per-vert} that are not due to making a vertex
  permanent. This running time also subsumes the time for the branching
  part of \brref{br:switch} by attributing the time
  taken for constructing constraints in a search-tree node to its child nodes.

  To finish the proof, it remains to treat \rrref{rr:make-per-vert} and
  the pruning of the initial constraints generated by
  Lemma~\ref{lem:mustA}. For \rrref{rr:make-per-vert}, observe that,
  after an initial exhaustive application when we place~$v$,
  the only applications are due to moving a vertex between~$\sA$
  and~$\sB$, which we already accounted for above. Furthermore,
  \rrref{rr:make-per-vert} can be exhaustively applied in~$\Oh(m + n)$
  time, since no vertices are moved to $\sB$ in the process. Hence,
  the overall time needed for \rrref{rr:make-per-vert} along a root-leaf path in the search tree is
  $\Oh(k\cdot(m + n))$. For the initial constraints generated by
  Lemma~\ref{lem:mustA}, observe that we can check in linear time
  whether $v$ has more than $k$ neighbors in groups in $\sB$, and, if
  so, make $v$ permanent in $\sA$; then we only generate the appropriate
  at most~$k$ constraints. Otherwise, we can directly create the at
  most $2k + 2$ constraints. In both cases, the initial constraints can be created in~$\Oh(k\cdot (m+n))$ time.  
  The overall time upper bound of $\Oh(k^{2k + 1}\cdot(m + n))$ follows.
   \smartqed
\end{proof}
Given the above theorem, we obtain our running time bound for~\textsc{2-Subcoloring}.

\begin{reptheorem}{thm:main:sub}
  \thmtxtmainsub
\end{reptheorem}
\begin{proof}
Theorem~\ref{thm:ind2col}  and Corollary~\ref{cor:in-rec-k} immediately imply a running time bound of $\Oh(k^{2k+1}\cdot (n^2+nm))$. Moreover, before starting inductive recognition, we may remove all vertices of degree 0 because they can be safely added to~$B$. Afterwards,~$n=\Oh(m)$, giving the claimed running time.
\end{proof}

\section{Further FPT Results}
\label{sec:totalpolar}
In this section, we consider two examples of parameterized \textsc{$(\Pi_A,\Pi_B)$-Recognition} problems for which the branching technique seems to outperform inductive recognition, either in terms of simplicity or efficiency. The first example we consider is that of $(\Pi_A,\Pi_B)$-Recognition problems, in which $|A| \leq k$, where $k$ is the parameter, and $\Pi_B$ can be characterized by a finite set of forbidden induced subgraphs.
We show that these problems can be solved in time $2^{\Oh(k)}n^{\Oh(1)}$ by a straightforward branching strategy. We note that the inductive recognition technique can be used to solve these problem within the same running time, albeit with a little more work. The second example we consider is {\sc 2-Subcoloring} parameterized by the total number of clusters in both sides of the partition. We show that this problem
can be solved in $\Oh(4^k \cdot k^2 \cdot n^2)$ time by a branching strategy, followed by reducing the resulting instances to the \textsc{2-CNF-Sat} problem.
We note that the algorithm in the previous section for {\sc 2-Subcoloring} can be modified to solve this problem, but runs in time $\Oh(k^k nm)$.
\subsection{Parameterizing $(\Pi_A,\Pi_B)$-Recognition by the Cardinality of $A$}
\label{subsec:parambyA}
In this subsection, we consider the parameter consisting of the total number of vertices in $G[A]$. The following proposition shows that a straightforward branching strategy yields a generic fixed-parameter algorithm for many \textsc{$(\Pi_A,\Pi_B)$-Recognition} problems:

\begin{repproposition}{prp:main:size}
  \prptxtmainsize
\end{repproposition}
\begin{proof}
We describe a branching algorithm. Initially, let $A = \emptyset$ and $B = V$. Now consider a branch where we are given $(A,B)$. If $|A| > k$ or $G[A] \notin \Pi_A$, then reject the current branch. Using the assumed finite set $\mathcal{H}$ of forbidden induced subgraphs that characterizes $\Pi_B$, find a forbidden induced subgraph $H=(W,F)$ in $G[B]$. This takes polynomial time, since $\mathcal{H}$ is finite. If $H$ does not exist, then accept the current branch and answer `yes'. Otherwise, branch into $|W|$ branches: for each $w \in W$, we construct the branch where $A' = A \cup \{w\}$ and $B' = B \setminus\{w\}$. If the algorithm never accepts a branch, then we answer `no'. The correctness of the algorithm is straightforward: for each forbidden induced subgraph that the algorithm finds, there is a branch where the algorithm adds the vertex to $A$ that is also in $A$ in a solution. Every node in the search tree has finite degree, since $|W|$ is finite. Moreover, the search tree has depth at most $k$, since each branch adds a single vertex to $A$ and $|A|$ cannot exceed $k$. Hence, the running time is~$2^{\Oh(k)}n^{\Oh(1)}$, as claimed.
\smartqed\end{proof}
If~$\Pi_B$ has only an infinite characterization by forbidden subgraphs, then \textsc{$(\Pi_A,\Pi_B)$-Recognition} parameterized by~$|A|$ is W[2]-hard in some cases: The problem of deleting at most~$k$ vertices in an undirected graph such that the resulting graph has no so-called wheel as induced subgraph is W[2]-hard with respect to~$k$~\cite{Lok08}. Thus, if~$\Pi_A$ is the class of graphs of order at most~$k$ and~$\Pi_B$ is the class of wheel-free graphs, then~\textsc{$(\Pi_A,\Pi_B)$-Recognition} is W[2]-hard with respect to~$|A|$.
\subsection{Parameterizing 2-Subcoloring by the Total Number of Clusters}
\label{subsec:totalparam}

\newcommand{\valid}{valid}

In this subsection, we prove Theorem~\ref{thm:main:sub2} by presenting an \FPT \ algorithm for {\sc 2-Subcoloring} parameterized by the total number of clusters in both sides of the partition.
 Throughout, given a graph $G$ and a nonnegative integer $k$, we call a 2-subcoloring $(A,B)$ of $G$ \emph{\valid} if $G[A]$ and $G[B]$ are cluster graphs that, in total, have at most $k$ clusters. %

 As mentioned, the presented algorithm does not rely on inductive recognition, but instead performs branching on the entire graph, followed (possibly) by reducing the resulting instances to the \textsc{2-CNF-Sat} problem. To describe the branching algorithm, we define a notion of a constraint. Throughout, let $k$ be a nonnegative integer and let $G = (V,E)$ be a graph for which we want to decide whether it has a \valid\ 2-subcoloring.

\begin{definition}
Let $k_1,k_2 \in \mathbb{N}$. A \emph{constraint} $\C$ is a partition \[(\ssA_{1},\ldots,\ssA_{k_{1}}, \ssB_{1},\ldots,\ssB_{k_{2}}, \rC)\] of $V$ such that $G[\sA]$ is a cluster graph with the clusters $\ssA_{1},\ldots,\ssA_{k_{1}}$ and $G[\sB]$ is a cluster graph with the clusters $\ssB_{1},\ldots,\ssB_{k_{2}}$, where $\sA = \bigcup_{i=1}^{k_1} \ssA_i$ and $\sB = \bigcup_{j=1}^{k_2} \ssB_j$. We explicitly allow $k_1 = 0$ or $k_2 = 0$, meaning that there are no parts $\ssA_{i}$ or no parts $\ssB_{i}$ in these cases.
A constraint is \emph{fulfilled} by a \valid\ 2-subcoloring $(A,B)$ of $G$ if $\sA \subseteq A$ and $\sB \subseteq B$.
\end{definition}
The set $\rC$ in a constraint $\C$ denotes the set of remaining vertices that have not been assigned to any cluster (yet).

The following proposition is crucial at several places in the algorithm:

\begin{proposition} \label{prp:total-obs}
If a constraint $\C = (\ssA_{1},\ldots,\ssA_{k_{1}}, \ssB_{1},\ldots,\ssB_{k_{2}}, \rC)$ can be fulfilled by a \valid\ 2-subcoloring $(A,B)$ of $G$, then $G[A]$ contains $k_1$ clusters (say $A_1,\ldots,A_{k_1}$) and $G[B]$ contains $k_2$ clusters (say $B_1,\ldots,B_{k_2}$) such that $\ssA_i \subseteq A_i$ for each $i \in [k_1]$ and $\ssB_j \subseteq B_j$ for each $j\in [k_2]$.
\end{proposition}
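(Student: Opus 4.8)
The plan is to unfold the definitions and invoke the single structural fact about cluster graphs recorded in Fact~\ref{fact:p3}: every connected component of a cluster graph is a clique, and, conversely, there are no edges between distinct components. I will argue for the $A$-side; the $B$-side is entirely symmetric.

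First I would record what the hypothesis gives for free. Since $(A,B)$ fulfills $\C$, the definition of fulfilment yields $\sA \subseteq A$ and $\sB \subseteq B$; and since $(A,B)$ is a \valid\ 2-subcoloring, both $G[A]$ and $G[B]$ are cluster graphs. By the definition of a constraint, each group $\ssA_i$ is a cluster of the cluster graph $G[\sA]$, hence a (nonempty) connected clique, and distinct groups $\ssA_i,\ssA_j$ lie in distinct connected components of $G[\sA]$, so no edge of $G$ joins $\ssA_i$ to $\ssA_j$.

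Next I would assign each group to a cluster of $G[A]$. Since $\sA \subseteq A$ and $\ssA_i$ induces a clique, it is a connected subgraph of the cluster graph $G[A]$, and therefore lies entirely within a single connected component of $G[A]$; call this component $A_i$. By construction $\ssA_i \subseteq A_i$. The key step---and the only part requiring an argument---is that the assignment $i \mapsto A_i$ is injective, so that $A_1,\ldots,A_{k_1}$ are $k_1$ genuinely distinct clusters of $G[A]$. I would prove this by contradiction: if $A_i = A_j$ for some $i \neq j$, pick $u \in \ssA_i$ and $w \in \ssA_j$ (both exist as the groups are nonempty). Then $u$ and $w$ lie in the same cluster $A_i$ of the cluster graph $G[A]$, so $uw \in E(G)$; but $\ssA_i$ and $\ssA_j$ are distinct components of $G[\sA]$, contradicting that no edge joins them.

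Hence the $A_i$ are pairwise distinct, giving $k_1$ clusters of $G[A]$ with $\ssA_i \subseteq A_i$ for each $i \in [k_1]$. Running the symmetric argument on $\ssB_j \subseteq \sB \subseteq B$ produces the required $k_2$ distinct clusters $B_1,\ldots,B_{k_2}$ of $G[B]$ with $\ssB_j \subseteq B_j$, completing the proof. I expect no serious obstacle: the injectivity step is the only subtlety, and it reduces immediately to the defining property that distinct groups of a constraint are pairwise nonadjacent.
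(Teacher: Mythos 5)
Your proposal is correct and follows essentially the same route as the paper's proof: both arguments rest on the observation that, since $G[\sA]$ is a cluster graph with clusters $\ssA_1,\ldots,\ssA_{k_1}$ and $\sA\subseteq A$, two vertices from groups $\ssA_i$ and $\ssA_{i'}$ are adjacent if and only if $i=i'$, which forces each group into a single cluster of $G[A]$ and distinct groups into distinct clusters. Your version merely spells out the injectivity step (via the contradiction with a cross-group edge) that the paper compresses into one sentence.
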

\begin{proof}
Recall that $\sA \subseteq A$ and $\sB \subseteq B$. Moreover, observe that $u \in \ssA_i$ for $i\in [k_1]$ and $v \in \ssA_{i'}$ for $i'\in [k_1]$ are adjacent if and only if $i=i'$, because $G[\sA]$ is a cluster graph with the clusters $\ssA_{1},\ldots,\ssA_{k_{1}}$. Hence, $u \in \ssA_i$ for $i \in [k_1]$ and $v \in \ssA_{i'}$ for $i'\in [k_1]$ are in the same cluster of $G[A]$ if and only $i=i'$. A similar observation holds with respect to $B$.
\smartqed\end{proof}

 The algorithm is a search-tree algorithm. Each node in the search tree
is associated with a constraint. The root of the search tree is associated with the constraint $\C_r = (\rC_r = V)$ (that is, all vertices of $G$ are in $\rC_r$ and there are no clusters). At a node in the search tree, the algorithm searches for a solution that fulfills the associated constraint~$\C$. To this end, the algorithm applies reduction and branching rules that successively tighten~$\C$.
We describe these rules next. As usual, reduction rules are performed exhaustively before branching rules, and the rules are performed in the order they are presented below.
Throughout, let $\C = (\ssA_{1},\ldots,\ssA_{k_{1}}, \ssB_{1},\ldots,\ssB_{k_{2}}, \rC)$ be a constraint associated with a node of the search tree of the algorithm, and let $\sA = \bigcup_{i=1}^{k_1} \ssA_i$ and $\sB = \bigcup_{j=1}^{k_2} \ssB_j$.

\begin{rrule} \label{rr:total-1}
If $k_1 + k_2>k$, then reject $\C$.
\end{rrule}
\begin{proof}[Proof of correctness]%
By Proposition~\ref{prp:total-obs}, no $2$-subcoloring of $G$ that fulfills $\C$ can be \valid.
\smartqed\end{proof}
For simplicity, we call $\ssA_{1},\ldots,\ssA_{k_{1}}$ the \emph{clusters of $\sA$} and $\ssB_{1},\ldots,\ssB_{k_{2}}$ the \emph{clusters of $\sB$}.  When we \emph{open} a new cluster in $\sA$ with a vertex $v \in \rC$, we create the constraint $\C' = (A^{\C'}_{1},\ldots,A^{\C'}_{k_{1}+1}, B^{\C'}_{1},\ldots,B^{\C'}_{k_{2}}, R^{\C'})$ where $A^{\C'}_i = \ssA_i$ for $i\in [k_1]$, $A^{\C'}_{k_1+1} = \{v\}$, $B^{\C'}_j = \ssB_j$ for $j\in [k_2]$, and $R^{\C'} = \rC \setminus \{v\}$. Opening a new cluster in $\sB$ is similarly defined. When we \emph{add} a vertex $v \in \rC$ to $\sA$, we create the constraint $\C'$ that differs from $\C$ in that $v \not\in R^{\C'}$ but instead $v \in A^{\C'}_{i}$, where $v$ is adjacent to all vertices of $\ssA_i$. Adding a vertex to $\sB$ is similarly defined.  Note that we can only open a new cluster or add a vertex to a cluster if this indeed yields a constraint; for the vertices to which these operations are applied during the algorithm, this will always be ensured.

The next reduction rule treats vertices which have to be in~$\sA$ or $\sB$ due to connections to the corresponding clusters.
\begin{rrule} \label{rr:total-2}
If $v \in \rC$ is adjacent to two clusters of $\sA$ (resp.\ $\sB$) or else if $v$ is adjacent to some but not all vertices of a cluster of $\sA$ (resp.\ $\sB$), then
\begin{itemize}
\item if $v$ is adjacent to two clusters of $\sB$ (resp.\ $\sA$) or if $v$ is adjacent to some but not all vertices of a cluster of $\sB$ (resp.\ $\sA$), then reject $\C$;
\item if $v$ is not adjacent to any cluster in $\sB$ (resp.\ $\sA$), then open a new cluster in $\sB$ (resp.~$\sA$) with $v$;
\item otherwise, add $v$ to $\sB$ (resp.\ to $\sA$).
\end{itemize}
\end{rrule}
\begin{proof}[Proof of correctness]%
Suppose that $v \in \rC$ is adjacent to two clusters of $\sA$. Then in any \valid\ 2-subcoloring $(A,B)$ of $G$ that fulfills $\C$, $v$ is also adjacent to two clusters of $A$ by Proposition~\ref{prp:total-obs}. Hence, $v \in B$. Similarly, if $v$ is adjacent to some but not all vertices of a cluster of $\sA$, then $v$ is also adjacent to some but not all vertices of a cluster of $A$ by Proposition~\ref{prp:total-obs}, and thus $v\in B$. By a symmetric argument, this immediately shows that the first part of the rule correctly rejects $\C$.

Suppose that $v$ is not adjacent to any cluster in $\sB$. Then in any \valid\ 2-subcoloring $(A,B)$ of $G$ that fulfills $\C$ and that satisfies $v \in B$, $v$ must be in a different cluster of $G[B]$ than the vertices of $\sB$ by Proposition~\ref{prp:total-obs}, because $v$ is not adjacent to any vertex of $\sB$. Hence, the second part of the rule correctly opens a new cluster with $v$.

If none of the previous parts of the rule applied to $v$, then $v$ is adjacent to all vertices of a cluster $\ssB_i$ of $\sB$. Then in any \valid\ 2-subcoloring $(A,B)$ of $G$ that fulfills $\C$ and that satisfies $v \in B$, $v$ must be in the same cluster as the vertices of $\ssB_i$ by Proposition~\ref{prp:total-obs}. Hence, the third part of the rule correctly adds $v$ to $\ssB_i$.
\smartqed\end{proof}
Several types of ambiguous vertices remain. We next describe four branching rules that treat some of these vertices, the status of the remaining ones will be determined via a reduction to
{\sc 2-CNF-Sat}. The first type of vertices have no connections to any
cluster and are treated by the following rule.
\begin{brule} \label{br:total-1}
If $v \in \rC$ is not adjacent to any cluster of $\sA$ or $\sB$, then branch into two branches: in the first, open a new cluster in $\sA$ with $v$; in the second, open a new cluster in $\sB$ with $v$.
\end{brule}
\begin{proof}[Proof of correctness]
Let $(A,B)$ be any \valid\ 2-subcoloring of $G$ that fulfills $\C$. Suppose that $v \in A$. Then $v$ must be in a different cluster of $G[A]$ than the vertices of $\sA$ by Proposition~\ref{prp:total-obs}, because $v$ is not adjacent to any vertex of $\sA$. Hence, $(A,B)$ fulfills the constraint generated in the first branch. Similarly, if $v \in B$, then $(A,B)$ fulfills the constraint generated in the second branch.
\smartqed\end{proof}
After this rule has been applied, we can obtain a useful partition of $\rC$, classifying vertices according to the clusters in $\sA$ or $\sB$ to which they belong, if they are put into $\sA$ or $\sB$, respectively.

\begin{proposition} \label{prp:total-obs2}
If~\rrref{rr:total-2} and~\brref{br:total-1} cannot be applied, then $\rC$ can be partitioned into sets $N_{\ssA_i}$, $N_{\ssB_j}$, and $N_{\ssA_i\ssB_j}$ for $i\in [k_1]$ and $j\in [k_2]$, where:
\begin{itemize}
\item any vertex in $N_{\ssA_i}$ is adjacent to all vertices of cluster $\ssA_i$ and not to any other clusters of $\sA$ nor any clusters of~$\sB$;
\item any vertex in $N_{\ssB_j}$ is adjacent to all vertices of cluster $\ssB_j$ and not to any other clusters of $\sB$ nor any clusters of~$\sA$; and
\item any vertex in $N_{\ssA_i\ssB_j}$ is adjacent to all vertices of clusters $\ssA_i$ and $\ssB_j$ and not to any other clusters of $\sA$ or $\sB$.
\end{itemize}
\end{proposition}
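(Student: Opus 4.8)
The plan is to fix an arbitrary vertex $v\in \rC$ and argue, using only the inapplicability of \rrref{rr:total-2} and \brref{br:total-1}, that $v$ falls into exactly one of the three advertised types; the desired partition of $\rC$ is then obtained by grouping the vertices of $\rC$ according to which type they realize. Throughout, I read ``$v$ is adjacent to a cluster'' as ``$v$ has at least one neighbor in that cluster.''

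First I would extract the two structural consequences of the rules being inapplicable. Since \brref{br:total-1} does not apply, $v$ is adjacent to at least one cluster of $\sA$ or of $\sB$. Since \rrref{rr:total-2} does not apply to $v$, its triggering condition fails for $\sA$ and, by the symmetric (``resp.'') reading, for $\sB$ as well. Parsing this trigger carefully yields two facts: (i) $v$ is adjacent to at most one cluster of $\sA$ and at most one cluster of $\sB$ (otherwise it would be adjacent to two clusters, triggering the rule); and (ii) for every cluster $\ssA_i$ or $\ssB_j$, vertex $v$ is adjacent either to all of its vertices or to none of them (otherwise $v$ would be adjacent to some but not all vertices of a cluster, again triggering the rule). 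Correctly disentangling these two consequences from the compact ``or else''/``resp.'' phrasing of \rrref{rr:total-2} is the only delicate point, and I expect it to be the main (minor) obstacle; everything afterwards is a mechanical case check.

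Combining (i) and (ii), let $a\in\{0,1\}$ be the number of clusters of $\sA$ to all of whose vertices $v$ is adjacent, and $b\in\{0,1\}$ the analogous count for $\sB$; by inapplicability of \brref{br:total-1} we have $a+b\ge 1$. I would then distinguish the three resulting cases. If $a=1,b=0$, writing $\ssA_i$ for the unique $\sA$-cluster seen by $v$, the vertex $v$ is adjacent to all of $\ssA_i$ and to no other cluster of $\sA$ and to no cluster of $\sB$, so I place $v\in N_{\ssA_i}$. Symmetrically, if $a=0,b=1$ then $v\in N_{\ssB_j}$ for the unique $\sB$-cluster $\ssB_j$ it sees. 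Finally, if $a=1,b=1$, then $v$ is adjacent to all of a unique $\ssA_i$ and all of a unique $\ssB_j$ and to no other cluster, so $v\in N_{\ssA_i\ssB_j}$. These three cases are exhaustive (since $a,b\in\{0,1\}$ and $a+b\ge 1$) and mutually exclusive (the adjacency pattern to the clusters uniquely determines $a$, $b$, and the indices $i,j$ involved), so every vertex of $\rC$ lands in exactly one of the sets $N_{\ssA_i}$, $N_{\ssB_j}$, $N_{\ssA_i\ssB_j}$. Reading off the defining adjacency properties of each set directly from the case it came from completes the argument.
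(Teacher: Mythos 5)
Your proposal is correct and follows essentially the same route as the paper: both arguments extract from the inapplicability of \brref{br:total-1} that each vertex of $\rC$ sees at least one cluster, and from the inapplicability of \rrref{rr:total-2} that it sees at most one cluster on each side and is adjacent to all vertices of any cluster it sees, then sort vertices into the three sets accordingly. Your write-up merely makes the paper's terse three-sentence argument explicit via the $a,b\in\{0,1\}$ case analysis.
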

\begin{proof}
By~\brref{br:total-1}, any vertex $v \in \rC$ is adjacent to at least one cluster of $\sA$ or $\sB$. By~\rrref{rr:total-2}, $v$ cannot be adjacent to two clusters of $\sA$ or two clusters of~$\sB$. Furthermore, again by~\rrref{rr:total-2}, if $v$ is adjacent to a cluster $\ssA_i$ of $\sA$ (resp.\ $\ssB_i$ of $\sB$), then $v$ is adjacent to all vertices of that cluster.
\smartqed\end{proof}
This partition of $\rC$ enables further branching rules. The following two branching rules take care of pairs of vertices that cannot belong to the same cluster in one of the parts~$\sA$, $\sB$.
\begin{brule} \label{br:total-2}
If $u,v \in N_{\ssA_i}$ for some $i\in [k_1]$ (resp.\ $u,v\in N_{\ssB_j}$ for some $j\in [k_2]$) and $uv \not\in E$, then branch into two branches: in the first, open a new cluster in $\sB$ (resp.\ $\sA$) with $u$; in the second, open a new cluster in $\sB$ (resp.\ $\sA$) with $v$.
\end{brule}
\begin{proof}[Proof of correctness]
Let $(A,B)$ be any \valid\ 2-subcoloring of $G$ that fulfills $\C$. Suppose $u,v\in N_{\ssA_i}$ for some $i\in [k_1]$. Then $\ssA_i \subseteq A_i$ for some cluster $A_i$ of $G[A]$ by Proposition~\ref{prp:total-obs}. Since $uv \not \in E$, at least one of $u,v$ is not in $A_i$. In fact, since $u$ and $v$ are adjacent to all vertices of $\ssA_i$ by definition, at least one of $u,v$ is not in $A$. By definition, $u$ and $v$ are not adjacent to any vertex of $\sB$. Hence, at least one of $u,v$ must be in a different cluster of $G[B]$ than the vertices of $\sB$ by Proposition~\ref{prp:total-obs}. Therefore, $(A,B)$ fulfills the generated constraint in at least one of the branches. The argument in case $u,v\in N_{\ssB_j}$ for some $j\in [k_2]$ follows symmetrically.
\smartqed\end{proof}
\begin{brule} \label{br:total-3}
If $u \in N_{\ssA_i}$ and $v \in N_{\ssA_{i'}}$ for some $i,i' \in [k_1]$ with $i \not= i'$ (resp.\ $u \in N_{\ssB_j}$ and $v \in N_{\ssB_{j'}}$ for some $j,j' \in [k_2]$ with $j \not= j'$) and $uv \in E$, then branch into two branches: in the first, open a new cluster in $\sB$ (resp.\ $\sA$) with $u$; in the second, open a new cluster in $\sB$ (resp.\ $\sA$) with $v$.
\end{brule}
\begin{proof}[Proof of correctness]
Let $(A,B)$ be any \valid\ 2-subcoloring of $G$ that fulfills $\C$. Suppose that $u \in N_{\ssA_i}$ and $v \in N_{\ssA_{i'}}$ for some $i,i' \in [k_1]$ with $i \not= i'$. Then $\ssA_i \subseteq A_i$ for some cluster $A_i$ of $G[A]$ and $\ssA_{i'} \subseteq A_{i'}$ for some cluster $A_{i'}$ of $G[A]$ by Proposition~\ref{prp:total-obs}, where $i\not=i'$. Since $uv\in E$, $u \not\in A_i$ or $v \not\in A_{i'}$. In fact, since $u$ (resp.\ $v$) is adjacent to all vertices of $\ssA_i$ (resp.\ $\ssA_{i'}$) by definition, at least one of $u,v$ is not in $A$. By definition, $u$ and $v$ are not adjacent to any vertex of $\sB$. Hence, at least one of $u,v$ must be in a different cluster of $G[B]$ than the vertices of $\sB$ by Proposition~\ref{prp:total-obs}. Therefore, $(A,B)$ fulfills the generated constraint in at least one of the branches. The argument in case $u \in N_{\ssB_j}$ and $v \in N_{\ssB_{j'}}$ for some $j,j' \in [k_2]$ with $j \not= j'$ follows symmetrically.
\smartqed\end{proof}
Below, let $I = \{ i \mid 1 \leq i \leq k_1, N_{\ssA_i} \not= \emptyset \}$ and let $J = \{ j \mid 1 \leq j \leq k_2, N_{\ssB_j} \not= \emptyset \}$.  If none of the above rules applies, then we have the following:

\begin{proposition} \label{prp:total-obs3}
If none of the above rules applies, then $G[\bigcup_{i\in I} N_{\ssA_i}]$ is a cluster graph whose clusters are $N_{\ssA_i}$, where $i\in I$,  and $G[\bigcup_{j \in J} N_{\ssB_j}]$ is a cluster graph whose clusters are $N_{\ssB_j}$, where $j \in J$.
\end{proposition}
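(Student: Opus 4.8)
The plan is to verify the two defining properties of a cluster graph separately for each part, relying entirely on the inapplicability of \brref{br:total-2} and \brref{br:total-3} together with the structure of the partition of $\rC$ established in Proposition~\ref{prp:total-obs2}. I will argue the claim for $\sA$ in detail; the claim for $\sB$ then follows by the symmetric halves of the same two rules.

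First I would show that each $N_{\ssA_i}$ with $i \in I$ induces a clique. Take any two distinct vertices $u, v \in N_{\ssA_i}$. If $uv \notin E$, then \brref{br:total-2} would apply to the pair $u, v$, contradicting the assumption that no rule applies. Hence every pair of vertices in $N_{\ssA_i}$ is adjacent, so $G[N_{\ssA_i}]$ is a clique. Next I would rule out edges between distinct sets $N_{\ssA_i}$ and $N_{\ssA_{i'}}$ with $i, i' \in I$ and $i \neq i'$. Suppose toward a contradiction that $u \in N_{\ssA_i}$, $v \in N_{\ssA_{i'}}$, and $uv \in E$. Then \brref{br:total-3} would apply to $u, v$, again contradicting inapplicability. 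Thus $\bigcup_{i \in I} N_{\ssA_i}$ decomposes into pairwise non-adjacent cliques $N_{\ssA_i}$, which is exactly the statement that $G[\bigcup_{i\in I} N_{\ssA_i}]$ is a cluster graph whose clusters are the nonempty sets $N_{\ssA_i}$, $i \in I$. The same two steps, using the symmetric parts of \brref{br:total-2} and \brref{br:total-3}, establish the analogous statement for $G[\bigcup_{j \in J} N_{\ssB_j}]$.

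I do not anticipate a substantial obstacle: the proof is essentially a direct reading-off of the two branching rules once the partition of $\rC$ from Proposition~\ref{prp:total-obs2} is in hand, since that partition already guarantees that vertices of $N_{\ssA_i}$ have no neighbors in any other cluster of $\sA$ or in $\sB$. The only point requiring mild care is the restriction to the index sets $I$ and $J$, which merely discards the empty sets $N_{\ssA_i}$ (resp.\ $N_{\ssB_j}$) so that the listed clusters are genuinely the nonempty connected components; the intra-clique and cross-cluster non-adjacency arguments are unaffected by whether some $N_{\ssA_i}$ happen to be empty.
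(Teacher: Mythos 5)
Your proof is correct and follows essentially the same route as the paper's: non-adjacency within some $N_{\ssA_i}$ triggers \brref{br:total-2}, adjacency across distinct $N_{\ssA_i}$, $N_{\ssA_{i'}}$ triggers \brref{br:total-3}, and the $\sB$ side is symmetric. The paper's version is just a terser statement of the same two observations.
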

\begin{proof}
If $N_{\ssA_i}$ is not a clique, then~\brref{br:total-2} applies. If a vertex of $N_{\ssA_i}$ is adjacent to a vertex of $N_{\ssA_{i'}}$ for $i \in I$ and $i' \in I \setminus \{i\}$, then~\brref{br:total-3} applies. The same holds mutatis mutandis with respect to $N_{\ssB_j}$.
\smartqed\end{proof}
The following branching rule is special in the sense that it will be applied only once in a root-leaf path of the corresponding search tree.

\begin{brule} \label{br:total-4}
Let $I = \{ i \mid 1 \leq i \leq k_1, N_{\ssA_i} \not= \emptyset \}$ and $J = \{ j \mid 1 \leq j \leq k_2, N_{\ssB_j} \not= \emptyset \}$. For each $I' \subseteq I$ and $J' \subseteq J$ such that $|I'| + |J'| \leq k-k_1-k_2$, branch into a branch where we:
\begin{itemize}
\item for each $i \in I'$, open a new cluster in $\sB$ with a fresh dummy vertex $s_i$ that is made adjacent to all vertices of $N_{\ssA_i}$;
\item for each $j \in J'$, open a new cluster in $\sA$ with a fresh dummy vertex $t_j$ that is made adjacent to all vertices of $N_{\ssB_j}$;
\item for each $i \in I \setminus I'$, add all vertices of $N_{\ssA_i}$ to $\ssA_i$; and
\item for each $i \in J \setminus J'$, add all vertices of $N_{\ssB_j}$ to $\ssB_j$.
\end{itemize}
\end{brule}
\begin{proof}[Proof of correctness]
  Recall Proposition~\ref{prp:total-obs3}. Let $(A,B)$ be any
  \valid\ 2-subcoloring of $G$ that fulfills
  $\C$. %
  Let~$I^*$ denote the set containing all integers~$i$ such that~$B$
  contains at least one vertex from~$N(\ssA_i)$. Similarly, let~$J^*$
  denote the set containing all integers~$j$ such that~$A$ contains at
  least one vertex from~$N(\ssB_j)$. Each integer in~$I^*\cup J^*$
  corresponds to a distinct cluster that does not intersect with any
  cluster of the constraint~$\C$. Thus, since~$(A,B)$
  is~\valid, we have $|I^*|+|J^*|\le k-k_1-k_2$. Consequently,
  there is a constraint~$\C'$ that was created in the branch
  where~$I'=I^*$ and~$J'=J^*$. Let~$G'$ denote the graph constructed
  for this branch, that is, the graph~$G$ plus the dummy
  vertices. Then, the bipartition~$(A':=A\cup \{s_i\mid i\in
  I'\},B':=B\cup \{t_j\mid j\in J'\})$ of~$G'$ is a solution
  fulfilling the constraint~$\C'$: For each~$i\in I\setminus
  I'$, $A'$ contains a cluster~$A_i=A^\C_i$.  Moreover, for each~$i\in
  I'$,~$B'$ contains a cluster~$B_i=N(\ssA_i)\cup \{s_i\}$. The
  clusters for each~$j\in J$ can be defined
  symmetrically. Since~$(A,B)$ is a 2-subcoloring of~$G$ and by the
  construction of the~$s_i$ and~$t_j$, there are no edges between
  different clusters in~$A$ or in~$B$. Hence,~$(A',B')$ is a
  2-subcoloring. Moreover, the bipartition~$(A',B')$ does not contain
  any further clusters and by the restriction on~$|I'|+|J'|$ it is
  thus~\valid.
\smartqed\end{proof}
This completes the description of the reduction and branching
rules. After exhaustive application of the rules, the problem of
finding a solution fulfilling the constraint associated with a node
can be solved by reduction to {\sc 2-CNF-Sat}.
\begin{definition}
  Say that a node in the search tree is \emph{exhausted} if none of the reduction and branching rules apply in that node.
\end{definition}

\begin{lemma}\label{lem:total-1}
  There is an algorithm that takes as input the
  graph~$G = (V, E)$ and a constraint
  $\C = (\ssA_{1},\ldots,\ssA_{k_{1}}, \ssB_{1},\ldots,\ssB_{k_{2}},
  \rC)$
  associated with an exhausted leaf node, and decides in $\Oh(n^2)$ time whether $G$ has
  a \valid\ 2-subcoloring that fulfills $\C$.
\end{lemma}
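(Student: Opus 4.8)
The plan is to reduce to \textsc{2-CNF-Sat}, exploiting that at an exhausted leaf every unassigned vertex has only a binary choice with a predetermined target cluster on each side. First I would establish the key structural property: since none of \rrref{rr:total-2}, \brref{br:total-1}, and \brref{br:total-4} applies, every vertex $v \in \rC$ is adjacent to exactly one cluster $\ssA_{a(v)}$ of $\sA$ and exactly one cluster $\ssB_{b(v)}$ of $\sB$, and to all of their vertices. Indeed, inapplicability of \brref{br:total-1} gives that $v$ is adjacent to at least one cluster; inapplicability of \rrref{rr:total-2} gives that $v$ is adjacent to at most one cluster in each of $\sA$ and $\sB$ and, if so, to all of its vertices; and, by Proposition~\ref{prp:total-obs2}, inapplicability of \brref{br:total-4} forces the one-sided sets $N_{\ssA_i}$ and $N_{\ssB_j}$ to be empty, so that $v$ lies in some $N_{\ssA_i\ssB_j}$.

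Given this, I would observe that in any \valid\ 2-subcoloring $(A,B)$ fulfilling $\C$, Proposition~\ref{prp:total-obs} yields clusters $A_1 \supseteq \ssA_1, \ldots, A_{k_1} \supseteq \ssA_{k_1}$ of $G[A]$ and $B_1 \supseteq \ssB_1, \ldots, B_{k_2} \supseteq \ssB_{k_2}$ of $G[B]$. Since each $v \in \rC$ is adjacent to all of $\ssA_{a(v)}$ but to no other cluster of $\sA$, placing $v$ in $A$ forces $v \in A_{a(v)}$; symmetrically, placing $v$ in $B$ forces $v \in B_{b(v)}$. Thus no vertex of $\rC$ can open a new cluster, so the resulting 2-subcoloring has at most $k_1 + k_2$ clusters and is automatically \valid, because $k_1 + k_2 \le k$ as \rrref{rr:total-1} does not apply. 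Deciding fulfillment therefore reduces to choosing, for each $v \in \rC$, between its forced target $A_{a(v)}$ and its forced target $B_{b(v)}$.

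I would encode this as a 2-CNF formula with one variable $x_v$ per $v \in \rC$, reading $x_v = \mathrm{true}$ as ``$v \in A$''. For every pair $u,v \in \rC$ I add the clause $(\neg x_u \vee \neg x_v)$ exactly when placing both in $A$ would violate the cluster structure of $G[A]$, i.e.\ when $a(u) = a(v)$ and $uv \notin E$, or $a(u) \neq a(v)$ and $uv \in E$; and symmetrically the clause $(x_u \vee x_v)$ exactly when placing both in $B$ would violate $G[B]$, i.e.\ when $b(u) = b(v)$ and $uv \notin E$, or $b(u) \neq b(v)$ and $uv \in E$. No clause is needed when one of $u,v$ goes to $A$ and the other to $B$, since edges across the bipartition do not affect the cluster-graph conditions, and the adjacencies of $\rC$-vertices to the fixed clusters $\ssA_{a(v)}, \ssB_{b(v)}$ are consistent by the structural property. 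A routine two-way verification then establishes that $G$ has a \valid\ 2-subcoloring fulfilling $\C$ if and only if the formula is satisfiable: from a satisfying assignment set $A := \sA \cup \{v : x_v\}$, $B := \sB \cup \{v : \neg x_v\}$ and check that each $A_i$ and $B_j$ is a clique with no edges to the others; conversely, from a fulfilling 2-subcoloring set $x_v = \mathrm{true}$ iff $v \in A$ and verify each clause against the forced memberships above.

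For the running time, I would precompute $a(v)$ and $b(v)$ for all $v \in \rC$ in $\Oh(n+m)$ time; then each of the $\Oh(n^2)$ pairs contributes at most two clauses, each decidable in $\Oh(1)$ time via an adjacency matrix, so the formula has $\Oh(n^2)$ size and is built in $\Oh(n^2)$ time. Since \textsc{2-CNF-Sat} is solvable in time linear in the formula size, the whole test runs in $\Oh(n^2)$ time. The main obstacle, and the real content of the proof, is the structural step --- pinning down that at an exhausted leaf every remaining vertex is a two-sided $N_{\ssA_i\ssB_j}$-vertex with a predetermined cluster on each side --- since this is exactly what makes the extension problem pairwise, hence expressible in 2-CNF, rather than one that might require opening fresh clusters; the clause construction and the equivalence are then direct case analyses.
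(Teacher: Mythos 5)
Your proposal is correct and matches the paper's proof essentially step for step: the same reduction to \textsc{2-CNF-Sat} with one variable per vertex of $\rC$, the identical clause construction distinguishing the four adjacency/cluster-index cases on each side, and the same $\Oh(n^2)$ construction via an adjacency matrix plus a precomputed map from each remaining vertex to its pair of target clusters. Your justification of the structural step (deriving that every vertex of $\rC$ lies in some $N_{\ssA_i\ssB_j}$ from the inapplicability of the rules, and that validity of the cluster count is automatic since \rrref{rr:total-1} does not apply) is if anything slightly more explicit than the paper's, but it is the same argument.
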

\begin{proof}
  We construct an instance of {\sc 2-CNF-Sat} (satisfiability of Boolean formulas in the conjunctive normal form in which each clause contains at most two literals) in $\Oh(n^2)$ time that can be satisfied if and only if $\C$ can be fulfilled. Since {\sc 2-CNF-Sat} is solvable in linear time (see \citet{papa} for example), the algorithm runs in $O(n^2)$ time.

  Observe that it suffices to place the vertices of $\rC$ into clusters. So let $v$ be a vertex in~$\rC$. Since no rules apply to $\C$ and in particular \brref{br:total-2} and \brref{br:total-3} do not apply, $v \in N_{\ssA_i\ssB_j}$ for some $i\in [k_1]$ and $j\in [k_2]$. Suppose that $(A,B)$ is a \valid\ 2-subcoloring that fulfills $\C$. It follows from Proposition~\ref{prp:total-obs} that if $v \in A$, then $v \in A_i$, and if $v \in B$, then $v \in B_j$. Therefore, what is left is to decide is whether the vertices in the sets $N_{\ssA_i\ssB_j}$ can be partitioned between the $A$ and $B$ in such a way that $v$ can only be placed in one of its two associated clusters, and such that the resulting bipartition is a 2-subcoloring of~$V$. We model this as an instance $\Phi$ of {\sc 2-CNF-Sat} as follows.

  For each vertex $v \in \rC$, we create a Boolean variable $x_v$. Assigning $x_v=1$ corresponds to adding $v$ to its associated cluster in $A$, and assigning $x_v=0$ corresponds to adding $v$ to its associated cluster in $B$. The Boolean formula $\Phi$ is constructed as follows. For every two vertices $v, v' \in \rC$, let (say) $v \in N_{\ssA_i\ssB_j}$ and $v' \in N_{\ssA_{i'}\ssB_{j'}}$. We add the clause $(\co{x_v} \vee \co{x_{v'}})$ to $\Phi$ if $v$ and $v'$ are adjacent but $i \neq i'$ or if $v$ and $v'$ are nonadjacent but $i=i'$; we add the clause $(x_v \vee x_{v'})$ to $\Phi$ if $v$ and $v'$ are adjacent but $j \neq j'$ or if $v$ and $v'$ are nonadjacent but $j=j'$. In both cases, the added clauses enforce that $v$ and $v'$ are assigned to different cluster-groups in any satisfying assignment of $\Phi$. This completes the construction of $\Phi$. It is not difficult to verify that $\C$ can be fulfilled if and only if $\Phi$ is satisfiable. Indeed, the correctness proof follows along similar lines as the correctness of \brrefm{br:total-2} and~\ref{br:total-3}.

  To carry out the construction of the 2-CNF formula in $\Oh(n^2)$ time, proceed as follows. First, construct the adjacency matrix of $G$ in $\Oh(n^2)$ time. Then, for each vertex $v \in \rC$ determine~$i$ and $j$ such that $v \in N_{\ssA_i\ssB_j}$. That is, compute an array that maps each vertex~$v$ to the tuple~$(i, j)$. This can be done in $\Oh(m)$ time by iterating over all neighbors of the clusters in $\sA$ and $\sB$, and setting the entries of the arrays corresponding to each neighbor accordingly. Finally, iterate over all pairs of vertices in $\rC$, determine whether they are adjacent in $\Oh(1)$ time, determine whether their $i$ and $j$-values differ in $\Oh(1)$~time using the computed array, and add the clause in $\Oh(1)$ time accordingly. Hence, the algorithm runs in $\Oh(n^2)$ time, as required.
\end{proof}

We now combine the reduction and branching rules with Lemma~\ref{lem:total-1} and prove the overall running-time bound.
\begin{reptheorem}{thm:main:sub2}
  \thmtxtmainsubtwo
\end{reptheorem}
\begin{proof}%
   As outlined in the beginning of this section, the algorithm works as follows. We start with the constraint $\C_r = (\rC_r = V)$. Then we apply the reduction and branching rules exhaustively, in the order they were presented. A leaf of the search tree is then a node of the search tree with associated constraint $\C$ such that either $\C$ is rejected or none of the rules applies to $\C$. The latter is an \emph{exhausted leaf}. Lemma~\ref{lem:total-1} shows that there is an $\Oh(n^2)$~time algorithm that decides if there is a \valid\ 2\nobreakdash-subcoloring of $G$ that fulfills the constraint associated with an exhausted leaf. If this algorithm accepts the constraint associated with at least one exhausted leaf, then we answer that $G$ has a \valid\ 2-subcoloring; otherwise, we answer that $G$ does not have a \valid\ 2-subcoloring.

The algorithm is correct, because by the correctness of the rules there is an exhausted leaf of the spanning tree for which the associated constraint can be fulfilled if and only if $G$ has a \valid\ 2-subcoloring. Lemma~\ref{lem:total-1} implies that an exhausted leaf will be accepted if and only if~$G$ has a \valid\ 2-subcoloring.

We now analyze the running time. We first claim that the algorithm never executes \brref{br:total-4} more than once in a root-leaf path of the search tree. By the description of the algorithm, none of the other branching rules are applicable if \brref{br:total-4} is applied. Furthermore, it is clear from the description of the branching rules that none of the branching rules are applicable after \brref{br:total-4} has been applied. Indeed, after \brref{br:total-4} has been applied, observe that every vertex of $\rC$ is in $N_{\ssA_i,\ssB_j}$ for some $i,j$ for the associated constraint $\C$ which contradicts the prerequisites of the other branching rules. As for the reduction rules, clearly, neither \rrref{rr:total-1} nor \rrref{rr:total-2} can become applicable after applying \brref{br:total-4}. The claim follows, that is, \brref{br:total-4} is applied at most once in a root-leaf path of the search tree.

To see that the search tree has $\Oh(4^k)$ nodes, note that \brrefm{br:total-1}, \ref{br:total-2}, and~\ref{br:total-3} are two-way branches, and each of these rules opens exactly one new cluster in each branch. By \rrref{rr:total-1}, at most $k+1$ clusters can be opened, and thus these branching rules lead to $\Oh(2^{k})$ nodes of the search tree. After all these rules have been applied, possibly \brref{br:total-4} will be applied. Since \rrref{rr:total-1} cannot be applied, \brref{br:total-4} yields at most $2^k$ branches. As shown above, no further branching rules will be applied after \brref{br:total-4} has been applied. Hence, the search tree has $\Oh(4^k)$ leaves.

Finally, we analyze the running time along a root-leaf path in the search tree. We look at all rules individually and combine their running time bounds with Lemma~\ref{lem:total-1}. For \rrref{rr:total-1}, clearly, whenever we open a cluster, we can check in $\Oh(1)$~time whether it becomes applicable and carry out the reduction rule accordingly. Hence, the time needed for \rrref{rr:total-1} is clearly bounded by $\Oh(n^2)$.

For \rrref{rr:total-2}, note that it only becomes applicable to a vertex~$v$ if either a neighbor is put into one of the clusters in $\sA$ or $\sB$, or if a new cluster is opened with a neighbor of~$v$. We claim that, whenever we open a cluster with a vertex~$u$ and whenever we put a vertex~$u$ into a cluster, we can decide in $\Oh(k\deg(u))$ time whether \rrref{rr:total-2} becomes applicable. The overall running time for checking and applying \rrref{rr:total-2} is then $\Oh(k(n + m))$ as each vertex is put into a cluster at most once. To decide whether \rrref{rr:total-2} becomes applicable when adding a vertex~$u$ to a cluster (or opening a cluster with~$u$), we check for each neighbor~$w$ of~$u$, whether it becomes adjacent to two clusters of $\sA$ or of $\sB$ or adjacent to a proper subset of the vertices of a cluster. This can be checked in $\Oh(k)$ time by maintaining, throughout the algorithm, the sizes of each cluster and, for each vertex~$x \in \rC$, two auxiliary arrays that keep track of which clusters~$x$ is adjacent to, and how many neighbors $x$ has in each cluster. The overall update cost for these auxiliary arrays is $\Oh(n + m)$ because whenever we put a vertex into a cluster we can update the auxiliary arrays for all neighbors. Hence, overall we need $\Oh(k(n + m))$ time to check for applicability and for applying \rrref{rr:total-2}, along a root-leaf path in the search tree.

\brref{br:total-1} can clearly be applied in $\Oh(n + m)$ time and is applied at most~$k$ times, yielding $\Oh(k(n + m))$ time along a root-leaf path in the search tree.

Next, in the course of the algorithm we need to compute (and possibly recompute several times) the partition of $\rC$ into the sets $N_{\ssA_i}$, $N_{\ssB_j}$ and $N_{\ssA_i\ssB_j}$. Using the information about the adjacency of vertices and clusters from the auxiliary arrays above, the partition can be computed in $\Oh(k(n + m))$ time. We recompute this partition whenever we have exhaustively applied \rrref{rr:total-1}, \rrref{rr:total-2} and \brref{br:total-1} and before we apply any of \brref{br:total-2}, \brref{br:total-3} or \brref{br:total-4}. Since we showed above that the branching rules are applied at most $2k + 1$ times, the time needed for computing the sets $N_{\ssA_i}$, $N_{\ssB_j}$ and $N_{\ssA_i\ssB_j}$ is $\Oh(k^2(n + m))$ along a root-leaf path in the search tree.

It is not hard to see that each of \brref{br:total-2}, \brref{br:total-3}, and \brref{br:total-4} can be applied in $\Oh(n + m)$ time. As was shown above, they are applied at most~$k + 1$ times, yielding an $\Oh(k(n + m))$ time needed for these rules along a root-leaf path in the search tree. To conclude, we need overall $\Oh(k^2(n + m))$ time to apply the reduction and branching rules along a root-leaf path in the search tree. Combining this with Lemma~\ref{lem:total-1} and the fact that the search tree has $\Oh(4^k)$~leaves, we obtain an overall running time of $\Oh(4^k \cdot (n^2 + k^2(n+ m))) = \Oh(4^k \cdot k^2 \cdot n^2)$.
\smartqed\end{proof}

\section{Hardness results}
\label{sec:hardness}
In this section, we present hardness results showing that significant improvements over the algorithms presented in the previous sections for \textsc{Monopolar Recognition}, \textsc{2-Subcoloring}, and the $(\Pi_{A},\Pi_{B})$-Recognition problem in general, are unlikely. These hardness results are proved under the assumption P $\neq$ NP, or the stronger assumption that the Exponential Time Hypothesis~(ETH) does not fail. We start with the following propositions showing that the existence of subexponential-time fixed-parameter algorithms for certain \textsc{$(\Pi_{A},\Pi_{B})$\nobreakdash-Recognition} problems is unlikely:

\begin{proposition}\label{prp:eth}
Let~$\Pi_A$ and~$\Pi_B$ be hereditary graph properties that can be characterized by a set of connected forbidden induced subgraphs. Moreover, assume that~$\Pi_A$ is not the set of all edgeless graphs. Then, \textsc{$(\Pi_A,\Pi_B)$-Recognition} cannot be solved in~$2^{o(n+m)}$ time, unless the ETH fails.
\end{proposition}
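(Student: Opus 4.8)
The plan is to upgrade the NP-hardness of Farrugia~\cite{Far04} to an ETH lower bound by showing that the reduction can be made \emph{linear}. By the Sparsification Lemma of Impagliazzo, Paturi, and Zane~\cite{IPZ01}, the ETH implies that \textsc{3-Sat} with~$\nu$ variables and~$\mu$ clauses admits no $2^{o(\nu+\mu)}$-time algorithm. It therefore suffices to construct in polynomial time, from a \textsc{3-Sat} formula~$\varphi$, a graph~$G$ that is a $(\Pi_A,\Pi_B)$-graph if and only if~$\varphi$ is satisfiable and that satisfies $|V(G)|+|E(G)|=\Oh(\nu+\mu)$. A hypothetical $2^{o(n+m)}$-time algorithm for \textsc{$(\Pi_A,\Pi_B)$-Recognition} would then decide~$\varphi$ in $2^{o(\nu+\mu)}$ time, contradicting the ETH.

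For the reduction I would first fix one minimal forbidden induced subgraph for each property: a connected graph~$H_A\notin\Pi_A$ and a connected graph~$H_B\notin\Pi_B$, which exist because both properties are proper. A single obstruction per property suffices even when the full characterization is infinite, and, being minimal, each of~$H_A$ and~$H_B$ is a fixed finite graph. The hypothesis that~$\Pi_A$ is not the class of all edgeless graphs enters here: since~$\Pi_A$ is hereditary, it then contains a graph with an edge and hence contains~$K_2$, which is what lets the gadgets represent truth values by edges placed inside~$A$. Following the gadget methodology that underlies Farrugia's NP-hardness proof~\cite{Far04}, I would assemble from~$H_A$ and~$H_B$ the \emph{forcing gadgets} that pin a distinguished vertex to~$A$ or to~$B$ (placing it on the forbidden side completes an induced copy of~$H_B$ or of~$H_A$, respectively), and from these a variable gadget per variable and a clause gadget per clause, wired so that a valid partition of~$G$ exists precisely when~$\varphi$ is satisfiable.

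The only point that goes beyond the NP-hardness argument is the size bookkeeping. Since~$H_A$ and~$H_B$ are fixed, every forcing, variable, and clause gadget has~$\Oh(1)$ vertices and~$\Oh(1)$ edges, with the hidden constant depending only on~$|V(H_A)|$ and~$|V(H_B)|$. There are~$\Oh(\nu)$ variable gadgets and~$\Oh(\mu)$ clause gadgets, and, as each clause contains at most three literals, only~$\Oh(\mu)$ variable--clause incidences, each contributing~$\Oh(1)$ connecting edges. Summing over all gadgets and connections yields $|V(G)|+|E(G)|=\Oh(\nu+\mu)$, and~$G$ is computable in time linear in~$\nu+\mu$; combined with the first paragraph this gives the claimed bound.

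The step I expect to be most delicate is the correctness of the gadget construction for \emph{arbitrary} proper hereditary~$\Pi_A$ and~$\Pi_B$ with connected obstructions: one must guarantee that a forcing gadget truly pins its vertex to the intended side using only~$H_A$ and~$H_B$, without accidentally creating a forbidden induced subgraph elsewhere in~$G$ or admitting an unintended valid partition. This faithfulness is precisely what Farrugia's construction establishes, so I would invoke it rather than reprove it; the genuinely new ingredient is only the observation that all of its gadgets are of constant size, which is what turns the reduction linear and hence yields the $2^{o(n+m)}$ lower bound.
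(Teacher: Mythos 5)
Your overall strategy---observe that Farrugia's reduction produces a graph of size linear in the formula and transfer an ETH lower bound through it---is exactly the paper's. The concrete gap is in the source problem. Farrugia's construction, whose correctness you propose to invoke rather than reprove, is a reduction from a monotone exact-satisfiability variant (\textsc{$p$-in-$r$ SAT} with only positive literals, asking whether exactly~$p$ variables per clause can be set to true), not from \textsc{3-Sat}. You cannot build ``a variable gadget per variable and a clause gadget per clause, wired so that a valid partition of~$G$ exists precisely when~$\varphi$ is satisfiable'' for a \textsc{3-Sat} formula~$\varphi$ and then cite Farrugia for the faithfulness of that wiring: his correctness argument is for gadgets enforcing the exact-$p$ constraint on positive literals, and designing gadgets that encode ordinary disjunctive clauses with negated literals for \emph{arbitrary} hereditary~$\Pi_A,\Pi_B$ with connected obstructions is precisely the difficulty his choice of source problem sidesteps. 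So the delicate step you flag in your last paragraph is not actually covered by the citation you lean on.

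The repair is either to insert a linear-size reduction from (sparsified) \textsc{3-Sat} to the monotone exact variant before applying Farrugia's reduction unchanged, or---as the paper does---to bypass \textsc{3-Sat} altogether and invoke the known ETH lower bound of~$2^{o(|\Phi|)}$ for \textsc{$p$-in-$r$ SAT} itself~\cite{JLGZ13}, combined with the observation that Farrugia's reduction adds only a constant-size gadget per clause, so that~$n+m=\Oh(|\Phi|)$. With either repair your size bookkeeping is correct and the argument coincides with the paper's.
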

\begin{proof}
For every~$\Pi_A$ and~$\Pi_B$ fulfilling the conditions in the proposition,~\citet{Far04} presents a polynomial-time reduction from a variant of~\textsc{$p$-in-$r$ SAT}.
Herein, we are given a boolean formula~$\Phi$ with clauses of size~$r$ containing only positive literals, and the question is whether there is a truth assignment that sets exactly~$p$ variables in each clause to true. Given a formula~$\Phi$, the reduction of~\citet{Far04} constructs an instance~$G=(V,E)$ of~\textsc{$(\Pi_A,\Pi_B)$-Recognition} where~$n+m=\Oh(|\Phi|)$ (for each clause, the construction adds a gadget of constant size). The result now follows from the fact that \textsc{$p$-in-$r$ SAT} cannot be solved in~$2^{o(|\Phi|)}$ time, unless the ETH %
 fails~\cite{JLGZ13}.
\smartqed\end{proof}

The lower bounds on the running time for \textsc{Monopolar Recognition} and  \textsc{2-Subcoloring} now follow from the above proposition:

\begin{proposition}\label{prp:main:subexp}
\textsc{Monopolar Recognition} parameterized by the number $k$ of clusters in $G[A]$ and \textsc{2-Subcoloring} parameterized by the total number $k$ of clusters in $G[A]$ and $G[B]$ cannot be solved in $2^{o(k)}n^{\Oh(1)}$ time, unless the ETH fails.
\end{proposition}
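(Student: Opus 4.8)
The plan is to bootstrap the size lower bound of Proposition~\ref{prp:eth} into a lower bound on the cluster parameter, using nothing more than the trivial bound $k \le n$. First I would check that both problems are instances of \textsc{$(\Pi_A,\Pi_B)$-Recognition} covered by Proposition~\ref{prp:eth}. For \textsc{Monopolar Recognition}, $\Pi_A$ is the class of cluster graphs and $\Pi_B$ the class of edgeless graphs; by Fact~\ref{fact:p3} cluster graphs are exactly the $P_3$-free graphs, so $\Pi_A$ is characterized by the single connected forbidden induced subgraph $P_3$ and is plainly not the class of all edgeless graphs, while $\Pi_B$ is characterized by the connected forbidden subgraph $K_2$. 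For \textsc{2-Subcoloring}, both $\Pi_A$ and $\Pi_B$ equal the class of cluster graphs, so the same observations apply. Hence Proposition~\ref{prp:eth} gives that, unless the ETH fails, neither of the two unparameterized recognition problems can be solved in $2^{o(n+m)}$ time.

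Next I would translate the parameterization into the unparameterized setting. The key observation is that any monopolar partition or 2-subcoloring of an $n$-vertex graph has at most $n$ nonempty clusters in total, so $k \le n$ always holds. Consequently, fixing the parameter to its maximal value $k := n$ makes the cluster bound vacuous: deciding whether $G$ admits a (valid) partition with at most $n$ clusters is exactly deciding whether $G$ admits any such partition, i.e.\ it coincides with the plain problem ruled out above.

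The final step is a short contradiction argument, and it is also where the only care is needed. Suppose, for contradiction, that one of the two parameterized problems could be solved in $2^{o(k)} \cdot n^{\Oh(1)}$ time. Running that algorithm with $k := n$ yields a running time of $2^{o(n)} \cdot n^{\Oh(1)}$ on every input; since $n^{\Oh(1)} = 2^{\Oh(\log n)} = 2^{o(n)}$, the polynomial factor is absorbed and the total time is $2^{o(n)}$. Because $n \le n+m$, every function that is $o(n)$ is also $o(n+m)$, so this is in particular a $2^{o(n+m)}$-time algorithm for the corresponding unparameterized problem, contradicting the first step and hence the ETH. The main obstacle is purely this asymptotic bookkeeping: one must justify that $k$ can be dialed up to $n$ without changing the answer, that the polynomial factor disappears into the exponent, and that an $o(n)$ bound implies an $o(n+m)$ bound. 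No new reduction is required beyond the (sparse) one underlying Proposition~\ref{prp:eth}.
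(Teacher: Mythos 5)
Your proposal is correct and follows essentially the same route as the paper: verify that both problems fall under Proposition~\ref{prp:eth} (via the $P_3$- and $K_2$-characterizations of the relevant properties) and then observe that $k\le n\le n+m$, so a $2^{o(k)}n^{\Oh(1)}$-time algorithm would yield a forbidden $2^{o(n+m)}$-time algorithm. The only difference is that you spell out the asymptotic bookkeeping (absorbing the polynomial factor and passing from $o(n)$ to $o(n+m)$) that the paper leaves implicit.
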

\begin{proof}
\textsc{Monopolar Recognition} is the case of \textsc{$(\Pi_A,\Pi_B)$-Recognition} where $\Pi_A$ defines the class of $P_3$-free graphs and $\Pi_B$ defines the class of edgeless graphs. \textsc{2-Subcoloring} is the case of \textsc{$(\Pi_A,\Pi_B)$-Recognition} where $\Pi_A$~and~$\Pi_B$ define the class of $P_3$-free graphs. Hence, for both problems, $\Pi_A$~and~$\Pi_B$ satisfy the conditions in Proposition~\ref{prp:eth}. It now remains to observe that~$n$ is an upper bound on~$k$ in both cases.
\smartqed\end{proof}

In Theorems~\ref{thm:main:mono} and~\ref{thm:main:sub} we give fixed-parameter algorithms for two \textsc{$(\Pi_A,\Pi_B)$-Recognition} problems, in both of which $\Pi_A$ defines the set of all cluster graphs, parameterized by the number of clusters in $G[A]$. Hence, one might hope for a generic fixed-parameter algorithm for such problems, irrespective of~$\Pi_B$. However, polar graphs stand in our way. A graph~$G=(V,E)$ has a \emph{polar partition} if~$V$ can be partitioned into sets~$A$ and~$B$ such that~$G[A]$ is a cluster graph and~$G[B]$ is the complement of a cluster graph (a \emph{co-cluster graph})~\cite{TC85}. We have the following proposition:

\begin{proposition}\label{prp:main:polar}
It is NP-hard to decide whether~$G$ has a polar partition $(A,B)$ such that $G[A]$ is a cluster graph with one cluster or $G[B]$ is a co-cluster graph with one co-cluster.
\end{proposition}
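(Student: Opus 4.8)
The plan is to reduce from \textsc{Monopolar Recognition}, which is NP-hard, to the problem at hand. First I would record two structural reformulations of the two disjuncts in the statement. Recall that a co-cluster graph is the complement of a cluster graph; hence ``$G[A]$ is a cluster graph with one cluster'' means exactly that $A$ induces a clique, while ``$G[B]$ is a co-cluster graph with one co-cluster'' means exactly that $G[B]$ is edgeless, i.e.\ $B$ is an independent set. Taking complements, a polar partition $(A,B)$ with $A$ a clique corresponds precisely to a monopolar partition of $\overline{G}$ (the clique $A$ becomes an independent set, and the co-cluster graph $G[B]$ becomes a cluster graph), while a polar partition $(A,B)$ with $B$ an independent set is precisely a monopolar partition of $G$ (the edgeless $B$ is trivially a co-cluster graph). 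Thus $G$ is a yes-instance of the stated problem if and only if $G$ is monopolar \emph{or} $\overline{G}$ is monopolar.

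Given these reformulations, the strategy is to engineer a reduction that switches off the disjunct ``$\overline{G}$ monopolar'' permanently, so that only the ``$G$ monopolar'' disjunct can ever be satisfied. The key gadget I would use is the octahedron $K_{2,2,2}$. I would verify by a short direct check that $K_{2,2,2}$ is \emph{not} monopolar: every independent set of $K_{2,2,2}$ lies inside a single part and so has at most two vertices, and deleting any such set leaves a subgraph containing $K_{2,2}$, which has an induced $P_3$; hence no independent set can be removed to leave a cluster graph. Its complement $\overline{K_{2,2,2}} = 3K_2$, on the other hand, \emph{is} monopolar, being a cluster graph. Given an instance $H$ of \textsc{Monopolar Recognition}, I would output the graph $G$ obtained as the disjoint union of $H$ and a copy of $3K_2$; this construction is clearly polynomial.

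It remains to verify correctness. Since monopolarity is closed under disjoint union in both directions (a monopolar partition restricts to each connected component, and conversely the component-wise cluster sides and independent sides recombine because there are no edges across components), and $3K_2$ is monopolar, the graph $G$ is monopolar if and only if $H$ is monopolar. For the other disjunct, note that $\overline{G}$ is the join of $\overline{H}$ with $K_{2,2,2}$, and the vertex set of the gadget induces exactly $K_{2,2,2}$ inside $\overline{G}$; since monopolarity is hereditary and $K_{2,2,2}$ is non-monopolar, $\overline{G}$ is non-monopolar for every choice of $H$. Combining the two observations, $G$ is a yes-instance of the stated problem if and only if $H$ is monopolar, so the NP-hardness of \textsc{Monopolar Recognition} transfers. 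The main obstacle -- and the only genuinely creative step -- is exhibiting a fixed gadget that is itself monopolar yet whose complement is non-monopolar; the pair $3K_2$ and $K_{2,2,2}$ supplies exactly this, and the hereditary, disjoint-union-friendly behaviour of monopolarity handles the rest.
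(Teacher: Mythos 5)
Your proof is correct, but it answers a different parsing of the proposition than the one the paper actually proves. The paper reads the ``or'' as asserting two separate NP-hardness claims and disposes of each in one line: a polar partition in which $G[B]$ is a single co-cluster is literally a monopolar partition of $G$, so that case \emph{is} \textsc{Monopolar Recognition}; and the case where $G[A]$ is a single cluster follows by complementation, since the complement of a polar graph is polar. You instead treat the ``or'' as a disjunction inside the decision problem --- ``does there exist a polar partition satisfying one of the two conditions?'' --- which, as you correctly observe, is equivalent to asking whether $G$ or $\overline{G}$ is monopolar, and whose NP-hardness does \emph{not} follow formally from the hardness of each disjunct alone. Your $3K_2$ / $K_{2,2,2}$ gadget (a monopolar graph with a non-monopolar complement, combined with the facts that monopolarity is hereditary and closed under disjoint union) correctly switches off the second disjunct and yields a valid reduction from \textsc{Monopolar Recognition}; I checked that $K_{2,2,2}$ is indeed not monopolar and that the disjoint-union and join arguments go through. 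What each approach buys: the paper's reading makes the proposition a two-line observation sufficient for its purpose (ruling out a generic FPT algorithm already at $k=1$ on either side), while your argument is more work but also covers the literal disjunctive problem, and your opening equivalences already contain the paper's entire proof as a special case. So there is no gap --- just a more elaborate route forced by a stricter reading of the statement.
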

\begin{proof}
Observe that a polar partition $(A,B)$ of a graph $G=(V,E)$ where $G[B]$ is a co-cluster graph with a single co-cluster implies that $G[A]$ is a cluster graph and $G[B]$~is edgeless. Hence, $G$ is monopolar. Since \textsc{Monopolar Partition} is NP-hard~\cite{Far04}, it follows immediately that it is NP-hard to decide whether $G$ has a polar partition $(A,B)$ such that $G[B]$ is a co-cluster graph with a single co-cluster. Now observe that the complement of a polar graph is again a polar graph. Hence, a straightforward NP-hardness reduction by taking the complement reveals that it NP-hard to decide whether $G$ has a polar partition $(A,B)$ such that $G[A]$ is a cluster graph with a single cluster.
\smartqed\end{proof}

\section{Conclusion}
\label{sec:conclusion}
In this paper, we developed the inductive recognition technique for designing algorithms for \textsc{$(\Pi_{A},\Pi_{B})$-Recognition} problems on graphs. Among other applications, we showed how this technique can be employed to design FPT algorithms for two graph partitioning problems: {\sc Monopolar Recognition} parameterized by the number of cliques (in the cluster graph part), and {\sc 2-Subcoloring}  parameterized by the smaller number of cliques between the two parts. These results generalize the well-known linear-time algorithm for recognizing split graphs and the polynomial-time algorithm for recognizing unipolar graphs, respectively. We believe that inductive recognition can be of general use for proving the fixed-parameter tractability of recognition problems that may not be amenable to other standard approaches in parameterized algorithmics. We also explored the boundaries of tractability for \textsc{$(\Pi_{A},\Pi_{B})$-Recognition} problems.

There are several open questions that ensue from our work. A natural concrete question is whether we can improve the running time of the \FPT\ algorithm for {\sc 2-Subcoloring} with respect to~$k$, the number of clusters $G[A]$. In particular, can we solve the problem in time  $2^{\Oh(k)}n^{\Oh(1)}$, or in time $f(k) n^2$? Note that the latter running time would match the quadratic running time for recognizing unipolar graphs, corresponding to the parameter value $k=1$. It is also interesting to investigate further if we can obtain meta \FPT-results for {\sc $(\Pi_A, \Pi_B)$-Recognition} based on certain graph properties $\Pi_A$ and $\Pi_B$, similar to the results obtained for mutually $d$-exclusive graph properties. Moreover, investigating the existence of polynomial kernels for {\sc Monopolar Recognition} and {\sc 2-Subcoloring} with respect to the considered parameters, and for {\sc $(\Pi_A, \Pi_B)$-Recognition} in general based on $\Pi_A$ and $\Pi_B$ and with respect to proper parameterizations, is certainly worth investigating.

One could also consider vertex-partition problems that are defined using more than two graph properties, that is, {\sc $(\Pi_A, \Pi_B, \Pi_C, \ldots)$-Recognition}. Observe that this problem is equivalent to {\sc $(\Pi_{A'}, \Pi_{B'})$-Recognition}, where $\Pi_{A'} = \Pi_{A}$ and $\Pi_{B'}$ is the graph property ``the vertices can be partitioned into sets that induce graphs with property $\Pi_B$, $\Pi_C$, \ldots, respectively''. For example, a graph is $d$-subcolorable if its vertices can be partitioned into $d$ sets that each induce a cluster graph, or alternatively, if its vertices can be partitioned into two sets, one of which is a cluster graph and the other of which is $(d-1)$-subcolorable. Using the aforementioned equivalence, the NP-hardness result of Farrugia~\cite{Far04} carries over to {\sc $(\Pi_A, \Pi_B, \Pi_C, \ldots)$-Recognition} whenever each graph property is additive and hereditary.

At first sight, considering the generalization of {\sc $2$-Subcoloring} to {\sc $d$-Subcoloring} seems promising. By the above discussion, {\sc $d$-Subcoloring} is NP-hard for any fixed $d \geq 2$ (this was also proved  explicitly by Achlioptas~\cite{Ach97}). A stronger result, however, follows from the observation that the straightforward reduction from {\sc $d$-Clique Cover} (or {\sc $d$-Coloring}) implies that {\sc $d$-Subcoloring} is NP-hard for any fixed $d \geq 3$, even when the total number of clusters is at most~$d$. This result implies that, unless \P${}={}$\NP, {\sc $d$-Subcoloring} has no FPT or XP algorithm for any fixed $d \geq 3$ when parameterized by the total number of clusters. Hence, unless \P${}={}$\NP, Theorems~\ref{thm:main:sub} and~\ref{thm:main:sub2} cannot be extended to {\sc $d$-Subcoloring} for any fixed $d \geq 3$.

To extend the results for {\sc Monopolar Recognition} and \textsc{2-Subcoloring}, one could thus instead consider the following two different problems, both of which are NP-hard per the above discussion:
\begin{itemize}
\item Given a graph $G = (V,E)$ and $k \in \mathbb{N}$, can $V$ be partitioned into a cluster graph with at most $k$ clusters and two edgeless graphs?

For $k=1$, this problem asks whether there is an independent set in $G$ whose removal leaves a split graph, and it can be solved in polynomial time~\cite{Brandstadt98}. Moreover, if $\Pi_A$ is the property of being a cluster graph with at most $k$ clusters and $\Pi_B$ is the property of being a bipartite graph, then this problem is a {\sc $(\Pi_{A}, \Pi_{B})$-Recognition} problem. Since properties~$\Pi_A$ and~$\Pi_B$ are mutually $(2k+1)$-exclusive, Theorem~\ref{thm:exclusive-rec} implies an XP algorithm for this problem when parameterized by $k$.

\item Given a graph $G = (V,E)$ and $k \in \mathbb{N}$, can $V$ be partitioned into two cluster graphs with at most $k$ clusters in total and one edgeless graph?

For $k=1$, this problem asks whether $G$ is a split graph, and it can be solved in polynomial time~\cite{HS81}. For $k \leq 2$, this problem asks whether there is an independent set in $G$ whose removal leaves a co-bipartite graph, and it can be solved in polynomial time as well~\cite{Brandstadt98}. Moreover, if $\Pi_A$ is the property of being $2$-subcolorable with at most $k$ clusters in total, and $\Pi_B$ is the property of being an edgeless graph, then this problem is a {\sc $(\Pi_{A}, \Pi_{B})$-Recognition} problem. Since properties~$\Pi_A$ and~$\Pi_B$ are mutually $(k+1)$-exclusive, Theorem~\ref{thm:exclusive-rec} implies an XP algorithm for this problem when parameterized by $k$.
\end{itemize}
Since both problems have XP algorithms when parameterized by~$k$, it is interesting to investigate whether any of them has an FPT algorithm. We leave these as open questions for future work.

\bibliographystyle{abbrvnat}
\bibliography{vertex-partition}

\end{document}